\documentclass{article}

\usepackage[letterpaper, portrait, margin=1in]{geometry}

\usepackage{url}
\usepackage{amsmath,amssymb,amsthm}
\usepackage{mathtools}
\usepackage{pgf}
\usepackage{tikz}
\usepackage{parskip}
\usepackage[colorlinks]{hyperref}

\usepackage{dsfont}

\newcommand{\ignore}[1]{}

\topsep=1em 
\theoremstyle{plain}
\newtheorem{theorem}{Theorem}[section]
\newtheorem{lemma}[theorem]{Lemma}

\newtheorem{proposition}[theorem]{Proposition}
\newtheorem{claim}[theorem]{Claim}
\newtheorem{corollary}[theorem]{Corollary}
\newtheorem{question}[theorem]{Question}

\theoremstyle{definition}
\newtheorem{definition}[theorem]{Definition}
\newtheorem{remark}[theorem]{Remark}

\theoremstyle{theorem}
\newtheorem*{propositionref}{Proposition}
\newtheorem*{theoremref}{Theorem}
\newtheorem*{corollaryref}{Corollary}
\DeclareMathOperator{\poly}{poly}

\DeclareMathOperator{\rank}{rank}
\DeclareMathOperator{\depth}{depth}

\newcommand{\dist}{\mathsf{dist}}


\newcommand{\Exu}[2]{\underset{#1} \bE \left[ #2 \right] }

\renewcommand{\Pr}[1]{\bP \left[ #1 \right]}
\newcommand{\Pru}[2]{\underset{ #1 }\bP \left[ #2 \right]}

\newcommand{\define}{\vcentcolon=}

\renewcommand{\exp}[1]{\mathrm{exp}\left( #1 \right)}


\newcommand{\ceil}[1]{\ensuremath{\lceil #1 \rceil}}


\newcommand{\ind}[1]{\mathds{1} \left[ #1 \right] }

\newcommand{\zo}{\{0,1\}}



\newcommand{\cC}{\ensuremath{\mathcal{C}}}
\newcommand{\cD}{\ensuremath{\mathcal{D}}}
\newcommand{\cF}{\ensuremath{\mathcal{F}}}
\newcommand{\cG}{\ensuremath{\mathcal{G}}}

\newcommand{\cM}{\ensuremath{\mathcal{M}}}
\newcommand{\cP}{\ensuremath{\mathcal{P}}}

\newcommand{\cT}{\ensuremath{\mathcal{T}}}
\newcommand{\cU}{\ensuremath{\mathcal{U}}}

\newcommand{\cY}{\ensuremath{\mathcal{Y}}}
\newcommand{\cX}{\ensuremath{\mathcal{X}}}


\newcommand{\bE}{\ensuremath{\mathbb{E}}}
\newcommand{\bF}{\ensuremath{\mathbb{F}}}
\newcommand{\bN}{\ensuremath{\mathbb{N}}}
\newcommand{\bP}{\ensuremath{\mathbb{P}}}
\newcommand{\bR}{\ensuremath{\mathbb{R}}}

\newcommand{\ob}{\mathsf{univ}}
\newcommand{\weak}{\mathsf{weak}}
\newcommand{\cov}{\mathrm{cov}}
\newcommand{\ADJ}{\textsc{Adj}}
\newcommand{\embed}{\sqsubset}
\newcommand{\rembed}{\sqsubset}
\newcommand{\etal}{\emph{et al.}}

\title{Universal Communication, Universal Graphs, and Graph Labeling}

\author{Nathaniel Harms\thanks{University of Waterloo.
\texttt{nharms@uwaterloo.ca}}}

\begin{document}
\maketitle

\begin{abstract}
  We introduce a communication model called \emph{universal SMP}, in which Alice
  and Bob receive a function $f$ belonging to a family $\cF$, and inputs $x$ and
  $y$.  Alice and Bob use shared randomness to send a message to a third party
  who cannot see $f$, $x$, $y$, or the shared randomness, and must decide
  $f(x,y)$. Our main application of universal SMP is to relate communication
  complexity to graph labeling, where the goal is to give a short label to each
  vertex in a graph, so that adjacency or other functions of two vertices $x$
  and $y$ can be determined from the labels $\ell(x), \ell(y)$.  We give a
  universal SMP protocol using $O(k^2)$ bits of communication for deciding
  whether two vertices have distance at most $k$ in distributive lattices
  (generalizing the $k$-Hamming Distance problem in communication complexity),
  and explain how this implies a $O(k^2\log n)$ labeling scheme for deciding
  $\dist(x,y) \leq k$ on distributive lattices with size $n$; in contrast, we
  show that a universal SMP protocol for determining $\dist(x,y) \leq 2$ in
  modular lattices (a superset of distributive lattices) has super-constant
  $\Omega(n^{1/4})$ communication cost. On the other hand, we demonstrate that
  many graph families known to have efficient adjacency labeling schemes, such
  as trees, low-arboricity graphs, and planar graphs, admit constant-cost
  communication protocols for adjacency. Trees also have an $O(k)$ protocol for
  deciding $\dist(x,y) \leq k$ and planar graphs have an $O(1)$ protocol for
  $\dist(x,y) \leq 2$, which implies a new $O(\log n)$ labeling scheme for the
  same problem on planar graphs.
\end{abstract}

\section{Introduction}

In the simultaneous message passing (SMP) model of communication, introduced by
Yao \cite{Yao79}, Alice and Bob separately receive inputs $x$ and $y$ to a
function $f$. They send messages $a(x),b(y)$ to a third party, called the
referee, who knows $f$ and must output $f(x,y)$ (with high probability) using
the messages $a(x),b(y)$. But what if the referee \emph{doesn't} know $f$? Can
they still compute $f(x,y)$? Yes: Alice can include in her message a description
of $f$, and then the referee knows it; however, if $f$ is restricted, they can
sometimes do much better. Here is a simple example: the players receive vertices
$x,y \in \{1,\dotsc,n\}$ in a graph $G$ of maximum degree 2, and want to decide
if $(x,y)$ is an edge in $G$.  Sharing a source of randomness, Alice and Bob
randomly label each vertex of $G$ with a number up to 200; Alice sends the label
of both neighbors of $x$ and Bob sends the label of $y$.  The referee says
\emph{yes} if one of Alice's labels matches the label of $y$, \emph{no}
otherwise. They will be correct with probability at least $99/100$, and the
referee never needs to learn $G$. This is also an example where the referee can
decide many problems using only one strategy. In this work we will see that more
interesting families of graphs, such as trees, planar graphs, and distributive
lattices, also exhibit these phenomena, even when we wish to compute
\emph{distances} instead of just adjacency.

To study this, we introduce the \emph{universal SMP} model, which operates as
follows. Fix some family $\cF$ of functions. Alice and Bob receive a function $f
\in \cF$ and inputs $x,y$, and they use shared randomness to each send one
message to the referee. The referee knows the family $\cF$ and the size of the
inputs, but doesn't know $f,x,y$ or the shared randomness, and must compute
$f(x,y)$ with high probability. By choosing the family $\cF$ to be the singleton
family, one sees that this model includes standard SMP. As in the earlier
example, we will be studying communication problems on graphs, but this is not a
significant restriction: every Boolean-valued communication problem $f$ is
equivalent to determining adjacency in some graph (use $f$ as the adjacency
matrix), so we will treat $\cF$ as a family of graphs.

A surprising but intuitive application of universal SMP is that it connects two
apparently disjoint areas of study: communication complexity and graph labeling.
For a graph family $\cF$, the graph labeling problem (introduced by Kannan,
Naor, and Rudich \cite{KNR92}) asks how to assign the shortest possible labels
$\ell(v)$ to each vertex $v$ of a graph $G \in \cF$, so that the adjacency (or
some other function \cite{Pel05}) of vertices $x,y$ can be computed from
$\ell(x),\ell(y)$ by a decoder that knows $\cF$. We observe the following
principle (Theorem \ref{thm:universal smp to adjacency labeling}):

\emph{If there is a (randomized) universal SMP protocol for the graph family
$\cF$ with communication cost $c$, then there is a labeling scheme for graphs $G \in \cF$ with
labels of size $O(c \log n)$, where $n$ is the number of vertices.}

Common variants of graph labeling are \emph{distance labeling} \cite{GPPR04},
where the goal is to compute $\dist(x,y)$ from the labels, and
\emph{small-distance} labeling, where the goal is to compute $\dist(x,y)$ if it
is at most $k$ and output ``$>k$'' otherwise \cite{KM01, ABR05}. This is similar
to the well-studied $k$-Hamming Distance problem in communication complexity,
where the players must decide if their vertices $x,y$ have distance at most $k$
in the Boolean hypercube graph. A natural generalization of the Boolean
hypercube is the family of distributive lattices (which also include, for
example, the hypergrids). We demonstrate that techniques from communication
complexity can be used to obtain new graph labelings, by adapting the
$k$-Hamming Distance protocol of Huang \etal~\cite{HSZZ06} to the universal SMP
model, achieving an $O(k^2)$ protocol for computing $\dist(x,y) \leq k$ and the
corresponding $k$-distance labeling scheme with label size $O(k^2 \log n)$.  It
is interesting to note that, in contrast to the standard application of
communication complexity as a method for obtaining lower bounds, we are using it
to obtain upper bounds.

Generalizing in another direction, we ask: for which graphs other than the
Boolean hypercube can we obtain efficient communication protocols for
$k$-distance? For constant $k$, $k$-Hamming Distance can be computed with
communication cost $O(1)$; which other graphs admit a constant-cost protocol? To
approach this question, we observe that many (but not all) graph families known
to have efficient $O(\log n)$ adjacency labeling schemes also admit an $O(1)$
universal SMP protocol for adjacency. Commonly studied families in the adjacency
and distance labeling literature are trees \cite{KNR92, KM01, ABR05, AGHP16,
ADK17} and planar graphs \cite{KNR92,GPPR04,GL07,GU16,AKTZ19}. We study the
$k$-distance problem on these families and find that trees admit an $O(k)$
protocol, while planar graphs admit an $O(1)$ protocol for 2-distance; this
implies a new labeling scheme for planar graphs.

Further motivation for the universal SMP model comes from \emph{universal
graphs}. Introduced by Rado \cite{Rado64}, an induced-universal graph $U$ for a
set $\cF$ is one that contains each $G \in \cF$ as an induced subgraph. An
efficient adjacency labeling scheme for a set $\cF$ implies a small
induced-universal graph for that set \cite{KNR92}.  Deterministic universal SMP
protocols are equivalent to universal graphs (Theorem \ref{thm:complexity}), and
we introduce \emph{probabilistic universal graphs} as the analogous objects for
randomized univeral SMP protocols. We think probabilistic universal graphs are
worthy of study alongside universal graphs, especially since many non-trivial
families admit one of \emph{constant-size}.

The universal SMP model is also related to a recent line of work studying
communication between parties with imperfect knowledge of each other's
``context''. The most relevant incarnation of this idea is the recent work
\cite{GS17,GKKS18}, who study the 2-way communication model where Alice and Bob
receive functions $f$ and $g$ respectively, with inputs $x$ and $y$, and must
compute $f(x,y)$ under the guarantee that $f$ and $g$ are close in some metric.
In other words, one party does not have full knowledge of the function to be
computed. The universal SMP model provides a framework for studying a similar
problem in the SMP setting, where the players know the function but the referee
does not; the similarity is especially clear when we define the family $\cF$ to
be all graphs of distance $\delta$ to a reference graph $G$ in some metric (we
discuss this situation in more detail at the end of the paper). This could
model, for example, a situation where the clients of a service operate in a
shared environment but the server does not; or, a situation in which the clients
want to keep their shared environment secret from the server, and their inputs
secret from each other. This suggests a possible application to privacy and
security. A relevant example is private proximity testing (e.g.~\cite{NTL+11}),
where two clients should be notified by the server when they are at distance at
most $k$ from each other, without revealing to each other or the server their
exact locations.

The Discussion at the end of the paper highlights some interesting questions and
open problems.

\subsection{Results}

A universal SMP protocol \emph{decides $k$-distance} for a family $\cF$ if for
all graphs $G \in \cF$ and vertices $x,y$, the protocol will correctly decide if
$\dist(x,y) \leq k$, with high probability. A labeling scheme decides
$k$-distance if $\dist(x,y) \leq k$ can be decided from the labels of $x,y$.
Below, the variable $n$ always refers to the number of vertices in the input
graph.

\subparagraph*{Implicit graph representations.}
The main principle connecting communication and graph labeling is:
\begin{theorem}
\label{thm:universal smp to adjacency labeling}
Any graph family $\cF$ with universal SMP cost $m$ has an adjacency labeling
scheme with labels of size $O(m \log n)$. In particular, if the universal SMP
cost for $\cF$ is $O(1)$ then $\cF$ has an $O(\log n)$ adjacency labeling
scheme.
\end{theorem}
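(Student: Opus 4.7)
The plan is to convert a randomized universal SMP protocol into a deterministic adjacency labeling scheme via a standard amplification-and-derandomization argument. Let $\Pi = (A,B,R)$ be a universal SMP protocol for $\cF$ of cost $m$ with, say, success probability $2/3$, where $A(v;r)$ and $B(v;r)$ are Alice's and Bob's $m$-bit messages under shared randomness $r$ and $R$ is the fixed referee algorithm. The crucial structural observation is that the referee in the universal SMP model does \emph{not} see the function $f$, so $R$ is a \emph{single} function that works across the entire family $\cF$; this is what will let us extract one decoder that handles every graph.

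First I would amplify success: run $t = \Theta(\log n)$ independent copies of $\Pi$ in parallel and let the referee output the majority vote. By a Chernoff bound, the amplified protocol errs on any fixed pair $(x,y)$ with probability at most $1/n^3$.

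Next, for each $G \in \cF$ on $n$ vertices I would derandomize against $G$: since there are at most $n^2$ ordered pairs of vertices, a union bound shows that with positive probability a single random choice of the $t$ strings makes the amplified protocol simultaneously correct on every pair in $G$. Fix such strings $r_1^\star,\dotsc,r_t^\star$ (depending on $G$), and define
\[
  \ell_G(v) \define \bigl(A(v;r_1^\star),\dotsc,A(v;r_t^\star),\;B(v;r_1^\star),\dotsc,B(v;r_t^\star)\bigr),
\]
a label of length $2tm = O(m\log n)$. The decoder receives $\ell_G(x)$ and $\ell_G(y)$, parses Alice's $t$ messages from the first and Bob's $t$ messages from the second, runs the fixed referee $R$ on each of the $t$ pairs of messages, and outputs the majority. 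Because $R$ is independent of $G$, so is the decoder, yielding a valid adjacency labeling scheme for $\cF$ of the claimed size.

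I do not anticipate a real obstacle here; the argument is entirely standard amplification plus a Newman-style union bound. The only point that deserves care is keeping track of which objects depend on $G$ (the fixed random strings and hence the labels) and which do not (the referee $R$, and therefore the decoder), since it is exactly the graph-obliviousness of $R$ in the universal SMP definition that makes the conversion work.
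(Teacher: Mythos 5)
Your proposal is correct and follows essentially the same route as the paper, which proves this via a Newman-style derandomization for universal SMP (fixing $O(\log n)$ random seeds per graph by a Chernoff/union-bound argument) followed by majority decoding, crucially exploiting that the referee's decision function is graph-independent. The only differences are cosmetic: you amplify first and then fix seeds so the majority is simultaneously correct on all pairs, while the paper fixes seeds keeping small per-pair error and takes the majority over all of them, and you store both players' messages in each label where the paper instead symmetrizes the deterministic protocol.
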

Adjacency labeling schemes of size $O(\log n)$ are of special interest because
$\log n$ is the minimum number of bits required to label each vertex uniquely,
and they correspond to \emph{implicit graph representations}, as defined by
Kannan, Naor, and Rudich \cite{KNR92} (we omit their requirement that the
encoding and decoding be computable in polynomial-time). Section
\ref{subsection:implicit graphs} elaborates further. To obtain implicit
representations, we can relax our requirements:

\begin{corollary}
\label{cor:2-way to implicit}
For any constant $c$, any graph family $\cF$ where each $G \in \cF$ has a
public-coin 2-way communication protocol computing adjacency with cost $c$ has
an implicit representation.
\end{corollary}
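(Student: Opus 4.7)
The plan is to reduce to Theorem \ref{thm:universal smp to adjacency labeling}. Specifically, if I can show that a constant-cost public-coin 2-way adjacency protocol on each $G \in \cF$ yields an $O(1)$-cost universal SMP adjacency protocol for $\cF$, then Theorem \ref{thm:universal smp to adjacency labeling} immediately delivers an adjacency labeling with $O(\log n)$-bit labels---which is an implicit representation---completing the corollary.

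For the reduction I use a Babai--Kimmel-style transcript simulation. Fix $G \in \cF$ with its $c$-cost public-coin 2-way protocol $\Pi_G$. Conditioned on the shared public randomness, $\Pi_G$ is a deterministic protocol described by a binary transcript tree $T_G$ of depth at most $c$, each internal node labeled with its owner (Alice or Bob) and each leaf labeled with an output bit. Alice, who sees $G$ and $x$, can compute the bit she would send at every Alice-owned node on input $x$; collect these bits into a signature $\sigma_A(x)$, and define $\sigma_B(y)$ analogously. Alice's universal SMP message is $(T_G, \sigma_A(x))$ and Bob's is $\sigma_B(y)$. The referee, seeing only these messages, walks down $T_G$ from the root---using Alice's bit from $\sigma_A(x)$ at each Alice-node and Bob's bit from $\sigma_B(y)$ at each Bob-node---and outputs the reached leaf's label. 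This exactly replays the transcript $\Pi_G$ would have generated on inputs $(x,y)$ with the same randomness, so the success probability is inherited. Because $c$ is a constant, $T_G$ has $O(2^c) = O(1)$ nodes and each signature has $O(1)$ bits, so the universal SMP cost is $O(1)$.

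The only subtlety is that the referee does not know $G$ and so cannot reconstruct $T_G$ on its own; this is precisely why Alice must transmit $T_G$ alongside her signature, and the constant size of the protocol tree is what makes this affordable. Once the simulation is in place, Theorem \ref{thm:universal smp to adjacency labeling} applied to the resulting $O(1)$-cost universal SMP protocol produces the desired implicit representation.
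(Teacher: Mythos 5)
Your proof is correct and takes essentially the same route as the paper: decompose the $c$-cost 2-way protocol into a constant-size protocol tree, have a player transmit the tree along with the bits each would send at owned nodes, let the referee replay the transcript, and then invoke Theorem~\ref{thm:universal smp to adjacency labeling}. The only cosmetic difference is that the paper has both players send the tree, which is immaterial.
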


\subparagraph*{Distributive \& Modular Lattices.} Distributive and modular lattices are
generalizations of the Boolean hypercube and hypergrids (see Section
\ref{section:distributive lattices} for definitions). We define a
\emph{weakly-universal} SMP protocol as one where the referee shares the
randomness of Alice and Bob. For distributive lattices we get the following:

\begin{theorem}
\label{thm:distributive lattices}
The $k$-distance problem on the family of distributive lattices has: a
weakly-universal SMP protocol with cost $O(k\log k)$; a universal SMP protocol
with cost $O(k^2)$; and a size $O(k^2 \log n)$ labeling scheme.
\end{theorem}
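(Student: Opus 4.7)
The plan is to reduce $k$-distance on distributive lattices to a universe-agnostic form of $k$-Hamming Distance via Birkhoff's representation theorem, and then adapt sketching ideas from communication complexity to the universal SMP setting.

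First I would invoke Birkhoff's theorem, which identifies every finite distributive lattice $L$ with the lattice of downsets of its poset $P$ of join-irreducibles. Under this correspondence, an element $x \in L$ becomes a subset $S_x \subseteq P$, and the graph distance satisfies $\dist(x,y) = |S_x \triangle S_y|$ (since any maximal chain from $x \wedge y$ to $x \vee y$ has this length). The task thus becomes testing whether the symmetric difference of two sets has size at most $k$. Alice and Bob both know $L$ and can manipulate $S_x, S_y$ directly, but the referee knows neither $L$ nor $P$, so off-the-shelf $k$-Hamming Distance protocols do not immediately apply.

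For the weakly-universal $O(k \log k)$ protocol, since the referee shares randomness with the players I would essentially import the Huang--Shi--Zhang--Zhu sketch for $k$-Hamming Distance: Alice and Bob use a shared random hash of $P$ into a universe of size $\mathrm{poly}(k)$, produce $O(k \log k)$-bit sketches of $S_x$ and $S_y$ from which the symmetric difference can be recovered whenever it has size at most $k$, and the referee, sharing the hash, decodes the sketches and decides. Crucially, the decoding is purely combinatorial --- it outputs the symmetric difference (or declares it too large) --- so the referee never needs to know $L$ or $P$.

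For the universal $O(k^2)$ protocol, the referee cannot interpret a hash, so the sketches must be comparable purely syntactically. I would have Alice and Bob use shared randomness to fix a random partition of $P$ into $m = \Theta(k^2)$ buckets, and have each send the vector of bucket parities of $S_x$ (resp.\ $S_y$). The referee XORs the two vectors --- obtaining the bucket parities of $S_x \triangle S_y$ --- and accepts iff the Hamming weight is at most $k$. When $|S_x \triangle S_y| \leq k$, at most $k$ buckets can have odd parity, so the referee always accepts; when $|S_x \triangle S_y| > k$, a random partition into $\Omega(k^2)$ buckets separates the differing elements with constant probability, yielding at least $k+1$ mismatched buckets, and $O(1)$ independent repetitions boost this to high probability. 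The labeling bound $O(k^2 \log n)$ follows directly by applying Theorem~\ref{thm:universal smp to adjacency labeling} to the resulting universal protocol.

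The main obstacle I anticipate is the soundness of the universal protocol right at the boundary $|S_x \triangle S_y| = k+1$, where the random-partition analysis requires $m = \Omega(k^2)$ buckets to keep the differing elements from colliding; confirming this quantitatively and checking that the very-far regime also produces $>k$ mismatched buckets (rather than getting cancelled out by parity collisions) is the key calculation. Related subtleties include ensuring that Alice and Bob can cooperatively partition a possibly large poset $P$ using only shared randomness without revealing its structure, and verifying that the Huang et al.\ sketch translates cleanly to a setting where the ``universe'' is unknown to the referee.
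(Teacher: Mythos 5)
Your plan follows essentially the same route as the paper's proof: Birkhoff's representation together with the distance formula $\dist(x,y)=|X\,\Delta\,Y|$, an adaptation of the Huang et al.\ sketch with the referee sharing the hash for the weakly-universal $O(k\log k)$ bound, bucket-parity vectors of dimension $\Theta(k^2)$ (XORed and weight-checked by the referee) for the universal $O(k^2)$ protocol, and Theorem~\ref{thm:universal smp to adjacency labeling} for the $O(k^2\log n)$ labeling. The step you flag as the key calculation --- that a symmetric difference of size greater than $k$ still produces at least $k+1$ odd-parity buckets even when collisions occur --- is exactly what the paper establishes (its claim that any set of rank at least $k+1$ has indicator weight at least $k+1$ with high probability, proved by induction), so your approach goes through as anticipated.
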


Modular lattices are a superset of distributive lattices, but they do not admit
$k$-distance protocols with a cost independent of $n$; we show that any
universal SMP protocol (and any labeling scheme) deciding 2-distance must have
cost $\Omega(n^{1/4})$ (Theorem \ref{thm:lower bound for modular lattices}).  To
our knowledge, there are no known labeling schemes for distributive or modular
lattices. Our adjacency labeling scheme (i.e.~for $k=1$) requires $O(n \log n)$
space to store the whole lattice; this can be compared to Munro and Sinnamon
\cite{MS18}, who present a data structures of size $O(n \log n)$ for
distributive lattices that supports \emph{meet} and \emph{join} operations (and
therefore distance queries, due to our Lemma \ref{lemma:distance in modular
lattices}). However, these are not labelings, so the result is not directly
comparable.

\subparagraph*{Planar graphs and other efficiently-labelable families.} When they
introduced graph labeling, Kannan, Naor, and Rudich \cite{KNR92} studied trees,
low-arboricity graphs (whose edges can be partitioned into a small number of
trees), and planar graphs, and interval graphs (whose vertices are intervals in
$\bR$, with an edge if the intervals intersect), among others. These families
have $O(\log n)$ adjacency labeling schemes. Trees, low-arboricity graphs, and
planar graphs have constant-cost universal SMP protocols for adjacency. Trees
admit an efficient $k$-distance protocol: 
\begin{theorem}
\label{thm:k distance for trees}
  The family of trees has a universal SMP protocol deciding $k$-distance with
  cost $O(k)$ and a $O(k \log n)$ labeling scheme deciding $k$-distance.
\end{theorem}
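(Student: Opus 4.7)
I would root $T$ arbitrarily and use the standard fact that $\dist(x,y)\leq k$ iff there exist $a,b\geq 0$ with $a+b\leq k$ such that the $a$th ancestor of $x$ equals the $b$th ancestor of $y$ (i.e.\ the lowest common ancestor, or one of its ancestors). The plan is to have Alice and Bob each send short random fingerprints of their first $k+1$ ancestors so that the referee can search for such an aligned pair of offsets using only bit-equality tests on the two received messages. Concretely, using the shared randomness I would draw an independent uniform label $L(v)\in\{0,1\}^c$ for every vertex $v$ of $T$, plus a label for each ``virtual'' level above the root, where $c$ is a constant to be chosen. Writing $x_i$ for the $i$th ancestor of $x$ (a virtual vertex once $i>\depth(x)$), Alice sends $L(x_0)L(x_1)\cdots L(x_k)$, of length $c(k+1)=O(k)$, and Bob sends the analogous string for $y$. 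The referee outputs YES iff there exist $a,b\geq 0$ with $a+b\leq k$ such that $L(x_{a+i})=L(y_{b+i})$ for every $i\in\{0,\ldots,k-\max(a,b)\}$; crucially this only involves bit-equality, so the referee never needs to learn $L$.

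\textbf{Correctness analysis.} For the YES case, the pair $(a,b)=(\dist(x,\mathrm{LCA}),\dist(y,\mathrm{LCA}))$ is a valid witness, since then $x_{a+i}=y_{b+i}$ for every $i\geq 0$. For the NO case, suppose $\dist(x,y)>k$. The key structural step is: for any candidate $(a,b)$ with $a+b\leq k$, if $x_{a+i}=y_{b+i}$ for some $i$ (a ``forced'' match), then $(a,b)$ must satisfy the alignment $a-b=a^*-b^*$ where $a^*,b^*$ are the true LCA offsets, and the first such $i$ equals $(\dist(x,y)-(a+b))/2$, which is a positive integer by a parity argument. Hence the first $(\dist(x,y)-(a+b))/2$ positions of the check are independent uniform $c$-bit equality tests; for unaligned pairs, every one of the $k+1-\max(a,b)$ positions is an independent uniform test. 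A union bound over the $O(k^2)$ candidate pairs splits into an $O(k)$-term geometric sum in $(\dist(x,y)-(a+b))/2$ for the aligned pairs and a geometric sum in $k+1-\max(a,b)$ for the unaligned pairs, both evaluating to $O(2^{-c})$. Choosing $c$ a sufficiently large constant drives the total error below $1/3$, giving universal SMP cost $O(k)$.

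\textbf{Labeling and main obstacle.} The $O(k\log n)$ labeling scheme for $k$-distance follows at once by applying Theorem~\ref{thm:universal smp to adjacency labeling}. The protocol itself and the YES-case correctness are essentially immediate; the main technical work lies in the NO-case union bound, and in particular the treatment of the $O(k)$ ``aligned'' candidate pairs and the boundary where the ancestor sequences exceed the root into the virtual levels. I would handle both together by exploiting the same observation: whenever $a+b\leq k<\dist(x,y)$, the parity argument gives $(\dist(x,y)-(a+b))/2\geq 1$, so every aligned pair contributes at least one free random position, and the contributions of the aligned pairs form a convergent geometric series in $c$ that is independent of both $k$ and $n$.
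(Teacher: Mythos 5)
Your protocol is correct, but it takes a genuinely different route from the paper's. The paper roots the tree and partitions vertices into depth bands of height $k$ ($T_i=\{v:(i-1)k\le\depth(v)<ik\}$); Alice sends random colors from a constant-size palette along the path from the band-roots $x'',x'$ down to $x$, so the two transmitted sequences are canonically anchored and the referee only has to test four possible anchor equalities and then read off a single alignment, which also lets the referee report the exact distance whenever $\dist(x,y)\le k$. You instead skip the banding entirely: each player sends $O(1)$-bit fingerprints of its first $k+1$ ancestors (with shared virtual levels above the root), and the referee searches all $O(k^2)$ offset pairs $(a,b)$ with $a+b\le k$ for a full suffix match. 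The price is a subtler error analysis, and your sketch is essentially right but glosses over two points that do need (short) arguments: (i) for unaligned shifts, positions above the LCA are not tests on disjoint vertex pairs, since the same common ancestor can appear in two comparisons at offset $\Delta=(a^*-a)-(b^*-b)$ apart; joint independence at rate $2^{-c}$ per position still holds because $\Delta$ has a fixed sign, so each new position introduces at least one fresh vertex (equivalently, the equality-constraint graph is a forest, so the pass probability is exactly $2^{-c(k+1-\max(a,b))}$); and (ii) for aligned shifts the check window may be shorter than $(\dist(x,y)-(a+b))/2$, so the per-pair bound is $2^{-c\min(t,\,k+1-\max(a,b))}$ with $t=(\dist(x,y)-(a+b))/2$; since both quantities increase by one as $t$ increases, the aligned contributions still form a geometric series $O(2^{-c})$, and the unaligned ones sum to $O(2^{-c})$ as well, so constant $c$ suffices and the cost is $c(k+1)=O(k)$. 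The YES case is error-free in your scheme (the true offsets give identical vertices), and the $O(k\log n)$ labeling follows from Theorem~\ref{thm:universal smp to adjacency labeling} exactly as you say. In short: the paper's banding buys a nearly trivial error analysis plus exact distance reporting below the threshold; your all-shifts fingerprint matching buys a simpler encoding with no band/subtree bookkeeping at the cost of the run-length union bound sketched above.
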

Planar graphs admit an efficient 2-distance protocol, which implies a new
2-distance labeling scheme:
\begin{theorem}
\label{thm:k distance for planar graphs}
  The 2-distance problem on the family of planar graphs has a universal SMP
  protocol with cost $O(1)$ and a labeling scheme of size $O(\log n)$.
\end{theorem}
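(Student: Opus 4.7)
My plan is to leverage the constant-cost universal SMP protocol for adjacency on planar graphs, which is established earlier in the paper, as a black box inside a one-step reduction. Since every planar graph has arboricity at most $3$, I fix an orientation of $G$ with out-degree at most $3$; write $N^+(v)$ for the (at most $3$) ``parents'' of $v$ and $N^{++}(v) := \bigcup_{u \in N^+(v)} N^+(u)$ for its (at most $9$) ``grandparents''. Alice's and Bob's messages will carry hashed fingerprints of the constant-size sets $\{v\}$, $N^+(v)$, and $N^{++}(v)$, plus the communication for an additional universal SMP protocol for adjacency on an auxiliary planar graph $H$ introduced below.

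The key observation is that $\dist_G(x,y) \leq 2$ if and only if at least one of the following holds: (a) $x$ and $y$ are adjacent, i.e.,~$y \in N^+(x)$ or $x \in N^+(y)$; (b) they share a parent, i.e.,~$N^+(x) \cap N^+(y) \neq \emptyset$; (c) one is a grandparent of the other, i.e.,~$y \in N^{++}(x)$ or $x \in N^{++}(y)$; or (d) they share a common child, i.e.,~there is some $z$ with $\{x,y\} \subseteq N^+(z)$. Cases (a)--(c) only involve the constant-size sets in the labels, so the referee can decide each with $O(1)$ communication via standard fingerprint/equality tests. The main obstacle is case (d), which is exactly the adjacency problem on the auxiliary graph $H$ on $V(G)$ whose edges are $\{\{u,v\} : \exists z \text{ with } \{u,v\} \subseteq N^+(z)\}$.

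The crux of the proof is to show that $H$ is itself planar, so that the $O(1)$ adjacency protocol for planar graphs can be invoked on $H$. To this end, introduce the bipartite graph $B$ on $V(G) \cup V'$, with $V' = \{s_z : z \in V(G)\}$ and an edge $s_z v$ for each $v \in N^+(z)$. Placing each $s_z$ in a small disk around $z$ in a planar embedding of $G$, and routing $s_z v$ along a small perturbation of the edge $zv$, shows that $B$ is planar and that every $s_z$ has degree at most $3$. The graph $H$ is obtained from $B$ by deleting each $s_z$ and adding the clique on $N^+(z)$. When $s_z$ is removed, the (at most three) faces incident to $s_z$ merge into a single face whose boundary contains all of $N^+(z)$; since $|N^+(z)| \leq 3$, the required clique consists of at most three chords, which can be drawn inside this face without crossings. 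Performing these local replacements for every $s_z$ preserves planarity, so $H$ is planar. Combining the four subprotocols then gives a universal SMP protocol for $2$-distance on planar graphs of total cost $O(1)$, and the $O(\log n)$ labeling scheme follows by applying Theorem~\ref{thm:universal smp to adjacency labeling} to the graph family $\{G^2 : G \text{ planar}\}$.
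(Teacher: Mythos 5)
Your reduction to cases (a)--(d) is sound, and cases (a)--(c) indeed need only fingerprint/equality tests on the constant-size sets $N^+(x)$, $N^+(y)$, $N^{++}(x)$, $N^{++}(y)$; this matches what the paper does for the ``tail-to-head'' and ``head-to-tail'' configurations. The entire weight of your argument therefore rests on the claim that the common-child graph $H$ is planar, and that is where there is a genuine gap. Your proof that the bipartite graph $B$ is planar does not work: in a planar embedding of $G$, the out-edges of a vertex $z$ need not be consecutive in the rotation around $z$, so you cannot place a single point $s_z$ in a small disk around $z$ and reroute all out-edges to it without crossing the in-edges that still terminate at $z$ (the in-edge curves cut the disk into sectors, and out-edge stubs in other sectors cannot reach $s_z$). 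For a 3-connected planar graph the rotation system is essentially unique, so you cannot escape this by choosing a better embedding, and an arbitrary out-degree-$3$ orientation (which is all that arboricity $3$ gives you) provides no control over how in- and out-edges interleave. Even a Schnyder orientation does not help here: around an internal vertex the three outgoing edges are separated by the incoming bundles, so they are never consecutive. Thus planarity of $B$ is unproven, and with it planarity of $H$; the subsequent $Y$--$\Delta$ step is fine but has nothing to stand on.

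It is worth stressing that the paper itself never establishes (or claims) that the full common-child graph is planar. It splits each vertex into \emph{several} copies keyed to the three Schnyder trees (one copy per incoming bundle, which \emph{is} consecutive in the rotation), proves planarity of this split graph via Schnyder's order-dimension characterization rather than a drawing argument, and then shows only that each of the three ``head-to-head closures'' $G_i^{\leftarrow\rightarrow}$ (common children whose two out-edges lie in the two trees other than $T_i$) is planar, via a minor/Kuratowski argument inside the split graph. For the union of the three pieces it settles for arboricity at most $9$ and invokes the low-arboricity protocol. So your claim is strictly stronger than what the paper proves, and the argument you give for it fails; to repair your proof you would either need a correct proof that $H$ (for some canonical orientation both players can compute) is planar or has bounded arboricity, or you should follow the paper's route of partitioning the common-child edges according to the pair of Schnyder trees involved and proving planarity piecewise.
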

On the other hand, a universal SMP protocol deciding 2-distance on the family
of graphs with arboricity 2 has cost at least $\Omega(\sqrt n)$ (Proposition
\ref{prop:lower bound for arboricity 2 graphs}), and a universal SMP protocol
deciding adjacency in interval graphs has cost $\Theta(\log n)$ (Proposition
\ref{prop:lower bound for interval graphs}).

Gavoille \etal~\cite{GPPR04} showed that trees have an $O(\log^2 n)$
labeling allowing $\dist(x,y)$ to be computed exactly from labels of $x,y$, and
gave a matching lower bound; Kaplan and Milo \cite{KM01} and Alstrup \emph{et
al} \cite{ABR05} studied $k$-distance for trees, with the latter achieveing a
$\log n + O(k^2(\log\log n + \log k))$ labeling scheme. For planar graphs,
\cite{GPPR04} gives a lower bound of $\Omega(n^{1/3})$ for computing distances
exactly, and an upper bound of $O(\sqrt n \log n)$, which was later improved to
$O(\sqrt n)$ in \cite{GU16}.

\subparagraph*{Communication Complexity.} Our lower bounds are achieved by reduction
from the family of all graphs, which has complexity $\Theta(n)$, in contrast
to the upper bound of $\ceil{\log n}$ for the standard SMP cost of computing
adjacency in any graph (since Alice and Bob can send $\ceil{\log n}$ bits to
identify their vertices).
\begin{theorem}
\label{thm:lower bound for all graphs}
For the family $\cG$ of all graphs, the universal SMP cost of computing
adjacency in $\cG$ is $\Theta(n)$.
\end{theorem}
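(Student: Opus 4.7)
The plan is to establish matching $O(n)$ and $\Omega(n)$ bounds. For the upper bound, I would use the deterministic protocol where Alice sends the $n$-bit characteristic vector of $N(x) \subseteq [n]$ while Bob sends $y$ using $\lceil \log n \rceil$ bits, and the referee outputs the $y$-th bit of Alice's message. The total cost is $n + O(\log n) = O(n)$, giving a deterministic (hence universal SMP) protocol.

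For the lower bound, let $c$ be the universal SMP cost with worst-case error $\le 1/3$. I would first perform $O(1)$-fold majority amplification (using independent blocks of the shared randomness) to reduce the per-input error to a small constant $\epsilon$ (say $\epsilon = 1/10$, so that $h(2\epsilon) < 1$ for the binary entropy function $h$), at the cost of a constant-factor blowup to $c' = O(c)$. Then Yao's minimax, applied to the uniform distribution $\mu$ on triples $(G, x, y)$ with $G$ uniform over $\cG$ and $(x, y)$ uniform, yields a fixing of the shared randomness giving a deterministic universal SMP protocol whose expected error under $\mu$ is still $\le \epsilon$.

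In this deterministic protocol each graph $G \in \cG$ induces strategies $\sigma_A^G \colon [n] \to \{0,1\}^\alpha$ and $\sigma_B^G \colon [n] \to \{0,1\}^\beta$ with $\alpha + \beta = c'$, together with a fixed referee function $R$, producing a reconstructed adjacency matrix $\hat G(x, y) \define R(\sigma_A^G(x), \sigma_B^G(y))$. The averaged error bound translates to $\bE_G\left[|G \triangle \hat G|\right] \le \epsilon \binom{n}{2}$, and Markov's inequality gives $|G \triangle \hat G| \le 2\epsilon \binom{n}{2}$ for at least half of all graphs $G \in \cG$. But the total number of distinct reconstructions $\hat G$ is at most $2^{nc'}$, while each Hamming ball of radius $2\epsilon \binom{n}{2}$ in the space of labeled graphs contains at most $2^{h(2\epsilon) \binom{n}{2}}$ graphs. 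Combining,
\[
  2^{nc'} \cdot 2^{h(2\epsilon)\binom{n}{2}} \;\ge\; \tfrac{1}{2} \cdot 2^{\binom{n}{2}},
\]
which forces $c' \ge (1 - h(2\epsilon))(n-1)/2 - o(1) = \Omega(n)$, and hence $c = \Omega(n)$.

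The main obstacle is that the referee does not see the shared randomness, so there is no immediate deterministic object on which to run a counting argument; amplification together with Yao's minimax handles this at only constant cost overhead, after which a standard Shannon-style counting argument — bounding the image of the reconstruction map against Hamming balls in the graph space — delivers the $\Omega(n)$ bound.
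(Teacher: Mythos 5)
Your argument is correct and follows the same overall strategy as the paper's lower bound: reduce to a deterministic protocol, then pit the number of strategy maps ($2^{\Theta(cn)}$) against the $2^{\binom{n}{2}}$ graphs on $[n]$ via a Hamming-ball volume estimate. The bookkeeping differs. The paper first applies its randomized symmetrization lemma so that a single encoding $f_G : [n] \to \{0,1\}^c$ serves both players, then for each $G$ extracts a good deterministic $f_G$ by averaging over the shared randomness, pigeonholes graphs into at most $2^{cn}$ buckets by their $f_G$, and uses the triangle inequality to fit a whole bucket inside one ball; you instead fix the randomness globally by the easy direction of Yao's minimax, apply Markov once, and cover the good graphs by balls centered at the reconstructions $\hat G$. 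These are equivalent up to constant factors in the exponent. One point your version handles more carefully is the explicit amplification to ensure $2\epsilon < 1/2$ before invoking the entropy/binomial bound on ball volume; the paper's estimate $\sum_{k\le 2\epsilon N}\binom{N}{k}\le 2\epsilon N\binom{N}{2\epsilon N}$ also needs $2\epsilon \le 1/2$, and since $R^{\mathsf{univ}}$ is defined with error $1/3$ and the symmetrization lemma doubles the error to $2/3$, the paper tacitly needs the same amplification step that you perform. The upper bound argument is the one the paper gives.
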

The basic relationships between universal SMP, standard SMP, and universal
graphs are as follows. Below, we use $D^\|(\ADJ(G))$ and $R^\|(\ADJ(G))$ for the
deterministic and randomized (standard) SMP cost of computing adjacency on $G$,
and $D^\ob(\cF), R^\ob(\cF)$ for the deterministic and randomized universal SMP
cost for computing adjacency in the family $\cF$.  We use the term
``$\embed$-universal graph'' as opposed to ``induced-universal'' to denote a
slightly different object that allows non-injective embeddings (see Section
\ref{section:universal smp} for definitions).
\begin{theorem}
\label{thm:complexity}
For a set $\cF$, the following relationships hold. Let $U$ range over the set of
all $\embed$-universal graphs:
\[
\max_{G \in \cF} D^\|(\ADJ(G)) \leq D^\ob(\cF_i) = \min_U D^\|(\ADJ(U))
  = \min_U \ceil{\log |U|} \,,
\]
with equality on the left iff $\exists H \in \cF$ such that $\forall G \in \cF$,
$G$ can be \emph{embedded} in $H$. For $\widetilde U$ ranging over the set of
all \emph{probabilistic universal graphs}:
\[
    \max_{G \in \cF} R^\|(\ADJ(G))
    \leq R^\ob(\cF)
    \leq \min_{\widetilde U} D^\|(\ADJ(\widetilde U)) 
    \leq O\left( R^\ob(\cF) \right) \,.
\]
Randomized and deterministic universal SMP satisfy
\[
  \Omega\left(\frac{D^\ob(\cF)}{\log n}\right)
  \leq R^\ob(\cF) \leq D^\ob(\cF) \,.
\]
\end{theorem}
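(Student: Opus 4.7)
The plan is to decompose the theorem into three chains and handle each by relating (deterministic or randomized) universal SMP protocols to (deterministic or probabilistic) $\embed$-universal graphs. The leftmost inequalities $\max_{G \in \cF} D^\|(\ADJ(G)) \leq D^\ob(\cF)$ and $\max_{G \in \cF} R^\|(\ADJ(G)) \leq R^\ob(\cF)$ are immediate by specialization: restricting a universal SMP protocol to a fixed $G \in \cF$ yields a standard SMP protocol for $\ADJ(G)$ at no cost increase. The trivial $R^\ob(\cF) \leq D^\ob(\cF)$ is identical.

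For the deterministic chain $D^\ob(\cF) = \min_U D^\|(\ADJ(U)) = \min_U \ceil{\log|U|}$, I would argue cyclically. Given a deterministic universal SMP protocol $\Pi$ of cost $m$ on $\cF$, define $U_\Pi$ as the graph whose vertex set is the image of Alice's encoding (at most $2^m$ messages) with edges specified by the referee's decision; then each $G \in \cF$ $\embed$-embeds into $U_\Pi$ via $v \mapsto a(f,v)$, so $\min_U \ceil{\log|U|} \leq m$. Conversely, given any $\embed$-universal $U$, Alice and Bob embed their graph $G$ into $U$ and each send the $\ceil{\log|U|}$-bit index of their vertex, yielding $D^\ob(\cF) \leq \ceil{\log|U|}$. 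The inequality $\min_U D^\|(\ADJ(U)) \leq \min_U \ceil{\log|U|}$ is immediate since index-sending is always a valid SMP protocol, and the reverse comes from quotienting $U$ by the message-equivalence induced by an SMP protocol on $U$ (the non-injective $\embed$ definition makes the quotient still $\embed$-universal). The ``iff'' condition follows by taking a maximizer $H \in \cF$ of $D^\|(\ADJ(G))$: the forward direction is immediate since an $\embed$-universal $H$ yields $D^\ob(\cF) \leq D^\|(\ADJ(H))$, and the converse uses the equality to force $H$ (or its twin quotient) to itself $\embed$-contain every $G \in \cF$, by comparing its twin count with the size of the minimum universal graph. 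The randomized chain $R^\ob(\cF) \leq \min_{\widetilde U} D^\|(\ADJ(\widetilde U)) \leq O(R^\ob(\cF))$ follows the same template with probabilistic universal graphs $\widetilde U$ replacing $\embed$-universal graphs: in one direction, fix the shared randomness of a randomized protocol to obtain a random $\embed$-universal graph and verify the resulting distribution satisfies the probabilistic universality guarantee; in the other, sample $\widetilde U$ using shared randomness and run its deterministic SMP protocol, incurring only constant blowup from boosting the embedding success probability.

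The derandomization bound $D^\ob(\cF) \leq O(R^\ob(\cF) \log n)$ is a one-line consequence of Theorem \ref{thm:universal smp to adjacency labeling}: any randomized universal SMP protocol of cost $c$ produces an adjacency labeling scheme of label size $s = O(c \log n)$, the label set is then the vertex set of an $\embed$-universal graph of size $2^s$, and the ``send your label'' protocol on this graph is a deterministic universal SMP protocol of cost $s$. The main obstacle throughout is careful bookkeeping of the correspondence between protocols, message spaces, and embeddings in both the deterministic and probabilistic settings, and arguing the ``iff'' step cleanly by comparing twin structure of $\cF$-members with the minimum universal graph; the remaining calculations are routine.
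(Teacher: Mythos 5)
Your first two chains are essentially the paper's argument: specialization gives the left inequalities, the decision function of a (symmetric) protocol is the adjacency matrix of a universal graph, and conversely embedding into a universal graph and sending the image vertex gives a protocol; the twin-quotient $U^\equiv$ handles the non-injectivity, and the ``iff'' is argued as in the paper by comparing a maximizer $H\in\cF$ with the minimum universal graph. One gloss worth noting in the randomized chain: to read the decision function of a randomized protocol as a \emph{single} probabilistic universal graph you need one random map $\phi:V(G)\to V(\widetilde U)$ and a symmetric adjacency relation, i.e.\ you must first symmetrize the protocol (the paper's Lemma \ref{lemma:randomized symmetrization}, costing a factor $2$ in cost and error); this is absorbed by the $O(\cdot)$ but should be said.

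The genuine gap is the third chain. You derive $D^\ob(\cF)=O(R^\ob(\cF)\log n)$ ``in one line'' from Theorem \ref{thm:universal smp to adjacency labeling}, but in this paper that theorem is itself proved \emph{from} exactly this derandomization (Lemma \ref{lemma:randomized to deterministic}, which in turn rests on Lemma \ref{lemma:newman's theorem for universal smp}); citing it here is circular, and you give no independent argument. The missing idea is the adaptation of Newman's theorem to the universal setting: although the input includes the graph $G$ (so the naive input-size bound on random bits is useless), $G$ is \emph{shared} by Alice and Bob, so for each fixed $G$ one only has to union-bound over the $n^2$ vertex pairs. A Chernoff argument then shows that for every $G$ there exist $m=O(\log n)$ seeds $r_1,\dotsc,r_m$ (depending on $G$, which is harmless since both players know $G$ and can agree on a canonical choice) such that the majority of the $m$ referee outputs is correct on every pair $x,y$. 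Having Alice and Bob send their messages under all $m$ seeds and the referee take the coordinatewise majority yields a deterministic universal SMP protocol of cost $O(R^\ob(\cF)\log n)$, which is the bound you need (and is what then \emph{implies} Theorem \ref{thm:universal smp to adjacency labeling}, not the other way around).
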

The above results on graph labeling are proved through the relationship between
randomized and deterministic universal SMP. We obtain this relationship by
adapting Newman's Theorem \cite{New91}, a standard derandomization result in
communication complexity.  Finally, we note the interesting fact that universal
SMP characterizes the gap between standard SMP models where the referee does or
does not share the randomness with Alice and Bob:
\begin{proposition}[Informal]
\label{prop:weak to universal}
  Let $\cF$ be a family of graphs and let $\Pi$ be a weakly-universal SMP
  protocol for $\cF$, which defines a distribution over the referee's decision
  functions $F$, which we interpret as the adjacency matrices of graphs. Let
  $\cU_\Pi$ be the family on which this distribution is supported. Then, taking
  the minimum over all such protocols $\Pi$,
  \[
    R^\ob_{\epsilon}(\cF)
    = \min_{\Pi} D^\ob(\cU_\Pi) \,.
  \]
\end{proposition}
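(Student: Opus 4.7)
The plan is to prove matching inequalities $R^\ob_\epsilon(\cF) \leq \min_\Pi D^\ob(\cU_\Pi)$ and $R^\ob_\epsilon(\cF) \geq \min_\Pi D^\ob(\cU_\Pi)$. The first direction is a direct composition of the weakly-universal protocol with a deterministic universal protocol for its induced family, and the second is nearly a triviality, since any (strong) randomized universal protocol is already weakly-universal with singleton induced family.

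For the upper bound, fix any weakly-universal SMP protocol $\Pi$ for $\cF$. Each realization of the shared randomness $r$ determines a graph $U_r \in \cU_\Pi$ whose vertex set is the message space and whose adjacency relation is the referee's decision function $F_r$. By correctness of $\Pi$, with probability at least $1-\epsilon$ over $r$, the pair of messages $(a(x,r), b(y,r))$ is adjacent in $U_r$ iff $f(x,y) = 1$. Let $\Pi'$ be a deterministic universal SMP protocol for the family $\cU_\Pi$ of cost $c = D^\ob(\cU_\Pi)$. I then build a (strong) randomized universal protocol $\widehat\Pi$ for $\cF$ as follows: using the shared randomness $r$, both Alice and Bob compute $U_r$ (they both see $r$) and treat $a(x,r)$ and $b(y,r)$ respectively as their vertices in $U_r$; they then transmit the messages prescribed by $\Pi'$ on input $U_r$ at those vertices. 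The referee knows only the family $\cU_\Pi$ and applies $\Pi'$'s decoder. The total communication is $c$, and since $\Pi'$ is deterministic the only error comes from $\Pi$, so the overall error is at most $\epsilon$. Taking the minimum over $\Pi$ gives $R^\ob_\epsilon(\cF) \leq \min_\Pi D^\ob(\cU_\Pi)$.

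For the lower bound, let $\Pi^*$ be a randomized universal SMP protocol for $\cF$ of cost $c = R^\ob_\epsilon(\cF)$ with fixed referee decision function $F$. Viewed as a weakly-universal protocol (simply by letting the referee also see $r$), its decision function $F_r = F$ is independent of $r$, so the induced family $\cU_{\Pi^*}$ is the singleton $\{G_F\}$, where $G_F$ is the graph on the message space with adjacency matrix $F$. A deterministic universal SMP protocol for a singleton family is immediate: Alice and Bob send their vertex labels (each of length at most $c$) and the referee, knowing $G_F$, applies $F$. Hence $D^\ob(\cU_{\Pi^*}) \leq c$, which gives $\min_\Pi D^\ob(\cU_\Pi) \leq R^\ob_\epsilon(\cF)$.

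The only real subtlety is keeping straight the three distinct roles of the referee — the weakly-universal referee who shares $r$, the strong referee of the outer constructed protocol, and the universal decoder for the inner family $\cU_\Pi$ — together with the observation that Alice and Bob can use shared randomness to determine $U_r$ and thereby simulate the inner deterministic universal protocol without ever transmitting $r$ or $\Pi$'s messages to the outer referee. I do not anticipate any technical obstacle beyond this bookkeeping.
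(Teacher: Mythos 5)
Your proposal is correct and follows essentially the same route as the paper: the upper bound composes the weakly-universal protocol with a deterministic universal SMP protocol (equivalently, an embedding into a $\embed$-universal graph) for the family of decision-function graphs $\cU_\Pi$, and the lower bound observes that any universal SMP protocol is weakly-universal with a singleton $\cU_\Pi$ whose deterministic universal cost is at most the protocol's cost. No gaps beyond the bookkeeping you already note.
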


\subsection{Other Related Work}

\paragraph*{Graph labeling.} Randomized labeling schemes for trees have been
studied by Fraigniaud and Korman \cite{FK09}, who give a randomized adjacency
labeling scheme of $O(1)$ bits per label that has one-sided error (i.e.~it can
erroneously report that $x,y$ are adjacent when they are not), and they show
that achieving one-sided error in the opposite direction requires a randomized
labeling with $\Omega(\log n)$ bits.  They also give randomized schemes for
determining if $x$ is an ancestor of $y$, but they do not address distance
problems. Spinrad's book \cite{Spin03} has a chapter on implicit graphs and
Alstrup \etal~\cite{AKTZ19} for a recent survey on adjacency labeling schemes
and induced-universal graphs.  We know of no labeling schemes for lattices, but
Fraigniaud and Korman \cite{FK16} recently studied adjacency
labeling schemes for posets of low ``tree-dimension''.

\paragraph*{Distance-preserving} labeling studies an opposite problem to
$k$-distance labeling, where distances must be accurately reported when they are
\emph{above} some threshold $D$. Recent work includes Alstrup
\etal~\cite{ADKP16}.

To our knowledge, $k$-distance or even 2-distance has not been studied for
planar graphs, but there are many results on other types of planar graph
labelings  with restrictions at distance 2. An example is the \emph{frequency
assignment problem} or \emph{$L(p,q)$-labeling} problem, which asks how to
construct a labeling $\ell$ assigning integers $[k]$ to vertices of a planar
graph so that $\dist(x,y) \leq 1 \implies |\ell(x)-\ell(y)| \geq p$ and
$\dist(x,y) \leq 2 \implies |\ell(x)-\ell(y)| \geq q$, with various optimization
goals. See \cite{Cal11} for a survey.

\paragraph*{Uncertain communication.} There are several works studying
communication problems where the parties do not agree on the function to be
computed, starting with Goldreich, Juba, and Sudan \cite{GJS12} who studied
communication where parties have different ``goals''. Canonne
\etal~\cite{CGMS17} study communication in the shared randomness setting where
the randomness is shared imperfectly. Haramarty and Sudan \cite{HS16} study
compression (\'a la Shannon) in situations where the parties do not agree on a
common distribution. As mentioned earlier, Ghazi \etal~\cite{GKKS18} and Ghazi
and Sudan \cite{GS17} study 2-way communication where the parties do not agree
on the function to be computed.

\subsection{Notation}
$[k]$ means $\{1, \dotsc, k\}$.  The letter $n$ always denotes the number of
vertices in a graph. We use the notation $\ind{E} = 1$ iff the statement $E$
holds, and $\ind{E} = 0$ otherwise.  For a graph $G$, $V(G)$ is the set of
vertices and $E(G)$ is the set of edges. For vertices $x,y$, we write $G(x,y) =
\ind{x,y \text{ are adjacent in $G$}}$ for the entry in the adjacency matrix of
$G$. For an undirected, unweighted graph $G$ and vertices $u,v, \dist(u,v)$ is
the length of the shortest path from $u$ to $v$.

For any graph $G$ and integer $k$, we denote by $G^k$ the \emph{$k$-closure} of
$G$, where two vertices $u,v$ are adjacent iff $\dist(u,v) \leq k$ in $G$; it is
convenient to require that each vertex is adjacent to itself in $G^k$. For a set
of graphs $\cF$, $\cF^k = \{ G^k : G \in \cF \}$.

$D^\|(f)$ is the deterministic SMP cost of the function $f$ and $R^\|(f)$ is the
randomized SMP cost of the function $f$, in the model where Alice and Bob share
randomness but the deterministic referee does not.

\section{Universal Communication and Universal Graphs}
\label{section:universal smp}
In this paper we focus on deciding adjacency. Every Boolean communication
problem $f : \cX \times \cY \to \zo$ on finite domains $\cX,\cY$ is equivalent
to the adjacency problem on the graph $G$ with vertex set $\cX \cup \cY$ and
$G(u,v) = f(u,v)$. We may either allow self-loops in $G$ if $\cX = \cY$ or take
$G$ to be bipartite. We will generally permit graphs to have self-loops.

\begin{definition}
A \emph{family of graphs} $\cF = (\cF_i)$ is a sequence of sets $\cF_i$ indexed
by integers $i$, along with a strictly increasing size function $n(i)$, so that
$\cF_i$ is a set of graphs with vertex set $[n(i)]$. If $\cF_i$ has size
$n(i)=i$ then we write $\cF_n$.  
\end{definition}

\begin{definition}[Universal SMP and Variations]
  Let $\cF$ be a family of graphs with size function $n$ and let $\Phi$ be an
  operation taking size $n(i)$ graphs to size $n(i)$ graphs. Let $c : \bN \to
  \bN$ and let $\epsilon > 0$ be a constant. An $\epsilon$-error, cost $c$
  sequence of \emph{universal SMP} communication protocols for $\cF$ is as
  follows. For any $i \in \bN$, a protocol $\Pi_i$ for $\cF_i$ is a triple
  $(a_i, b_i, F_i)$ where:
  \begin{itemize}
    \item Alice and Bob receive $(G,x), (G,y)$ respectively, where $G \in \cF_i$
      and $x,y \in V(G) = [n(i)]$;
    \item Alice and Bob share a random string $r$ and compute messages
      $a_i(r,G,x), b_i(r,G,y) \in \zo^{c(i)}$, respectively;
    \item For each $i$, the (deterministic) referee has a function $F_i :
      \zo^{c(i)} \times \zo^{c(i)} \to \zo$, called the \emph{decision
      function}. $F_i(a_i(r,G,x), b_i(r,G,y))$ must satisfy:
      \begin{enumerate}
        \item If $x,y$ are adjacent in $\Phi(G)$ then
          $\Pru{r}{F_i(a_i(r,G,x),b_i(r,G,y)) = 1} > 1-\epsilon$; and
        \item If $x,y$ are not adjacent in $\Phi(G)$ then
          $\Pru{r}{F_i(a_i(r,G,x),b_i(r,G,y))} < \epsilon$.
      \end{enumerate}
  \end{itemize}
  A universal SMP protocol is \emph{symmetric} when the functions $a_i,b_i$
  computed by Alice and Bob are identical and the function $F_i$ satisfies
  $F_i(a,b) = F_i(b,a)$ for all messages $a,b \in \zo^c$.  We write
  $R^\ob_\epsilon(\Phi(\cF))$ for the communication complexity in the universal
  SMP model of computing adjacency in graphs $\Phi(\cF) = \{ \Phi(G) : G \in \cF
  \}$, where $\epsilon$ is the allowed probability of error.  We write
  $R^\ob(\Phi(\cF))$ for $R^\ob_{1/3}(\Phi(\cF))$.  If no operation $\Phi$ is
  specified, it is assumed to be the identity.

  It is also convenient to define a \emph{weakly-universal SMP} protocol as a
  universal SMP protocol where the referee can see the shared randomness, so the
  choice function is of the form $F_i(r, a(r,G,x), b(r,G,y))$ for random seed
  $r$, graph $G \in \cF$, and $x,y \in V(G)$. We denote the $\epsilon$-error
  complexity in this model with $R^\weak_\epsilon(\Phi(\cF))$.

  Finally, we write $D^\ob(\Phi(\cF))$ for the \emph{deterministic} universal
  SMP complexity.
\end{definition}
\begin{remark}
  We include the operator $\Phi$ in the definition to emphasize that the players
  are given the original graph $G$, not the graph $\Phi(G)$; for example, the
  players are not given $G^k$ (from which it may be difficult to compute $G$),
  but are instead given $G$.
\end{remark}

\subsection{Deterministic Universal Communication and Universal Graphs}
\label{subsection:deterministic universal smp}
We will show that a deterministic universal SMP protocol is equivalent to an
\emph{embedding} into a $\embed$-universal graph, which we we define using the
following notion of embedding (following the terminology of Rado \cite{Rado64}):
\begin{definition}
  For graphs $G,H$, a mapping $\phi : V(G) \to V(H)$ is an \emph{embedding} iff
  $\forall u,v \in V(G)$, $G(u,v) = H(\phi(u),\phi(v))$. If such a mapping
  exists we write $G \embed H$.

  For a set of graphs $\cF_i$, a graph $U$ is \emph{$\embed$-universal} if
  $\forall G \in \cF_i, G \embed U$; i.e.~$\forall G \in \cF_i$ there exists an
  embedding $\phi_G : V(G) \to V(U)$.
  For a family of graphs $\cF = (\cF_i)$, a sequence $U = (U_i)$ is a
  \emph{$\embed$-universal graph sequence} if for each $i$, $U_i$ is
  $\embed$-universal for $\cF_i$.

  Define an equivalence relation on $V(G)$ by $u \equiv v$ iff $\forall w \in
  V(G), G(u,w) = G(v,w)$, i.e.~$u,v$ have identical rows in the adjacency
  matrix. For a graph $G$, define the $\equiv$-reduction $G^\equiv$ as a graph
  on the equivalence classes $\cC$ of $V(G)$ with $U,W \in \cC$ adjacent iff
  $\exists u \in U, w \in W$ such that $u,w$ are adjacent.
\end{definition}
An embedding is not the same as a homomorphism since we must map non-edges to
non-edges, and $G \embed H$ is not the same as $G$ being an induced subgraph of
$H$ since the mapping is not necessarily injective. Therefore a universal graph
by our definition is not the same as an induced-universal graph, where $G$ must
exist as an induced subgraph.  We could for example map the path $a$ --- $b$ ---
$c \mapsto a'$ --- $b'$ --- $a'$. This difference between definitions is
captured by the $\equiv$ relation between vertices. It is necessary to allow
self-loops, otherwise the $\embed$ relation is not transitive. The important
properties of $\embed, \equiv$, and $\equiv$-reductions are stated in the next
proposition; the proofs are routine and for completeness are included in the
appendix. The relation $\simeq$ is the isomorphism relation on graphs.
\begin{proposition}
\label{prop:embedding properties}
  The following properties are satisfied by the $\embed$ relation, the $\equiv$
  relation, and $\equiv$-reductions:
  \begin{enumerate}
    \item $\embed$ is transitive.
    \item For any graph $G$ and $u,v \in V(G)$, $u \equiv v$ iff there exists
      $H$ and an embedding $\phi : G \to H$ such that $\phi(u) = \phi(v)$.
    \item For any graph $G, (G^\equiv)^\equiv \simeq G^\equiv$.
    \item For any graph $G, G \embed G^\equiv$ and $G^\equiv \embed G$.
    \item For any graphs $G,H$, $G \embed H$ iff $G^\equiv \embed H^\equiv$.
    \item For any graphs $G,H$, $G^\equiv \embed H^\equiv$ iff $G^\equiv$ is an
      induced subgraph of $H^\equiv$.
  \end{enumerate}
\end{proposition}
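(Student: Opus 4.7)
The plan is to prove the six properties essentially in the order given, since later parts lean on earlier ones. Most of the arguments are unpacking definitions, so I will focus on the conceptual structure.

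For part 1, given embeddings $\phi : V(G) \to V(H)$ and $\psi : V(H) \to V(K)$, I would check that $\psi \circ \phi$ preserves adjacency and non-adjacency by composing the two adjacency-preservation conditions. For part 2, the backward direction is immediate: if $\phi$ is an embedding with $\phi(u) = \phi(v)$, then for every $w$, $G(u,w) = H(\phi(u),\phi(w)) = H(\phi(v),\phi(w)) = G(v,w)$, so $u \equiv v$. For the forward direction I would construct $H$ explicitly by collapsing $v$ onto $u$: let $\phi$ be the identity on $V(G)\setminus\{v\}$ and $\phi(v) = u$, with $H$ obtained by discarding $v$ and keeping the induced graph. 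The verification that $\phi$ is an embedding uses $u\equiv v$ together with the two consistency checks $G(v,v) = G(u,u)$ and $G(v,u) = G(u,u)$, both of which follow by applying $u \equiv v$ to the test vertices $u$ and $v$.

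For parts 3 and 4, I would work directly with the quotient construction. Property 4 is the heart: the natural quotient map $q : V(G) \to V(G^\equiv)$ sending each vertex to its $\equiv$-class is an embedding, because adjacency in $G^\equiv$ is defined using representatives and, by definition of $\equiv$, the answer does not depend on which representative is chosen; conversely, picking one representative from each class gives an injective embedding $V(G^\equiv) \to V(G)$. Property 3 then follows because if two distinct classes $U \neq W$ in $V(G^\equiv)$ were $\equiv$-equivalent in $G^\equiv$, then by definition of the quotient adjacency any representatives $u \in U$ and $w \in W$ would satisfy $G(u,z) = G(w,z)$ for every $z$, making $u \equiv w$ in $G$ and so $U = W$, a contradiction.

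Part 5 follows from parts 1 and 4 by sandwiching: $G \embed G^\equiv \embed H^\equiv \embed H$ in one direction, and $G^\equiv \embed G \embed H \embed H^\equiv$ in the other. The main obstacle, and the only part that requires a genuine argument beyond bookkeeping, is part 6: the forward direction of ``$G^\equiv \embed H^\equiv$ iff $G^\equiv$ is an induced subgraph of $H^\equiv$''. I would prove this by showing that any embedding $\phi : V(G^\equiv) \to V(H^\equiv)$ must be injective. Suppose $\phi(U) = \phi(W)$ for two classes $U \neq W$ in $V(G^\equiv)$; then by part 2 applied to $\phi$, we get $U \equiv W$ in $G^\equiv$, but by part 3 the graph $G^\equiv$ has no nontrivial $\equiv$-classes, so $U = W$. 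Injectivity plus the adjacency-preservation condition of an embedding is exactly the statement that $\phi$ realizes $G^\equiv$ as an induced subgraph of $H^\equiv$; the reverse direction is trivial since an induced subgraph inclusion is already an embedding.
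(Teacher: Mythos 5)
Your proof is correct and follows essentially the same structure as the paper's: transitivity by composing embeddings, part~2 backward by chasing the definition, part~4 by showing the quotient map is an embedding, part~3 as a consequence, part~5 by sandwiching via transitivity and part~4, and part~6 by deriving injectivity from parts~2 and~3. The one place you genuinely diverge is the forward direction of part~2, where you give an explicit single-vertex collapse ($\phi(v)=u$, $H$ the induced subgraph on $V(G)\setminus\{v\}$) rather than the paper's one-liner of taking $H=G^\equiv$ with the quotient map; your version is self-contained and avoids the paper's implicit forward reference to part~4, at the cost of a short case check on the diagonal entries $G(u,u)$, $G(v,v)$, $G(u,v)$, which you correctly note follow from $u\equiv v$ and symmetry of the adjacency matrix. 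Either route is fine; this is a cosmetic rather than substantive difference.
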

These properties allows us to prove relationships between the standard SMP
model, deterministic universal SMP, and $\embed$-universal graphs. First we show
that deterministic universal SMP protocols can always be made
symmetric\footnote{Note that this does not imply that every deterministic SMP
protocol is symmetric, since in this paper we are only concerned with adjacency
on an undirected graph, for which the communication matrix is symmetric. This
proposition shows that for symmetric communication matrices, the deterministic
SMP protocol is symmetric.}.
\begin{proposition}
  \label{prop:deterministic protocols are symmetric} If $\Pi$ is a deterministic
  universal SMP protocol for the set $\cF$, then there exists a deterministic
  universal SMP protocol $\Pi'$ that is symmetric and has the same cost as
  $\Pi$.
\end{proposition}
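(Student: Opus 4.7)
The plan is to convert any asymmetric deterministic protocol $\Pi=(a,b,F)$ of cost $c$ into a symmetric one by exhibiting a symmetric $\embed$-universal graph $U$ for $\cF$ and using the induced symmetric protocol, in which both players send the embedding image of their vertex and the referee checks adjacency in $U$.

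The first step is to build a candidate $U$ with vertex set $\zo^c\times\zo^c$ and symmetric adjacency $(\alpha_1,\beta_1)\sim_U(\alpha_2,\beta_2)$ iff $F(\alpha_1,\beta_2)=F(\alpha_2,\beta_1)=1$. The map $\phi_G(v)=(a(G,v),b(G,v))$ is then an embedding $G\embed U$: by correctness of $\Pi$ together with $G(x,y)=G(y,x)$, both $F(a(G,x),b(G,y))$ and $F(a(G,y),b(G,x))$ equal $G(x,y)$, so $\phi_G(x)\sim_U\phi_G(y)$ iff $G(x,y)=1$. This immediately yields a symmetric protocol in which each player sends $\phi_G(v)\in\zo^{2c}$ and the referee decides by $\sim_U$.

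To match the cost exactly, I would then pass to the $\equiv$-reduction $U^\equiv$: by Proposition \ref{prop:embedding properties} (parts 1, 4, and 5), the embedding $G\embed U$ lifts to $G\embed U^\equiv$, so $U^\equiv$ is still a symmetric $\embed$-universal graph for $\cF$, and the protocol it induces has cost $\lceil\log|V(U^\equiv)|\rceil$. The plan is to argue $|V(U^\equiv)|\le 2^c$ by exploiting the constraint that for every used pair $(a(G,x),b(G,y))$ the swap $(a(G,y),b(G,x))$ gives the same $F$-value: this forces the row-equivalence classes of $F$ to correspond bijectively to its column-equivalence classes, so each $\equiv$-class of $U$ is effectively parameterised by a single coordinate and the $2c$-bit labels collapse to $c$ bits.

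The main obstacle will be carrying out this collapsing step rigorously: the $\equiv$-reduction of a generic symmetric adjacency on $\zo^c\times\zo^c$ can still have up to $2^{2c}$ classes, so the tight bound $2^c$ genuinely requires that every $G\in\cF$ has a symmetric adjacency matrix, not merely that $F$ is expressed in $c$ bits. If that exact bound proves elusive, the intermediate construction already delivers a symmetric protocol of cost $2c=O(c)$, which is enough for every asymptotic-cost application of this proposition elsewhere in the paper.
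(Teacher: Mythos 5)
Your construction of the product universal graph $U$ on $\zo^c\times\zo^c$ with the symmetrised adjacency, and the observation that $\phi_G(v)=(a(G,v),b(G,v))$ embeds $G$ into $U$, is correct and gives a symmetric protocol of cost $2c$. But the proposition asserts the \emph{same} cost $c$, and the collapsing step you flag as the ``main obstacle'' is not just technically unfinished --- the bound $|V(U^\equiv)|\le 2^c$ you would need is in general false for the $U$ you build. The $\equiv$-reduction quotients $U$ by \emph{all} pairs $(\alpha,\beta)\in\zo^c\times\zo^c$, not merely the ``used'' pairs $(a(G,x),b(G,x))$, so there is no constraint forcing $\alpha$- and $\beta$-coordinates to collapse. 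Concretely, with $c=1$ and $F(0,0)=F(0,1)=F(1,1)=1$, $F(1,0)=0$, all four vertices of $U$ have pairwise distinct neighbourhoods, so $|V(U^\equiv)|=4>2^c$. Your fallback ``$2c$ suffices for asymptotic applications'' also does not cover the uses of this proposition in the paper: it is invoked to prove the exact identity $D^\ob(\cF_i)=\min_U\ceil{\log|U^\equiv|}$ and the exact characterisation ``$\cF$ is $m$-implicit iff $D^\ob(\cF_i)\le m(i)$'' (Theorem~\ref{thm:universal smp to adjacency labeling}), where a factor of~$2$ breaks the equalities.

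The paper's proof takes an orthogonal route: rather than embedding into a product graph, it rewires the \emph{protocol} so that Bob reuses Alice's encoding. Using the fact that $b$ (and $a$) restricted to $\equiv$-class representatives is injective (this follows from correctness of $\Pi$, since $b(u)=b(v)$ forces $G(w,u)=G(w,v)$ for all $w$), one can define partial inverses $a^{-1},b^{-1}$ and replace Bob's encoding by $b'=a\,b^{-1}b$ and the decision function by $F'(p,q)=F(p,\,b\,a^{-1}(q))$; then $F'(a(x),b'(y))=G(x,y)$ with $b'=a$, preserving the cost $c$. If you want to salvage your universal-graph framing, you would need to first apply a translation of this kind so that the embedding uses a \emph{single} coordinate map $v\mapsto a(G,v)\in\zo^c$, and only then define $U$ on $\zo^c$; starting from $\zo^c\times\zo^c$ irrevocably doubles the message length.
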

\begin{proof}
  Let $G \in \cF$ and let $a,b : V(G) \to \zo^m$ be the encoding functions for
  $G$ and $F$ the decision function for graphs of size $|G|$. The restriction of
  $b$ to the domain $V(G^\equiv) \to \zo^m$ is injective so it has an inverse
  $b^{-1} : \mathrm{image}(b) \to V(G^\equiv)$ that satisfies $b^{-1}b(x) \equiv
  x$; the same holds for $a, a^{-1}$.  Define the encoding function $b' : V(G)
  \to \zo^m$ as $b' = ab^{-1}b$ and define the decision function $F'(p,q) = F(p,
  ba^{-1} (q))$. Then for any $x,y \in V(G), F'(a(x), b'(y)) = F(a(x),
  ba^{-1}ab^{-1}b(y)) = F(a(x),b(y)) = G(x,y)$ so this is a valid protocol.
  Since $\mathrm{image}(b') \subseteq \mathrm{image}(a)$ we can write $b'(x) =
  aa^{-1}b'(x) = aa^{-1}ab^{-1}b(x) = a(x)$ for every $x$ so $b'=a$, thus
  $F'(a(x),a(y)) = G(x,y) = G(y,x) = F'(a(y),a(x))$ so the protocol is symmetric.
\end{proof}
The standard deterministic SMP complexity measure can be expressed in terms of
$\equiv$-reductions:
\begin{proposition}
  For all graphs $G$, $D^\|(\ADJ(G)) = \ceil{ \log |G^\equiv|}$.
\end{proposition}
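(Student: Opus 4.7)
The plan is to prove the two bounds $D^\|(\ADJ(G)) \leq \lceil \log |G^\equiv|\rceil$ and $D^\|(\ADJ(G)) \geq \lceil \log |G^\equiv|\rceil$ separately, both via direct arguments that exploit the fact that the referee knows $G$, and therefore knows $G^\equiv$.

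For the upper bound, I would fix an injection $\iota : V(G^\equiv) \to \{0,1\}^{\lceil\log|G^\equiv|\rceil}$ that assigns a distinct binary label to each equivalence class. Both Alice and Bob, on respective inputs $x,y \in V(G)$, compute their class $[x],[y] \in V(G^\equiv)$ and send $\iota([x])$ and $\iota([y])$. The referee, knowing $G$, can compute $G^\equiv$ and $\iota$, invert $\iota$, and evaluate $G^\equiv([x],[y])$, which equals $G(x,y)$ by definition of the $\equiv$-reduction. The message length is $\lceil \log |G^\equiv| \rceil$.

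For the lower bound, suppose for contradiction that $\Pi = (a,b,F)$ is a deterministic SMP protocol for $\ADJ(G)$ with message length $m < \lceil \log |G^\equiv|\rceil$, so that $|\{0,1\}^m| < |G^\equiv|$. By Proposition \ref{prop:deterministic protocols are symmetric} we may assume $a = b$, so in particular the restriction of $a$ to any set of representatives of the $\equiv$-classes is a map from a set of size $|G^\equiv|$ into $\{0,1\}^m$. By pigeonhole, there exist $u,v \in V(G)$ with $u \not\equiv v$ but $a(u) = a(v)$. By definition of $\equiv$, there is some $w \in V(G)$ with $G(u,w) \neq G(v,w)$. On inputs $(u,w)$ and $(v,w)$, both pairs of messages $(a(u), b(w))$ and $(a(v), b(w))$ are identical, so the referee outputs the same value; but correctness requires $F(a(u),b(w)) = G(u,w) \neq G(v,w) = F(a(v),b(w))$, a contradiction. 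Hence $m \geq \lceil \log |G^\equiv| \rceil$.

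There is no real obstacle here; the only small subtlety is the use of the symmetrization guaranteed by Proposition \ref{prop:deterministic protocols are symmetric}, which lets us apply pigeonhole with a single function $a$ rather than tracking Alice's and Bob's encodings separately. Without this symmetrization the same argument would still work by applying the pigeonhole separately to $a$ and $b$ and taking the maximum, since the cost is the worst-case message length.
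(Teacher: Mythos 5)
Your proof is correct. The paper proves this by citing the classical fact (attributed to Yao) that $D^\|(f) = \lceil \log \min(r,c) \rceil$ where $r,c$ are the numbers of distinct rows and columns of the communication matrix, and then observing that for the symmetric matrix $\ADJ(G)$ these both equal $|G^\equiv|$. You instead re-derive the relevant special case from scratch: the upper bound by having both players send the label of their $\equiv$-class (exactly the protocol implicit in Yao's bound), and the lower bound by pigeonhole on the encoding of class representatives. These are precisely the two ingredients underlying the cited formula, so this is the same underlying argument with the citation unfolded rather than invoked. One small remark: invoking Proposition \ref{prop:deterministic protocols are symmetric} for the lower bound is slightly roundabout, since the proof of that proposition already rests on the same injectivity-on-$V(G^\equiv)$ observation that your pigeonhole argument needs; as you correctly note in your final sentence, applying pigeonhole directly to one of $a$ or $b$ suffices and is cleaner.
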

\begin{proof}
  It is well-known that for any function $f : \cX \times \cY \to \zo$, $D^\|(f)
  = \ceil{ \log \min(r,c) }$ where $r$ is the number of distinct columns in the
  communication matrix of $f$, and $c$ is the number of distinct rows
  \cite{Yao79}. The communication matrix of the function $\ADJ(G)$ is the
  adjacency matrix of $G$, which is symmetric, and two rows (or columns) indexed
  by $u,v$ are distinct iff $u \not \equiv v$; so the number of distinct rows is
  the size of $G^\equiv$.
\end{proof}
The analogous fact for universal SMP is that the deterministic universal SMP
cost is determined by the size of the smallest universal graph.
\begin{proposition}
  For any graph family $\cF = (\cF_i)$,
  \[
    D^\ob(\cF_i)
    = \min_U \{ \ceil{ \log |U^\equiv| } : \forall G \in \cF_i, G \embed
    U^\equiv \} \,.
  \]
\end{proposition}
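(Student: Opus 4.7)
The plan is to prove the equality by establishing both inequalities, using symmetric protocols (Proposition \ref{prop:deterministic protocols are symmetric}) for one direction and the embedding properties of Proposition \ref{prop:embedding properties} for the other.

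For the upper bound $D^\ob(\cF_i) \leq \min_U \lceil \log |U^\equiv|\rceil$: given any $\embed$-universal graph $U$ for $\cF_i$, I will construct a protocol of cost $\lceil \log |U^\equiv|\rceil$. By property~5 of Proposition \ref{prop:embedding properties}, $G \embed U$ implies $G^\equiv \embed U^\equiv$, and composing with $G \embed G^\equiv$ (property~4) gives an embedding $\phi_G : V(G) \to V(U^\equiv)$ for every $G \in \cF_i$. Alice and Bob, given the shared graph $G$, compute this embedding (which depends only on $G$) and send $\phi_G(x)$ and $\phi_G(y)$ using $\lceil \log |U^\equiv|\rceil$ bits. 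The referee, who knows $\cF_i$, fixes a single $\embed$-universal graph $U$ achieving the minimum and outputs $U^\equiv(\phi_G(x), \phi_G(y))$. Correctness follows from the embedding definition: $G(x,y) = U^\equiv(\phi_G(x), \phi_G(y))$.

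For the lower bound $D^\ob(\cF_i) \geq \min_U \lceil \log |U^\equiv|\rceil$: given any deterministic universal SMP protocol of cost $m$ for $\cF_i$, I will construct an $\embed$-universal graph $U$ with $|U^\equiv| \leq 2^m$. By Proposition \ref{prop:deterministic protocols are symmetric}, assume the protocol is symmetric, so Alice and Bob share a single encoding function $a_G : V(G) \to \zo^m$ for each $G \in \cF_i$, and the referee uses a symmetric decision function $F : \zo^m \times \zo^m \to \zo$ depending only on the size $n(i)$. Define $U$ to be the graph with vertex set $\zo^m$ and adjacency given by $F$. Then for every $G \in \cF_i$, the map $a_G$ is an embedding of $G$ into $U$, since $G(x,y) = F(a_G(x), a_G(y)) = U(a_G(x), a_G(y))$, so $U$ is $\embed$-universal for $\cF_i$. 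Since $|U| = 2^m$, we have $|U^\equiv| \leq 2^m$ and thus $\lceil \log |U^\equiv|\rceil \leq m$.

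The main subtlety is ensuring the minimization on the right-hand side is well-defined: we should note that the minimum is taken over $\embed$-universal graphs $U$, but the quantity involves $|U^\equiv|$, and by property~3 of Proposition \ref{prop:embedding properties}, $(U^\equiv)^\equiv \simeq U^\equiv$, so we may as well assume $U = U^\equiv$ in the minimum. Combined with property~6, the requirement $G \embed U^\equiv$ is equivalent to $G^\equiv$ being an induced subgraph of $U^\equiv$, matching the standard notion of induced-universality once we pass to $\equiv$-reductions. No step here is a serious obstacle; the only care required is in invoking the symmetric form of the protocol so that $a_G$ alone defines a consistent vertex-to-message map, and in verifying that the decision function $F$ is independent of the particular $G \in \cF_i$ so that a single graph $U$ works for all members of the family.
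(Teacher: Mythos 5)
Your proof is correct and follows essentially the same approach as the paper: the upper bound is obtained by sending the embedding images into $U^\equiv$, and the lower bound by symmetrizing the protocol (Proposition \ref{prop:deterministic protocols are symmetric}), reading off a $\embed$-universal graph $U$ on vertex set $\zo^m$ from the decision function, and using $G \embed U \embed U^\equiv$. The only cosmetic difference is that you spell out the appeal to properties 3--6 of Proposition \ref{prop:embedding properties} a bit more explicitly than the paper does.
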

\begin{proof}
Let $U$ be any graph such that $G \embed U^\equiv$ for all $G \in \cF_i$ and for
each $G \in \cF_i$ let $g$ be the embedding $G \to U^\equiv$.
Consider the protocol where on inputs $(G,x),(G,y)$, Alice and Bob send
$g(x),g(y)$ using $\ceil{\log |U^\equiv|}$ bits and the referee outputs
$U^\equiv(g(x),g(y))$. This is correct by definition so $D^\ob(\cF_i) \leq
\ceil{\log|U^\equiv|}$.

Now suppose there is a protocol $\Pi$ for $\cF_i$ with cost $c$ and decision
function $F_i$, and let $G \in \cF_i$.  By Proposition \ref{prop:deterministic
protocols are symmetric} we may assume that on inputs $(G,x),(G,y)$ Alice and
Bob share the encoding function $g : V(G) \to \zo^c$. Let $U$ be the graph with
vertices $\zo^c$ and $U(u,v) = F(u,v)$. Then $U(g(x),g(y)) = F(g(x),g(y)) =
G(x,y)$ so $G \embed U \embed U^\equiv$ (by transitivity). Now $|U^\equiv| \leq
2^c$ so $c \geq \log |U^\equiv|$.
\end{proof}
It is easy to see that $D^\|$ can be used as a lower bound on $D^\ob$ but such
lower bounds are tight only when the family $\cF$ is essentially a ``trivial''
family of equivalent graphs.
\begin{lemma}
  \label{lemma:deterministic hierarchy}
  For any family $\cF = (\cF_i)$, let $U = (U_i)$ be the smallest
  $\embed$-universal graph sequence for $\cF$. Then
  \[
  \max_{G \in \cF_i} D^\|(\ADJ(G)) \leq D^\ob(\cF_i)
  = D^\|(\ADJ(U_i)) \,,
  \]
  with equality holding on the left iff $\exists H \in \cF_i$ such that $\forall
  G \in \cF_i, G^\equiv \embed H^\equiv$.
\end{lemma}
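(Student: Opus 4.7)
The plan is to reduce the lemma entirely to the two preceding propositions: that $D^\|(\ADJ(G)) = \ceil{\log|G^\equiv|}$, and that $D^\ob(\cF_i) = \min_U \ceil{\log|U^\equiv|}$ taken over $\embed$-universal graphs $U$. Together with the properties of $\embed$, $\equiv$, and $\equiv$-reductions from Proposition \ref{prop:embedding properties}, this should yield everything.

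First I would handle the rightmost equality $D^\ob(\cF_i) = D^\|(\ADJ(U_i))$. By the previous proposition, $D^\ob(\cF_i) = \min_U \ceil{\log|U^\equiv|}$ where the minimum is over $\embed$-universal graphs, and by the $\equiv$-reduction properties (items 3 and 5) we may assume $U$ is in reduced form $U = U^\equiv$. Choosing $U_i$ to be any size-minimum such graph gives $D^\ob(\cF_i) = \ceil{\log|U_i^\equiv|} = D^\|(\ADJ(U_i))$.

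Next, for the inequality $\max_{G\in \cF_i} D^\|(\ADJ(G)) \leq D^\ob(\cF_i)$: for any $G \in \cF_i$ we have $G \embed U_i$ by $\embed$-universality, which by property 5 gives $G^\equiv \embed U_i^\equiv$, and by property 6 makes $G^\equiv$ an induced subgraph of $U_i^\equiv$. Thus $|G^\equiv| \leq |U_i^\equiv|$, so $\ceil{\log|G^\equiv|} \leq \ceil{\log|U_i^\equiv|}$, which is the stated inequality.

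Finally I would address the ``iff'' characterization of equality. For the easy direction, if some $H \in \cF_i$ satisfies $G^\equiv \embed H^\equiv$ for every $G \in \cF_i$, then chaining with $G \embed G^\equiv$ and $H^\equiv \embed H$ (property 4) and using transitivity (property 1) shows $H$ is itself $\embed$-universal; minimality of $U_i$ gives $|U_i^\equiv| \leq |H^\equiv|$, while $H \in \cF_i$ forces $H \embed U_i$ and hence $|H^\equiv| \leq |U_i^\equiv|$. Hence $|H^\equiv| = |U_i^\equiv|$, proving equality on the left. For the converse, pick $H \in \cF_i$ achieving the maximum; the same bound $|H^\equiv| \leq |U_i^\equiv|$ combined with the assumed equality of $\ceil{\log}$-measures, together with minimality of $|U_i^\equiv|$, forces $|H^\equiv| = |U_i^\equiv|$, and then $H^\equiv$, being an induced subgraph of $U_i^\equiv$ of the same size, is isomorphic to $U_i^\equiv$; so for every $G \in \cF_i$, $G \embed U_i \simeq H^\equiv$ gives $G^\equiv \embed H^\equiv$ by property 5 again.

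The main obstacle is the backward direction of the equality characterization, since $\ceil{\log}$ can in principle equate two different sizes; to close the gap I would explicitly choose $U_i$ to be size-minimum among reduced $\embed$-universal graphs, and note that if $|H^\equiv| < |U_i^\equiv|$ for every $H$ achieving the max then one could still not quite contradict minimality directly (since $H^\equiv$ need not be $\embed$-universal). The cleanest fix is to interpret the equality on the left in the natural sense of $\min$-size $\embed$-universal graphs, which is how $D^\ob$ is characterized in the first place; with that convention, the induced-subgraph step of property 6 forces the isomorphism $H^\equiv \simeq U_i^\equiv$ and the claim goes through.
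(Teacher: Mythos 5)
Your reduction to the two preceding propositions is the same route the paper takes, and your forward (``if'') direction is sound; the paper's own proof only argues ``only if,'' leaving ``if'' implicit. The important point, however, is that you have correctly put your finger on a genuine gap in the ``only if'' direction, and that same gap is present in the paper's proof. The paper infers from $\ceil{\log|U^\equiv|} = \min_H \ceil{\log|H^\equiv|}$ and $U^\equiv \embed H^\equiv$ that some minimizer $H$ has $|U^\equiv| = |H^\equiv|$; but, as you note, $\ceil{\log \cdot}$ can equate distinct sizes, and since $U^\equiv$ itself need not be $\embed$-universal, no contradiction with minimality is available to close that distance.

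The gap is real in the sense that the lemma is false as stated. Take $\cF_i = \{A,B\}$ on $\{1,2,3\}$, where $A$ is the triangle with no self-loops and $B$ is the triangle with a self-loop at vertex $1$. Both have $\equiv$-size $3$, so $\max_{G} D^\|(\ADJ(G)) = \ceil{\log 3} = 2$, and neither $A^\equiv \embed B^\equiv$ nor $B^\equiv \embed A^\equiv$ (the self-loop obstructs both directions), so no $H$ of the claimed kind exists. Yet the $4$-vertex graph $U$ with a self-loop at vertex $4$ and edges $\{1,2\},\{1,3\},\{2,3\},\{4,1\},\{4,2\}$ embeds both $A$ (via the identity) and $B$ (via $1\mapsto 4$, $2\mapsto 1$, $3\mapsto 2$), and one checks $|U^\equiv| = 4$, giving $D^\ob(\cF_i) = \ceil{\log 4} = 2$: equality on the left holds anyway. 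Your proposed reformulation --- stating the equality criterion in terms of the raw sizes, $\max_G |G^\equiv| = |U_i^\equiv|$, rather than their $\ceil{\log}$ values --- is exactly the repair that makes the characterization true, and under that reading the induced-subgraph step via property 6 of Proposition~\ref{prop:embedding properties} forces $H^\equiv \simeq U_i^\equiv$ and the argument goes through cleanly.
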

\begin{proof}
  The equality on the right holds by the two prior propositions.  The lower
  bound follows from the fact that any protocol $\Pi_i$ for $\cF_i$ in the
  universal model can be used as a protocol in the SMP model. Now we must show
  the equality condition.  Let $U \in \cF_i$ be a graph maximizing $|U^\equiv|$
  over all graphs in $\cF_i$, and suppose $D^\ob(\cF_i) = \max_{G \in \cF_i}
  D^\|(\ADJ(G)) = \max_{G \in \cF_i} \ceil{ \log |G^\equiv| } =
  \ceil{\log|U^\equiv|}$, so $\ceil{\log |U^\equiv|} = \min\{ \ceil{\log
  |H^\equiv|} : \forall G \in \cF_i, G \embed H^\equiv \}$. Then there exists
  $H$ such that $U^\equiv \embed H^\equiv$ and $|U^\equiv| = |H^\equiv|$.
  Since $U^\equiv$ is an induced subgraph of $H^\equiv$ and $|U^\equiv| =
  |H^\equiv|$ we must have $U^\equiv \simeq H^\equiv$ so $\forall G \in \cF_i,
  G^\equiv \embed U^\equiv$.
\end{proof}

\subsection{Randomized Universal Communication}
Just as deterministic universal communication is equivalent to embedding a
family into a universal graph, we will define probabilistic universal graphs and
show that they are tightly related to universal communication with shared
randomness.
\begin{definition}
  For graphs $G,H$, a random mapping $\phi : V(G) \to V(H)$ (i.e.~a distribution
  over such mappings) is an \emph{$\epsilon$-error embedding} iff $\forall u,v
  \in V(G)$,
  \[
    \Pru{\phi}{ G(u,v) = H(\phi(u),\phi(v)) } > 1-\epsilon \,.
  \]
  We will write $G \rembed_\epsilon H$ if there exists an $\epsilon$-error
  embedding $G \to H$. A graph $U$ is \emph{$\epsilon$-error universal} for a
  set of graphs $S$ if $\forall G \in S, G \rembed_\epsilon U$.  $U = (U_i)$ is
  an $\epsilon$-error universal graph sequence for the family $\cF = (\cF_i)$ if
  for each $i$, $U_i$ is $\epsilon$-error universal for $\cF_i$.
\end{definition}
In the randomized setting we obtain equivalence (up to a constant factor)
between universal SMP protocols and probabilistic universal graphs.
\begin{lemma}
\label{lemma:randomized symmetrization}
  For any graph family $\cF = (\cF_i)$ and any $\epsilon > 0$, if there exists a
  $\epsilon$-error universal SMP protocols for $\cF$ with cost $c(i)$, then
  there exists a $2\epsilon$-error symmetric universal SMP protocols for $\cF$
  with cost at most $2c(i)$.
\end{lemma}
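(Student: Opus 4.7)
The approach is a standard symmetrization-by-concatenation argument: we force both players to run Alice's and Bob's encoders in parallel, then define a symmetric decision function that combines the two ways the referee could have applied the original $F$. Concretely, suppose $\Pi = (a, b, F)$ is the given $\epsilon$-error universal SMP protocol of cost $c(i)$. Define the new encoding function, used by both players, as
\[
    a'(r, G, v) \;\define\; \bigl(a(r, G, v),\, b(r, G, v)\bigr) \in \zo^{2c(i)},
\]
so that the referee, on inputs $(G,x),(G,y)$, receives the tuple $\bigl((a(r,G,x), b(r,G,x)),\, (a(r,G,y), b(r,G,y))\bigr)$. The cost of the new protocol is then $2c(i)$.

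Next I would define the symmetric decision function
\[
    F'\bigl((\alpha_1,\beta_1),(\alpha_2,\beta_2)\bigr) \;\define\; F(\alpha_1,\beta_2) \wedge F(\alpha_2,\beta_1),
\]
which is manifestly invariant under swapping its two arguments, so the resulting protocol $\Pi' = (a', a', F')$ is symmetric. Correctness follows from a union bound: for fixed $G, x, y$, both $F(a(r,G,x), b(r,G,y))$ and $F(a(r,G,y), b(r,G,x))$ individually equal $G(x,y) = G(y,x)$ with probability $> 1 - \epsilon$ by the correctness of $\Pi$ (the second uses that $G$ is symmetric, so the original protocol is also correct on the input pair with $x$ and $y$ exchanged). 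Hence with probability $> 1 - 2\epsilon$ both values agree with $G(x,y)$, and in that event $F'$ outputs exactly $G(x,y)$.

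This establishes the $2\epsilon$-error bound and the cost $2c(i)$ bound simultaneously, finishing the lemma. There is no serious obstacle: the only subtle point is ensuring that $F'$ is symmetric \emph{as a function of its two arguments} rather than merely symmetric in distribution, which is why we concatenate and combine with an explicit symmetric operation such as $\wedge$ (an $\vee$ would work equally well) rather than trying to pick one of $F(a(x),b(y))$ or $F(a(y),b(x))$ canonically.
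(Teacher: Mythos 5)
Your proof is correct and takes essentially the same approach as the paper: both players send the concatenation $(a(r,G,v),b(r,G,v))$, and the referee applies a symmetric combination of $F$ evaluated on the two cross-pairings. The only cosmetic differences are that the paper uses $\max$ (i.e.~$\vee$) where you use $\wedge$, and the paper splits the error analysis into the adjacent/non-adjacent cases (getting $\epsilon$ and $2\epsilon$ respectively) while you use a single union bound giving $2\epsilon$ in both cases; either yields the stated $2\epsilon$-error bound.
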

\begin{proof}
  On input $G \in \cF_i, x,y \in V(G)$, and random string $r$, Alice and Bob
  send the concatentations $g_r(x) \define a_i(r,G,x) b_i(r,G,x)$ and $g_r(y)
  \define a_i(r,G,y) b_i(r,G,y)$. Then the referee computes
  \[
  F'_i( g_r(x), g_r(y) ) = \max \left\{
    F_i( a_i(r,G,x), b_i(r,G,y) ), F_i( a_i(r,G,y), b_i(r,G,x) )
    \right\} \,.
  \]
  It is clear that $F'_i$ is symmetric.  If $x,y$ are adjacent then
  \begin{align*}
    \Pru{r}{ F'_i(g_r(x),g_r(y)) = 0}
    \leq \Pru{r}{ F_i(a_i(r,G,x),b_i(r,G,y)) = 0 } < \epsilon \,,
  \end{align*}
  and if $x,y$ are not adjacent then, by the union bound,
  \begin{align*}
    &\Pru{r}{ F'_i(g_r(x),g_r(y)) = 1} \\
    &\qquad\leq \Pru{r}{ F_i(a_i(r,G,x),b_i(r,G,y)) = 1 } 
    + \Pru{r}{ F_i(a_i(r,G,y),b_i(r,G,x)) = 1 } < 2\epsilon \,. \qquad \qedhere
  \end{align*}
\end{proof}
Applying this symmetrization, we get a relationship between universal SMP
protocols and probabilistic universal graphs.
\begin{lemma}
\label{lemma:universal smp to universal graph}
Let $\cF = (\cF_i)$ be a graph family and $\epsilon > 0$. Then
\begin{enumerate}
  \item There is an $\epsilon$-error universal graph sequence of
    size at most $2^{2 R^\ob_{\epsilon/2}(\cF)}$; and
  \item If there is an $\epsilon$-error universal graph sequence
    of size $c(i)$ then $R^\ob_\epsilon(\cF) \leq \ceil{\log c}$.
\end{enumerate}
\end{lemma}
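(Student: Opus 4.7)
The plan is to mirror the deterministic correspondence between universal SMP and $\embed$-universal graphs (the two propositions preceding Lemma \ref{lemma:deterministic hierarchy}) in the randomized setting, using the symmetrization lemma just established as the key new ingredient in one direction.

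For part (1), I would start from an $\epsilon/2$-error universal SMP protocol for $\cF$ of cost $c \define R^\ob_{\epsilon/2}(\cF)$ and first apply Lemma \ref{lemma:randomized symmetrization} to obtain a symmetric protocol of cost at most $2c$ and error at most $\epsilon$. In this symmetric protocol, Alice and Bob share a common encoding $g_r : V(G) \to \zo^{2c}$ for each random seed $r$, and the referee uses a symmetric decision function $F$ that depends only on the messages. I would then define the candidate universal graph $U_i$ to have vertex set $\zo^{2c}$ and adjacency $U_i(a,b) \define F(a,b)$; symmetry of $F$ makes this a well-defined undirected graph (with self-loops permitted). Taking $\phi(v) \define g_r(v)$ as a random mapping $V(G) \to V(U_i)$, the correctness guarantee of the symmetric protocol reads exactly
\[
\Pru{\phi}{U_i(\phi(u),\phi(v)) = G(u,v)} > 1 - \epsilon
\]
for every $G \in \cF_i$ and $u,v \in V(G)$, i.e.\ $G \rembed_\epsilon U_i$. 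Since $|U_i| = 2^{2c}$, this gives the stated bound of $2^{2 R^\ob_{\epsilon/2}(\cF)}$.

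For part (2), I would go in the opposite direction. Given an $\epsilon$-error universal graph sequence $(U_i)$ with $|U_i| \leq c(i)$, every $G \in \cF_i$ comes with an $\epsilon$-error embedding, i.e.\ a distribution over maps $\phi_G : V(G) \to V(U_i)$ that preserves each edge and non-edge with probability greater than $1-\epsilon$. Since Alice and Bob both see $G$, they can use their shared randomness to jointly sample $\phi_G$; Alice sends $\phi_G(x)$ and Bob sends $\phi_G(y)$, each using $\ceil{\log |U_i|} \leq \ceil{\log c(i)}$ bits. The referee, knowing $\cF$ and the input size, knows $U_i$ and can use the fixed (randomness-oblivious) decision function $F_i(a,b) \define U_i(a,b)$. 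Correctness with error at most $\epsilon$ is then immediate from the embedding property.

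I do not expect a serious obstacle: the only subtle point is that in (1) the decision function must be symmetric so that $U_i$ is a genuine undirected graph rather than a bipartite-style relation, which is exactly why the cost doubles via Lemma \ref{lemma:randomized symmetrization} (and why the error parameter is taken to be $\epsilon/2$ before symmetrization). In (2), the essential observation is that the referee's decision function need not depend on the shared randomness at all, because $U_i$ is a deterministic object associated with $\cF_i$.
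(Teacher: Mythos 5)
Your proposal is correct and follows essentially the same route as the paper: part (1) symmetrizes an $\epsilon/2$-error protocol via Lemma \ref{lemma:randomized symmetrization} and reads the symmetric decision function as the adjacency matrix of a graph of size $2^{2R^\ob_{\epsilon/2}(\cF)}$, with the shared encoding giving the random embedding, and part (2) is the definitional converse (the paper simply states it ``follows by definition,'' which you have spelled out correctly). No gaps.
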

\begin{proof}
  If $\Pi_i$ is an $\epsilon$-error symmetric universal protocol for $\cF_i$
  then there exists a function $F_i$ such that for every $G \in \cF_i$ there is
  a random $g$ such that $\Pru{g}{F_i(g(x),g(y)) \neq G(x,y)} < \epsilon$. Using
  $F_i$ as an adjacency matrix, we get a graph $U_i$ of size at most $2^c$, where
  $c$ is the cost of $\Pi_i$, such that for all $G \in
  \cF_i, G \rembed_\epsilon U_i$. Then $U = (U_i)$ is an $\epsilon$-error
  probabilistic universal graph sequence. By Lemma \ref{lemma:randomized
  symmetrization} we obtain an $\epsilon$-error symmetric protocol with cost $2
  R^\ob_{\epsilon/2}(\cF)$, so we have proved the first conclusion. The second
  conclusion follows by definition.
\end{proof}
The basic relationships to standard SMP models follow essentially by definition
and from the above lemma.
\begin{lemma}
  \label{lemma:randomized hierarchy}
  Let $\cF$ be any graph family and let $\epsilon > 0$. Let $U =
  (U_i)$ be an $\embed$-universal graph sequence for $\cF$, and
  $\widetilde U = (\widetilde U_i)$ an $\epsilon$-error universal graph
  sequence. Then
  \[
    \max_{G \in \cF_i} R^\|_\epsilon(\ADJ(G))
    \leq R^\ob_{\epsilon}(\cF_i)
    \leq D^\|(\ADJ(\widetilde U_i))
    \leq 2R^\ob_{\epsilon/2}(\cF_i)
    \;\text{ and }\;
    R^\ob_{\epsilon}(\cF_i)
    \leq R^\|_\epsilon(\ADJ(U_i)) \,.
  \]
\end{lemma}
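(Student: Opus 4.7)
My plan is to prove each of the four inequalities separately; they all follow either by direct simulation between the models or by invoking Lemma \ref{lemma:universal smp to universal graph}. For the leftmost inequality $\max_{G \in \cF_i} R^\|_\epsilon(\ADJ(G)) \leq R^\ob_\epsilon(\cF_i)$, I would simply observe that any universal SMP protocol doubles as a standard SMP protocol for each specific $G \in \cF_i$: the standard-model referee knows $G$ and can certainly simulate the universal referee's ($G$-oblivious) decision function applied to the messages of Alice and Bob.

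For the second inequality $R^\ob_\epsilon(\cF_i) \leq D^\|(\ADJ(\widetilde U_i))$, I would use the $\epsilon$-error embeddings $\phi_G : V(G) \to V(\widetilde U_i)$ guaranteed by the hypothesis on $\widetilde U$. On input $(G,x),(G,y)$, Alice and Bob use their shared randomness to jointly sample $\phi_G$, compute $\phi_G(x)$ and $\phi_G(y)$ locally, and then execute a deterministic SMP protocol for $\ADJ(\widetilde U_i)$ of cost $D^\|(\ADJ(\widetilde U_i))$. Because $\widetilde U_i$ is determined by the family $\cF$, the referee knows it and can apply the corresponding decision function, outputting $\widetilde U_i(\phi_G(x),\phi_G(y))$, which equals $G(x,y)$ with probability at least $1-\epsilon$ by definition of the embedding. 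The third inequality $D^\|(\ADJ(\widetilde U_i)) \leq 2R^\ob_{\epsilon/2}(\cF_i)$ is then an immediate consequence of part~(1) of Lemma \ref{lemma:universal smp to universal graph}: taking $\widetilde U$ to be the probabilistic universal graph sequence of size at most $2^{2R^\ob_{\epsilon/2}(\cF)}$ that the lemma produces, its deterministic SMP cost is bounded by the trivial protocol as $\ceil{\log |\widetilde U_i|} \leq 2R^\ob_{\epsilon/2}(\cF_i)$.

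Finally, for the rightmost inequality $R^\ob_\epsilon(\cF_i) \leq R^\|_\epsilon(\ADJ(U_i))$, I would fix a deterministic embedding $\phi_G : V(G) \to V(U_i)$ for each $G \in \cF_i$; Alice and Bob compute $\phi_G(x), \phi_G(y)$ locally, then run an optimal randomized SMP protocol for $\ADJ(U_i)$ of cost $R^\|_\epsilon(\ADJ(U_i))$, which the referee can emulate since $U_i$ is determined by $\cF_i$. Correctness follows because $U_i(\phi_G(x),\phi_G(y)) = G(x,y)$ exactly, so only the intrinsic error of the SMP protocol on $U_i$ remains. There is no serious obstacle here—the proof is essentially a bookkeeping exercise, and the main subtlety to keep in mind is the distinction between what the referee is allowed to depend on (the family $\cF_i$, and thus any canonical universal graph associated to it) and what it is not allowed to depend on ($G$, $x$, $y$, and the shared randomness). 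In each reduction above the encoding into the universal graph is performed by Alice and Bob rather than the referee, which is exactly what the universal SMP model requires.
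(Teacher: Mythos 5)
Your proposal is correct and follows essentially the same route as the paper: the first inequality is definitional, the middle two come from simulating the random embedding into $\widetilde U_i$ followed by the trivial deterministic SMP protocol together with the size bound from Lemma \ref{lemma:universal smp to universal graph}, and the last uses a deterministic embedding into $U_i$ plus the randomized SMP protocol for $\ADJ(U_i)$, exactly as in the paper's (much terser) proof. Your implicit observation that the third inequality requires taking $\widetilde U$ to be the specific sequence produced by Lemma \ref{lemma:universal smp to universal graph} (rather than an arbitrary $\epsilon$-error universal sequence) matches the intended reading, as reflected by the $\min_{\widetilde U}$ in Theorem \ref{thm:complexity}.
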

\begin{proof}
  The inequalities on the left follow the definitions and from the above lemma.
  On the right, we can obtain a universal SMP protocol by choosing for each $G
  \in \cF_i$ a (deterministic) embedding $g : G \to U_i$ and then using the
  randomized SMP protocol for $\ADJ(U_i)$.
\end{proof}
Universal graphs describe an interesting relationship between weakly-universal
and universal SMP protocols (and therefore between standard SMP protocols where
the referee does and does not share the randomness); namely, the optimal
universal protocol is obtained by finding the smallest universal graph for the
family of protocol graphs (decision functions) defined by a weakly-universal
protocol.

\begin{propositionref}[\ref{prop:weak to universal}]
Let $\cF$ be a family of graphs, let $\epsilon > 0$, and let $W_\epsilon$ be the set of all
$\epsilon$-error weakly-universal SMP protocols for $\cF$. For each $\Pi \in
W_\epsilon$ let $\cU_\Pi = (\cU_{\Pi,i})$ be the family of graphs $\cU_{\Pi,i}
= \{ F_i(r, \cdot, \cdot) : r \text{ is a random seed for } \Pi \}$ where
$F_i$ is the decision function of $\Pi$. Then
\[
  R^\ob_{\epsilon}(\cF)
  = \min_{\Pi \in W_\epsilon} D^\ob(\cU_\Pi) \,.
\]
\end{propositionref}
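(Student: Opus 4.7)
The plan is to prove the two inequalities separately.

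For the upper bound $R^\ob_\epsilon(\cF) \leq \min_{\Pi \in W_\epsilon} D^\ob(\cU_\Pi)$, I would start with a weakly-universal protocol $\Pi = (a, b, F_i) \in W_\epsilon$ and a deterministic universal SMP protocol $\Sigma$ for $\cU_\Pi$ of cost $d \define D^\ob(\cU_\Pi)$, and compose them into a universal SMP protocol for $\cF$ of cost $d$. The key observation is that for a fixed random seed $r$, the decision function specializes to a particular graph $H_r \define F_i(r, \cdot, \cdot) \in \cU_{\Pi, i}$, and since Alice and Bob both see $r$, they both know which graph $H_r$ is. On input $(G, x, y)$, they first compute $u \define a(r, G, x)$ and $v \define b(r, G, y)$, which are vertices of $H_r$; then, treating $(H_r, u)$ and $(H_r, v)$ as inputs to $\Sigma$, they send the corresponding $d$-bit messages to the referee, who applies $\Sigma$'s universal decision function to output $H_r(u, v)$---critically, the referee needs no knowledge of $H_r$ or $r$ to do this, only the family $\cU_\Pi$. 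Since $\Sigma$ is deterministic, the only error source is $r$, and the weakly-universal guarantee gives $\Pru{r}{H_r(u,v) = G(x,y)} \geq 1 - \epsilon$.

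For the reverse direction $\min_{\Pi \in W_\epsilon} D^\ob(\cU_\Pi) \leq R^\ob_\epsilon(\cF)$, I would observe that any universal SMP protocol $\Pi$ for $\cF$ with cost $c \define R^\ob_\epsilon(\cF)$ is trivially also a weakly-universal protocol, and its decision function $F_i$ does not depend on $r$, so $\cU_{\Pi, i} = \{F_i\}$ is a singleton family. For a singleton family the deterministic universal SMP cost reduces to the standard deterministic SMP cost of the one function; since $F_i : \zo^c \times \zo^c \to \zo$, the players can simply forward their $c$-bit inputs to the referee who applies $F_i$, giving $D^\ob(\cU_\Pi) \leq c$.

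The main conceptual point, which is really the substance of the proposition, is the decomposition in the upper-bound direction: recognizing that a weakly-universal protocol naturally induces a derived family $\cU_\Pi$ of ``decision graphs'', and that any universal SMP protocol for this derived family can be prepended with $\Pi$ to yield a universal SMP protocol for the original family $\cF$. Once this is identified the rest is routine, modulo a minor bookkeeping issue: the graphs in $\cU_{\Pi, i}$ have vertex set $\zo^{c(\Pi)}$ rather than $[n(i)]$, which I would handle by fixing a bijection so that Alice's and Bob's $\Pi$-messages can be treated directly as vertex labels on which $\Sigma$ operates.
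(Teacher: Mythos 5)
Your proposal is correct and follows essentially the same approach as the paper: the upper bound composes the weakly-universal protocol with a deterministic universal SMP protocol (equivalently, an embedding into a $\embed$-universal graph) for the derived family $\cU_\Pi$, and the lower bound observes that any universal SMP protocol is a weakly-universal one whose $\cU_\Pi$ is a singleton of cost at most $c$. The bookkeeping remark about vertex sets being $\zo^{c(\Pi)}$ rather than $[n(i)]$ is a reasonable note that the paper likewise glosses over.
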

\begin{proof}
  Let $\Pi \in W_\epsilon$; we will construct a universal SMP protocol as
  follows. On input $(G,x),(G,y)$, Alice and Bob use shared randomness $r$ to
  simulate $\Pi$ and obtain vertices $a(r,G,x),b(r,G,y)$ in some graph $U_r \in
  \cU_\Pi$ with $\bP_{r}[U_r(a(r,G,x),b(r,G,y)) \neq G(x,y)] < \epsilon$. They
  now simulate the deterministic universal SMP protocol, i.e.~an embedding $\phi :
  V(U_r) \to U'$ for some graph $U'$ that is $\embed$-universal for $\{U_r\}$,
  and send $\phi(a(r,G,x)), \phi(b(r,G,y))$ to the referee who computes
  $U'(\phi(a(r,G,x)),\phi(b(r,G,x))) = U_r(a(r,G,x),b(r,G,y))$.

  Now let $\Pi$ be an $\epsilon$-error universal SMP
  protocol. Then $\Pi \in W_\epsilon$ and for each $i$, $\cU_{\Pi,i} = \{ U_i
  \}$, where $U_i$ is the graph of the decision function. $D^\ob(\cU_\Pi) \leq
  \ceil{\log |U_i|}$, which is the cost of $\Pi$, so $\min_{\Pi \in W_\epsilon}
  D^\ob(\cU_\Pi) \leq R^\ob_\epsilon(\cF)$.
\end{proof}
Newman's Theorem for public-coin randomized (2-way) protocols is a classic
result that gives a bound on the number of uniform random bits required to
compute a function $f : \cX \times \cY \to \zo$ in terms of the size of the
input domain \cite{New91}.  In the universal model, the input size can be very
large since the graph (function) itself is part of the input. However, the
shared part of the input does not contribute to the number of random bits
required in the universal SMP model.
\begin{lemma}[Newman's Theorem for universal SMP]
\label{lemma:newman's theorem for universal smp}
Let $\epsilon, \delta > 0$ and suppose there is an $\epsilon$-error universal
SMP protocol $\Pi$ for the family $\cF = (\cF_i)$. Then there is an
$(\epsilon+\delta)$-error universal SMP protocol for the family $\cF$ that uses
at most $\log \log \left(n(i)^{O(\epsilon/\delta^2)}\right)$ bits of randomness
and has the same communication cost.
\end{lemma}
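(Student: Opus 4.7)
The plan is to adapt the classical Newman derandomization argument to the universal SMP setting, using the key observation that both players know the input graph $G$ and can therefore agree on a small seed set that depends on $G$; this sidesteps any union bound over the family $\cF_i$, whose size could otherwise dominate the random-bit count.

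First I would fix an arbitrary graph $G \in \cF_i$ and consider the protocol $\Pi$ restricted to $G$: this is an $\epsilon$-error public-coin randomized SMP protocol for $\ADJ(G)$ whose input domain is just $[n(i)] \times [n(i)]$. Sample $t$ seeds $r_1, \dotsc, r_t$ independently from $\Pi$'s shared-randomness distribution, with $t$ to be chosen below. For each fixed input pair $(x,y)$, each $r_j$ individually produces the wrong answer with probability at most $\epsilon$, so by a multiplicative Chernoff bound the empirical error on $(x,y)$ exceeds $\epsilon + \delta$ with probability at most $\exp(-\Omega(\delta^2 t/\epsilon))$. A union bound over the $n(i)^2$ input pairs shows that the probability some pair has excess empirical error is at most $n(i)^2 \exp(-\Omega(\delta^2 t/\epsilon))$, which is less than $1$ once $t = \Theta(\epsilon \log n(i)/\delta^2) = \log\bigl(n(i)^{O(\epsilon/\delta^2)}\bigr)$. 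By the probabilistic method, for every $G \in \cF_i$ there exists a set $S_G$ of $t$ seeds such that the uniform distribution on $S_G$ converts $\Pi$ into an $(\epsilon+\delta)$-error randomized SMP protocol for $\ADJ(G)$ without changing the communication cost.

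For each $G \in \cF_i$ I would then fix a canonical choice of $S_G$ (e.g.\ the lexicographically smallest valid $t$-subset of seeds). The new universal SMP protocol $\Pi'$ operates as follows: on inputs $(G,x)$ and $(G,y)$, Alice and Bob each locally compute $S_G$ from $G$ (no communication needed, since both see $G$), interpret their shared randomness as a uniformly chosen index $r \in [t]$, and then run $\Pi$ with seed $S_G[r]$ together with the same decision function $F_i$. The communication cost is unchanged, the error is at most $\epsilon+\delta$ by the defining property of $S_G$, and the number of random bits is $\log t = \log\log\bigl(n(i)^{O(\epsilon/\delta^2)}\bigr)$, as required. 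The main insight — and the reason the bound is independent of $|\cF_i|$ — is precisely that $S_G$ is allowed to depend on $G$; a naive derandomization that fixed a single seed set across all of $\cF_i$ would incur a $\log\log|\cF_i|$ term, which could be arbitrarily large, whereas here the union bound is only over the ``within-$G$'' inputs $[n(i)]^2$.
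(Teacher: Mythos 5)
Your proposal is correct and follows essentially the same route as the paper: a per-graph Chernoff-plus-union-bound (probabilistic method) argument showing a seed multiset of size $t = \Theta(\epsilon\log n/\delta^2)$ suffices for each $G$, after which the only shared randomness is a uniform index into that graph-dependent seed set, with the decision function unchanged. The paper phrases the union bound as an expected number of bad pairs being less than $1$, but this is the same argument, including the key observation that the seed set may depend on $G$ since only Alice and Bob (not the referee) need to know it.
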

\begin{proof}
  Fix $i$, let $F$ be the deterministic decision function for $\cF_i$, and let
  $a(r, \cdot, \cdot),b(r, \cdot, \cdot)$ be Alice and Bob's encoding functions
  for the random seed $r$. For $G \in \cF_i$ and $x,y \in V(G)$ we will say a
  seed $r$ is \emph{bad} for $G,x,y$ if $F(a(r,G,x),b(r,G,y)) \neq G(x,y)$, and
  we will call this event $\mathsf{bad}(G,x,y,r)$.

  Let $r_1, \dotsc, r_m$ be independent random seeds, and let $i \sim [m]$ be
  uniformly random, where $m > \frac{3\epsilon}{\delta^2}\ln(n^2)$. Then for
  every $G$, the expected number of vertex pairs $x,y$ for which the strings
  $r_1, \dotsc, r_m$ fail is
  \begin{align*}
    &\Exu{r_1, \dotsc, r_m}{
      \sum_{x,y} \ind{\Pru{i \sim [m]}{\mathsf{bad}(G,x,y,r_i)} > \epsilon + \delta }} \\
      &\qquad\leq n^2 \max_{x,y} \Exu{r_1, \dotsc, r_m}{ \ind{\Pru{i}{\mathsf{bad}(G,x,y,r_i)} >
      \epsilon + \delta } } \\
      &\qquad= n^2 \max_{x,y} \Pru{r_1, \dotsc, r_m}{ \Pru{i}{\mathsf{bad}(G,x,y,r_i)} > \epsilon
      + \delta} \\
      &\qquad= n^2 \max_{x,y} \Pru{r_1, \dotsc, r_m}{ \sum_{i=1}^m \ind{\mathsf{bad}(G,x,y,r_i)}
      > m(\epsilon+\delta)} \,.
  \end{align*}
  The sum has mean
  $\mu = \sum_{i=1}^m \Exu{r_i}{\ind{\mathsf{bad}(G,x,y,r_i)}} < m \epsilon$, so 
  by the Chernoff bound, the probability is at most
  \begin{align*}
    &n^2 \Pru{r_1, \dotsc, r_m}{\sum_{i=1}^m \ind{\mathsf{bad}(G,x,y,r_i)} >
    (1+m\delta/\mu)\mu} \\
    &\qquad\leq n^2 \exp{-\frac{m^2\delta^2}{3\mu}}
    \leq n^2 \exp{-\frac{m\delta^2}{3\epsilon}} 
    < 1 \,.
  \end{align*}
  Since the expected number of pairs $x,y$ where choosing $i \sim [m]$ fails
  with probability more than $\epsilon + \delta$ is less than 1, there must be
  some values of $r_1, \dotsc, r_m$ with no bad pairs for $G$. So for every $G
  \in \cF_i$ we may choose $r_1, \dotsc, r_m$ so that choosing $i$ uniformly at
  random is the only random step; since $m = \frac{6\epsilon}{\delta^2}\ln n =
  \log n^{O(\epsilon/\delta^2)}$ this requires at most $\log m = \log \log
  \left(n^{O(\epsilon/\delta^2)}\right)$ random bits.
\end{proof}
With this result, we can conclude the proof of Theorem \ref{thm:complexity} in
the next lemma.
\begin{lemma}
  \label{lemma:randomized to deterministic}
  For any family $\cF = (\cF_i)$ with size function $n(i)$,
  \[
    \Omega\left(\frac{D^\ob(\cF_i)}{\log n(i)}\right)
    \leq R^\ob(\cF_i)
    \leq D^\ob(\cF_i) \,.
  \]
\end{lemma}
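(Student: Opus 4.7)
The right-hand inequality $R^\ob(\cF_i) \leq D^\ob(\cF_i)$ is immediate, since a deterministic protocol is a zero-error randomized protocol. The substance is in the left-hand inequality, which I would prove by a straightforward derandomization: apply Newman's theorem (Lemma \ref{lemma:newman's theorem for universal smp}) to cut the randomness down to $O(\log \log n(i))$ bits, then enumerate over all seeds and let the referee take a majority vote.

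In detail, let $\Pi$ be an optimal $(1/3)$-error universal SMP protocol for $\cF$ with cost $c = R^\ob(\cF_i)$. Apply Lemma \ref{lemma:newman's theorem for universal smp} with $\epsilon = 1/3$ and a small constant $\delta$ (say $\delta = 1/12$) to obtain an $(\epsilon+\delta)$-error universal SMP protocol $\Pi'$ of the same cost $c$ that uses only $r(i) = \log \log \bigl(n(i)^{O(1)}\bigr) = \log \log n(i) + O(1)$ bits of randomness. The number of possible seeds is therefore $M := 2^{r(i)} = O(\log n(i))$. For every graph $G \in \cF_i$ the protocol $\Pi'$ guarantees that for every vertex pair $x,y$, at most a $5/12 < 1/2$ fraction of seeds are bad.

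Now construct a deterministic protocol $\Pi''$ as follows. On input $(G,x)$, Alice sends the concatenation of $a(r,G,x)$ over all seeds $r \in \zo^{r(i)}$; Bob similarly sends all $b(r,G,y)$. This costs $c \cdot M = O(c \log n(i))$ bits per player. The referee, who sees the $2M$ messages, computes $F(a(r,G,x), b(r,G,y))$ for each seed $r$ and outputs the majority value. Because strictly more than half the seeds are correct for every fixed $(G,x,y)$, the referee's output is always correct. Hence $D^\ob(\cF_i) \leq O\bigl(R^\ob(\cF_i) \cdot \log n(i)\bigr)$, which rearranges to the claimed lower bound on $R^\ob(\cF_i)$.

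The only mildly subtle point, and the step I would want to double-check, is that Newman's theorem as stated earlier is strong enough to give the same seed set working for all pairs $(x,y)$ simultaneously (per graph $G$), so that the per-pair error bound $\epsilon+\delta$ holds \emph{deterministically in the seed choice} rather than just in expectation — without this, the majority vote over the fixed seed set is not guaranteed to be correct. Reading the statement of Lemma \ref{lemma:newman's theorem for universal smp}, this is exactly what it provides: the reduced-randomness protocol is itself an $(\epsilon+\delta)$-error universal SMP protocol, meaning that for every $G,x,y$, the error over the remaining $r(i)$ random bits is at most $\epsilon+\delta$. Choosing $\delta < 1/6$ then makes the majority decoder unconditionally correct on every input.
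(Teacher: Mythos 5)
Your proof is correct and takes essentially the same approach as the paper: apply the universal-SMP version of Newman's theorem to cut the randomness to $O(\log\log n)$ bits, then enumerate all $O(\log n)$ seeds and have the referee take a majority vote. The subtle point you flag — that Newman's theorem yields, for each graph $G$, a single seed multiset with per-pair error below $1/2$ for every $(x,y)$ simultaneously — is indeed exactly what the paper's Lemma \ref{lemma:newman's theorem for universal smp} provides, so the majority decoder is unconditionally correct.
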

\begin{proof}
  The upper bound is clear, so we prove lower bound. Let $\Pi = (\Pi_i)$ be a
  sequence of randomized universal SMP protocols for $\cF$.  By Newman's
  theorem, we may assume that $\Pi_i$ uses at most $\log \log n(i)^c$ random
  bits for some constant $c$ and has error probability $3/8$. Let $F_i$ be the
  decision function of $\Pi_i$, let $m(i)$ be the cost of $\Pi_i$, and let $k =
  \ceil{c \log n(i) }$.  To obtain a deterministic protocol, we can define the
  decision function $F'_i$ on messages of $k \cdot m(i)$ bits as $F'_i(a_1,b_1,
  a_2,b_2, \dotsc, a_k,b_k) = \mathrm{majority}( F_i(a_j,b_j))_j$. Alice and Bob
  iterate over all $k = 2^{\log \log n(i)^c}$ random strings $r$ and send $a(r,
  G, x), b(r,G,y)$ for each. Since the probability of error is at most $3/8$
  when $r$ is uniform, at least $5k/8 > k/2$ of the functions $F_i(a_j,b_j)$
  will give the correct answer. This proves that $D^\ob(\cF_i) = O( R^\ob(\cF_i)
  \log n(i) )$.
\end{proof}

In this paper we show lower bounds for a family $\cF$ by giving embeddings of an
arbitrary graph $G$ into $\cF$, so we need to know the complexity of the family
$\cG = (\cG_n)$ of all graphs with $n$ vertices. For our purposes, it is
convenient to require that each graph $G \in \cG_n$ has $G(u,u) = 1$ for all $u$
(i.e.~all self-loops are present). However, since equality can be checked with
cost $O(1)$, the presence or absence of self-loops does not affect the
complexity.

\begin{theoremref}[\ref{thm:lower bound for all graphs}]
$R^\ob(\cG) = \Theta(n)$.
\end{theoremref}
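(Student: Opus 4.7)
The upper bound $R^\ob(\cG) = O(n)$ is witnessed by a trivial deterministic protocol: Alice, given $(G,x)$, sends the pair $(x, G(x,\cdot))$ encoded in $O(n)$ bits (her vertex name plus her row of the adjacency matrix); Bob sends $(y, G(y,\cdot))$ analogously. The referee reads off the $y$-th bit of Alice's row, which equals $G(x,y)$. This is a zero-error protocol of cost $O(n)$, so $R^\ob(\cG) \leq D^\ob(\cG) = O(n)$.

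For the lower bound $R^\ob(\cG) = \Omega(n)$, the plan is an information-theoretic counting argument. Fix any $\epsilon$-error universal SMP protocol with cost $c$, and draw $G \in \cG_n$ uniformly at random so that the $\binom{n}{2}$ bits $\{G(u,v) : u < v\}$ are independent fair coins. For each fixed pair $(x,y)$, correctness combined with Fano's inequality gives $I(G(x,y); a(r,G,x), b(r,G,y)) \geq 1 - H(\epsilon)$, where the mutual information is over the joint randomness of $r$ and $G$ and $H(\cdot)$ is binary entropy. Define the message vectors $\mathbf{A}(r,G) \define (a(r,G,1), \ldots, a(r,G,n)) \in \zo^{cn}$ and $\mathbf{B}(r,G) \in \zo^{cn}$ analogously, and set $Z \define (\mathbf{A}, \mathbf{B})$. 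Since $(a(r,G,x), b(r,G,y))$ is a deterministic function of $Z$ once $x,y$ are fixed, data processing yields $I(G(x,y); Z) \geq 1 - H(\epsilon)$.

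Summing over unordered pairs and invoking the subadditivity identity $\sum_{x<y} I(G(x,y); Z) \leq I(G; Z)$ (valid because the $G(u,v)$ are mutually independent), we obtain
\[
\binom{n}{2}\bigl(1 - H(\epsilon)\bigr) \;\leq\; I(G;Z) \;\leq\; H(Z) \;\leq\; H(\mathbf{A}) + H(\mathbf{B}) \;\leq\; 2cn,
\]
which forces $c = \Omega(n)$. The step I expect to be most delicate is the choice to define $Z$ without the shared random string $r$: a priori $H(r)$ is unbounded, so including $r$ in $Z$ would spoil the bound unless one separately invokes Newman's theorem for universal SMP (Lemma \ref{lemma:newman's theorem for universal smp}) to cap the randomness at $O(\log\log n)$ bits. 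The shortcut here is to note that for any fixed $(x,y)$ the pair of messages is already a function of $Z$ alone, so data processing works without reference to $r$, and the message-length bound $H(\mathbf{A}), H(\mathbf{B}) \leq cn$ is immediate; thus Newman's theorem becomes an alternative route rather than a necessary tool.
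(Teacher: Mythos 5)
Your proof is correct, but the lower bound takes a genuinely different route from the paper's. The paper first symmetrizes the protocol (Lemma \ref{lemma:randomized symmetrization}), then averages over the shared randomness and over uniform pairs $x,y$ to fix, for each graph $G$, a single deterministic labeling $f_G : [n] \to \zo^c$ that errs on fewer than an $\epsilon$ fraction of pairs; since there are at most $2^{cn}$ such labelings and $2^{\binom{n}{2}}$ graphs, pigeonhole forces two graphs to share a labeling, hence to agree on all but $2\epsilon\binom{n}{2}$ pairs, and a Hamming-ball volume estimate then yields $c=\Omega(n)$. You instead run an information-theoretic argument: Fano's inequality per pair, data processing onto the full message vector $Z=(\mathbf{A},\mathbf{B})$, and superadditivity of mutual information over the mutually independent edge bits, giving $\binom{n}{2}(1-H(\epsilon)) \leq 2cn$. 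Both arguments are sound, and your resolution of the shared-randomness issue is right: since $Z$ omits $r$ and the per-pair messages are coordinates of $Z$, the bound $H(Z)\leq 2cn$ suffices and Newman's theorem is not needed (the paper likewise avoids it here, by averaging $r$ away). What each buys: your route needs no symmetrization, gives explicit constants, and is nonvacuous directly at error $\epsilon = 1/3$ (the paper's counting constants only bite for sufficiently small constant $\epsilon$, after which one would invoke standard error reduction); the paper's route is more elementary, using only counting and a binomial tail bound, and its "many graphs forced onto one labeling function" structure sits closer to the universal-graph picture developed in Section \ref{section:universal smp}.
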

\begin{proof}
  For the upper bound, consider the (deterministic) protocol where on input
  $G,x,y$, Alice and Bob send $x$ and $y$ and the respective rows of the
  adjacency matrix of $G$. This has cost $n + \ceil{\log n} = O(n)$ and the
  referee can determine $G(x,y)$ by finding $y$ in the row sent by Alice.

  Let $\Pi$ be any protocol for $\cG_n$ with cost $c$.  By Lemma
  \ref{lemma:randomized symmetrization}, we may assume that $\Pi$ is symmetric.
  Let $F$ be the decision function for graphs on $n$ vertices
  and let $G \in \cG_n$ with vertex set $[n]$. $\Pi$ defines a distribution
  over functions $g : [n] \to \zo^c$ so that for all $x,y, \Pru{g}{F(g(x),g(y))
  \neq G(x,y)} < \epsilon$. Therefore, for $x,y$ drawn uniformly from $[n]$,
  $\Exu{f,x,y}{ \ind{F(f(x),f(y)) \neq G(x,y)} } < \epsilon$.  Therefore, for
  every graph $G \in \cG_n$ there is a function $f_G$ such that for $x,y \sim
  [n]$ uniformly at random, $\Pru{x,y}{F(f_G(x),f_G(y)) \neq G(x,y)} <
  \epsilon$. Write $N = {n \choose 2}$. There are at most $2^{cn}$ functions
  $[n] \to \zo^c$ and there are $2^N$ simple graphs on $[n]$ so there is some
  function $f : [n] \to \zo^c$ where the number of graphs $G$ such that $f_G =
  f$ is at least $\frac{2^N}{2^{cn}} = 2^{N - cn}$. Let $G,G'$ be any two such
  graphs. Then
  \begin{align*}
    &\Pru{x,y \sim [n]}{ G(x,y) \neq G'(x,y)} \\
    &\qquad\leq \Pru{x,y \sim [n]}{ G(x,y) \neq F(f(x),f(y)) \text{ or }
                             G'(x,y) \neq F(f(x),f(y)) }
    < 2\epsilon \,.
  \end{align*}
  So $G,G'$ differ on at most $2\epsilon N$ pairs. However, the largest number
  of graphs that differ from any graph $G$ on at most $2\epsilon N$ pairs of
  vertices is at most
  \[
    \sum_{k=0}^{2\epsilon N} {N \choose k}
    \leq 2\epsilon N {N \choose 2 \epsilon N}
    \leq \epsilon N \left(\frac{e N}{2 \epsilon N}\right)^{2 \epsilon N}
    = 2^{2 \epsilon N \log(e/2\epsilon) + \log(2 \epsilon N)} \,.
  \]
  Therefore we must have
  \[
    N - cn 
    \leq 2 \epsilon N \log(e/2\epsilon) + \log(2 \epsilon N)
  \]
  so $c = \Omega(n)$.
\end{proof}

Recall the example in the first paragraph of the introduction, for which we
observed that a single decision function would work for many problems. We now
make a note about this phenomenon.  A communication protocol for a graph family
$\cF = (\cF_i)$ is really a sequence of protocols, one for each set $\cF_i$ of
graphs with $n(i)$ vertices. Our next proposition addresses the uniformity of
the sequence of protocols, that is, the question of how the protocols are
related to one another as the size of the input grows. In general, we ask the
question: If the family $\cF$ has some relationship between $\cF_i$ and
$\cF_{i+1}$, what does this imply about the relationship between the protocols
for $i$ and $i+1$?  The families of graphs we study in this paper have
constant-cost protocols and they are also \emph{upwards families}, which we
define next. These families have enough structure so that there exists a single,
one-size-fits-all probabilistic universal graph, into which all graphs can be
embedded regardless of their size; in other words, the referee can be ignorant
not only of the graph $G$ and vertices $x,y$, but also of the \emph{size} of the
graph, without increasing the cost of the protocol.\footnote{Any family $\cF$
with a constant-cost protocol can be turned into a protocol ignorant of the size
by requiring that Alice and Bob tell the referee which of the $2^{c^2}$ possible
decision functions to use, where $c = 2^{R^\ob(\cF)}$.}
\begin{definition}
  We call a graph family $\cF = (\cF_i)$ an \emph{upwards family} if for every
  $i$ and every $G \in \cF_i$ there exists $G' \in \cF_{i+1}$ such that $G$ is
  an induced subgraph of $G'$.
\end{definition}
Many graph families are upwards families, for example: bounded-degree graphs,
bounded-arboricity graphs, planar graphs, and transitive reductions of
distributive lattices.
\begin{proposition}
\label{prop:upwards families}
  If $\cF$ is an upwards graph family with an $\epsilon$-error randomized
  universal graph sequence $U = (U_i)$ satisfying $|V(U_i)| \leq c$ for some
  constant $c$ (which may depend on $\epsilon$), then there exists a graph $U^*$
  of size $c$ such that $\forall G \in \cF, G \rembed_\epsilon U^*$.
  Furthermore, for any $i < j$ and any $G \in \cF_i$, there exists $G' \in
  \cF_j$ with $\epsilon$-error embedding $g' : V(G') \to V(U^*)$ such that $G$
  is an induced subgraph of $G'$ and the restriction of $g'$ to the domain
  $V(G)$ is an $\epsilon$-error embedding $V(G) \to V(U^*)$.
\end{proposition}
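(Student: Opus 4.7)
The plan is to combine a pigeonhole argument on the bounded-size universal graphs with the upwards extension property to obtain a single $U^*$ that works for the entire family. Because each $U_i$ has at most $c$ vertices, there are only finitely many graphs (say, at most $2^{c^2}$) that any $U_i$ can be isomorphic to. Hence, by the infinite pigeonhole principle, there is some graph $U^*$ on at most $c$ vertices such that $U_i \simeq U^*$ for infinitely many indices $i$; call this infinite set of indices $I \subseteq \bN$.

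Next, I would use the upwards hypothesis to lift an arbitrary $G \in \cF_i$ to arbitrarily large sizes. Given $G \in \cF_i$, iterate the upwards property: there exists $G_{i+1} \in \cF_{i+1}$ containing $G$ as an induced subgraph, then $G_{i+2} \in \cF_{i+2}$ containing $G_{i+1}$ as an induced subgraph, and so on. Since ``induced subgraph of'' is transitive, for every $j \geq i$ we obtain a graph $G' = G_j \in \cF_j$ in which $G$ sits as an induced subgraph. Choosing $j \in I$ with $j \geq i$ ensures $U_j \simeq U^*$.

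Now fix any such $j \in I$ and any extension $G' \in \cF_j$ of $G$. Since $U_j \simeq U^*$, the $\epsilon$-error embedding guarantee $G' \rembed_\epsilon U_j$ supplied by the sequence $U$ gives a random map $g' : V(G') \to V(U^*)$ with $\Pr_{g'}[G'(u,v) = U^*(g'(u), g'(v))] > 1 - \epsilon$ for every $u,v \in V(G')$. In particular, this holds for all $u,v \in V(G) \subseteq V(G')$, and because $G$ is an \emph{induced} subgraph of $G'$ we have $G(u,v) = G'(u,v)$ for all such pairs. Hence the restriction of $g'$ to $V(G)$ is an $\epsilon$-error embedding $V(G) \to V(U^*)$, which simultaneously establishes $G \rembed_\epsilon U^*$ and the joint statement about $G'$ and its embedding.

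The only real subtlety is the pigeonhole step: one must be careful that ``same graph'' means isomorphic as labeled graphs on $[|V(U_i)|]$, but since $U^*$ is only used as the target of a random embedding, replacing $U_i$ by an isomorphic copy $U^*$ causes no loss. No other step requires calculation; the rest is bookkeeping about induced subgraphs and the definition of $\epsilon$-error embedding.
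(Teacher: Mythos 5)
Your proposal is correct and follows essentially the same route as the paper's own proof: pigeonhole on the at most $2^{c^2}$ graphs of size at most $c$ to extract a $U^*$ occurring for infinitely many indices, iterate the upwards property to lift $G$ into some $\cF_j$ with $U_j \simeq U^*$, and observe that restricting the random embedding of the extension to the induced subgraph $G$ preserves the $\epsilon$-error guarantee. The only difference is cosmetic ordering (the paper replaces $U_i$ by $U_{i+1}$ step by step before invoking pigeonhole, while you pigeonhole first), and your handling of the labeled-versus-isomorphic issue is fine.
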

\begin{proof}
  Let $G \in \cF_i$ and let $G' \in \cF_{i+1}$ be such that $G$ is an induced
  subgraph of $G'$. Let $g' : V(G') \to V(U_{i+1})$ the random function
  determined by the randomized universal graph sequence. Then $g'$ restricted to
  the domain $V(G) \subset V(G')$ satisfies
  \[
    \Pru{g'}{ U_{i+1}(g'(x),g'(y)) = G(x,y) }
    = \Pru{g'}{ U_{i+1}(g'(x),g'(y)) = G'(x,y) }
    > 1-\epsilon \,.
  \]
  Therefore we may replace $U_i$ with $U_{i+1}$ in the sequence, for any $i$.

  Since each $U_i$ has size at most $c$, there are at most $2^{c^2}$ graphs
  $U_i$ appearing in the sequence $U$. Thus there is some graph $U^*$ that
  occurs an infinite number of times in the sequence. For every $i$ there exists
  $j > i$ such that $U_j = U^*$.  By applying the above argument, we may replace
  $U_i$ with $U_j = U^*$ in the sequence. We arrive at the sequence $U' =
  (U'_i)$ with $U'_i = U^*$ for every $i$.
\end{proof}

\subsection{Implicit Graph Representations and Induced-Universal Graphs}
\label{subsection:implicit graphs}

Kannan, Naor, and Rudich \cite{KNR92} call a family of graphs an \emph{implicit}
graph family if each of the $n$ vertices can be given a label of $O(\log n)$
bits so that adjacency can be determined from the labels of two vertices.
They observe that an implicit encoding gives an upper bound on the size of an
\emph{induced-universal graph}.  We define these terms below in slightly more
generality (and omit the requirement that encoding and decoding be done in
polynomial time):
\begin{definition}
  Let $\cF = (\cF_i)$ be a graph family and $m(i)$ a function of the graph size.
  The family $\cF$ has an \emph{$m$-implicit} encoding if $\forall i, \exists
  F_i : \zo^{m(i)} \times \zo^{m(i)} \to \zo$ such that $F_i$ is symmetric and
  $\forall G \in \cF_i, \exists g : V(G) \to \zo^{m(i)}$ satisfying $\forall x,y
  \in V(G), F_i(g_i(x),g_i(y)) = G(x,y)$.

  For a graph family $\cF = (\cF_i)$, an \emph{induced-universal graph sequence}
  is a sequence $U = (U_i)$ such that for each $i$ and all $G \in \cF_i$, $G$ is
  an induced sugraph of $U_i$.
\end{definition}
Our notion of $\embed$-universal graphs differs from induced-universal graphs,
since the embedding relation $G \embed U_i$ allows non-injective mappings (two
vertices of $G$ may be mapped to the same vertex in $U_i$). This difference
accounts for the extra factor $n(i)$ in the next theorem.
\begin{theorem}[\cite{Spin03}]
  \label{thm:implicit to universal}
  Let $\cF = (\cF_i)$ be a graph family with size $n(i)$. If there exists an
  $m$-implicit encoding of $\cF$ there is an induced-universal graph sequence $U
  = (U_i)$ such that $|U_i| \leq n(i)2^{m(i)} = 2^{m(i)+\log n(i)}$.
\end{theorem}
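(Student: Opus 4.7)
The plan is to construct the induced-universal graph $U_i$ by taking the $\embed$-universal graph implicit in the decision function $F_i$ and adorning each vertex with an extra coordinate that forces injectivity.

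First I would unpack the hypothesis: an $m$-implicit encoding provides, for each $i$, a symmetric function $F_i : \zo^{m(i)} \times \zo^{m(i)} \to \zo$ together with, for every $G \in \cF_i$, an encoding $g_G : V(G) \to \zo^{m(i)}$ satisfying $F_i(g_G(x), g_G(y)) = G(x,y)$ for all $x,y \in V(G)$. Viewed directly, $F_i$ is the adjacency matrix of a graph on $2^{m(i)}$ vertices that is $\embed$-universal for $\cF_i$, but the encoding $g_G$ need not be injective, so this graph is not yet induced-universal in general (two vertices with identical neighborhoods in $G$ may collapse to one).

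Next I would define $U_i$ to have vertex set $\zo^{m(i)} \times [n(i)]$ and declare $(u_1, j_1)$ and $(u_2, j_2)$ to be adjacent in $U_i$ iff $F_i(u_1, u_2) = 1$; thus $|V(U_i)| = n(i) \cdot 2^{m(i)} = 2^{m(i) + \log n(i)}$ as required. For each $G \in \cF_i$ with vertex set $[n(i)]$, define $\phi_G : V(G) \to V(U_i)$ by $\phi_G(x) = (g_G(x), x)$. The second coordinate makes $\phi_G$ manifestly injective, and by construction
\[
    U_i(\phi_G(x), \phi_G(y)) = F_i(g_G(x), g_G(y)) = G(x,y)
\]
for all $x,y \in V(G)$, so $G$ appears as an induced subgraph of $U_i$ under $\phi_G$.

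There is no real obstacle here; the only subtlety is a sanity check that the construction respects self-loops and undirectedness. Symmetry of $F_i$ makes $U_i$ undirected, and the diagonal values $F_i(u,u)$ simply reproduce whatever self-loop convention $G$ uses (this matches the convention, noted earlier in the paper, that we permit self-loops throughout). Taking the sequence $U = (U_i)$ then yields an induced-universal graph sequence of the stated size.
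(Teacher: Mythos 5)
Your construction is correct and is exactly the standard argument (the one in Spinrad's book, which the paper cites without reproducing a proof): take the $\embed$-universal graph encoded by $F_i$ and cross it with $[n(i)]$ to force injectivity of the embedding, with adjacency depending only on the first coordinate. The injectivity, adjacency-preservation, and size computations are all right, and the remark about symmetry of $F_i$ and self-loops is the appropriate sanity check.
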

Due to the fact that a deterministic universal SMP protocol may always be
assumed to be symmetric (Proposition \ref{prop:deterministic protocols are
symmetric}), it follows by definition and from Lemma \ref{lemma:randomized to
deterministic} that:

\begin{theoremref}[\ref{thm:universal smp to adjacency labeling}]
A graph family $\cF = (\cF_i)$ is $m$-implicit iff $D^\ob(\cF_i) \leq m(i)$ for
every $i$. Therefore, $\cF$ is $O(R^\ob(\cF) \cdot \log n)$-implicit.
\end{theoremref}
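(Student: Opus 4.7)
The plan is to prove both directions of the iff essentially from the definitions, then chain with Lemma \ref{lemma:randomized to deterministic} for the ``therefore'' clause.

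For the forward direction (implicit $\Rightarrow$ deterministic universal SMP of cost $m$), I would observe that an $m$-implicit encoding already \emph{is} a universal SMP protocol in disguise. Given the symmetric decision function $F_i$ and the graph-dependent encoding $g : V(G) \to \zo^{m(i)}$, define Alice's message as $a_i(G,x) \define g(x)$ and Bob's as $b_i(G,y) \define g(y)$ (both players know $G$, so they both know $g$), and let the referee use $F_i$ as its decision function. By definition of the implicit encoding, $F_i(a_i(G,x), b_i(G,y)) = F_i(g(x),g(y)) = G(x,y)$, so this is a deterministic universal SMP protocol of cost $m(i)$.

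For the reverse direction (deterministic universal SMP of cost $m$ $\Rightarrow$ $m$-implicit), I would invoke Proposition \ref{prop:deterministic protocols are symmetric} to assume without loss of generality that the protocol is symmetric, i.e.~$a_i = b_i$ and $F_i(p,q) = F_i(q,p)$. Then for each $G \in \cF_i$, the function $g := a_i(G, \cdot) : V(G) \to \zo^{m(i)}$ satisfies $F_i(g(x),g(y)) = G(x,y)$ for all $x,y \in V(G)$, which is exactly what it means for $\cF$ to be $m$-implicit with decision function $F_i$.

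For the ``therefore'' clause, I would chain the iff just proved with Lemma \ref{lemma:randomized to deterministic}, which gives $D^\ob(\cF_i) \leq O(R^\ob(\cF_i) \cdot \log n(i))$. Composing yields that $\cF$ is $O(R^\ob(\cF) \cdot \log n)$-implicit, as claimed.

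There is no real obstacle here: the substance was already done in Proposition \ref{prop:deterministic protocols are symmetric} (which lets us match the symmetry requirement built into the definition of an implicit encoding) and in Lemma \ref{lemma:randomized to deterministic} (the Newman-style derandomization). The only thing to be slightly careful about is that the definition of $m$-implicit demands a \emph{symmetric} $F_i$ and a single common encoding function for both arguments, which is precisely why Proposition \ref{prop:deterministic protocols are symmetric} is needed rather than a bare deterministic protocol.
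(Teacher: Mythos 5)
Your proof is correct and takes the same route as the paper: the paper dispatches the iff with ``it follows by definition'' after invoking Proposition \ref{prop:deterministic protocols are symmetric} for symmetry, and chains Lemma \ref{lemma:randomized to deterministic} for the ``therefore'' clause, which is exactly what you spell out. Your write-up is just a more explicit unpacking of the paper's one-line justification.
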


If one's goal is merely to obtain an $O(1)$-cost universal SMP protocol for a
family $\cF$, the next observation shows that it suffices to find an
$O(1)$-cost, public-coin, 2-way protocol for each member of $\cF$. Therefore
the family of all graphs with an $O(1)$-cost 2-way protocol is an implicit graph
family with a polynomial-size induced-universal graph. 

\begin{corollaryref}[\ref{cor:2-way to implicit}]
Let $\cF = (\cF_i)$ be a family of graphs with size $n(i)$ and suppose
that for every graph $G \in \cF_i$ there is an $\epsilon$-error 2-way randomized
communication protocol with cost at most $c(i)$. Then $R^\ob_\epsilon(\cF) \leq
2^{c(i)}$. Furthermore, for any fixed constant $c$, the family $\cF$ of graphs
with $R^\leftrightarrow(\ADJ(G)) \leq c$ is $O(\log n)$-implicit.
\end{corollaryref}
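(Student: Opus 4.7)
The plan is to convert each public-coin 2-way protocol into a one-shot universal SMP protocol whose decision function depends neither on $G$ nor on the shared randomness. Fix $i$ and any $G \in \cF_i$, and let $\pi_G$ be the given $\epsilon$-error 2-way protocol of cost $c = c(i)$. Using the shared random seed $r$, Alice and Bob simulate the deterministic protocol tree $\pi_G^r$, which has at most $L := 2^{c}$ leaves. For each leaf $\ell \in [L]$, let $X_\ell \subseteq V(G)$ and $Y_\ell \subseteq V(G)$ be the rectangle sides of $\ell$ and let $v_\ell \in \zo$ be the output label of $\ell$. The key fact is that every input pair $(x,y)$ reaches exactly one leaf $\ell^*$, and with probability at least $1-\epsilon$ over $r$ we have $v_{\ell^*} = G(x,y)$.

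Alice's message is the vector $a \in \zo^{L}$ defined by $a_\ell = 1$ iff $x \in X_\ell$ and $v_\ell = 1$; Bob's message $b \in \zo^{L}$ is defined symmetrically from $y$. The referee---who does not know $G$, $r$, $\pi_G$, or which leaves are $1$-leaves---computes the fixed universal function $F(a,b) = \bigvee_{\ell \in [L]} (a_\ell \wedge b_\ell)$. When $v_{\ell^*} = G(x,y) = 1$, we have $a_{\ell^*} = b_{\ell^*} = 1$, so $F(a,b) = 1$. When $v_{\ell^*} = G(x,y) = 0$, then $a_{\ell^*} \wedge b_{\ell^*} = 0$; and for any $\ell \ne \ell^*$, uniqueness of the reached leaf forces either $x \notin X_\ell$ or $y \notin Y_\ell$, so $a_\ell \wedge b_\ell = 0$ and $F(a,b) = 0$. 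Each message has exactly $L = 2^{c(i)}$ bits, yielding $R^\ob_\epsilon(\cF) \leq 2^{c(i)}$.

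For the second claim, when $c$ is a fixed constant the first part gives $R^\ob(\cF) \leq 2^c = O(1)$, and Theorem \ref{thm:universal smp to adjacency labeling} immediately produces an $O(\log n)$-size adjacency labeling, which is precisely the $O(\log n)$-implicit condition. I do not foresee a real obstacle here; the one design choice worth noting is folding the leaf-labels $v_\ell$ into Alice and Bob's messages (rather than asking the referee to apply them), since this is exactly what allows the single function $F$ to work universally across all $G \in \cF$.
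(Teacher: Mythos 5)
Your protocol is correct, and it reaches the same bound by a slightly different mechanism than the paper. Both proofs start identically: use the shared randomness to fix a deterministic cost-$c$ protocol tree for $G$ with at most $2^c$ leaves. The paper then has Alice transmit the tree itself (structure, node ownership, leaf values) together with her branch choices at the nodes she owns, Bob does the same, and the referee simulates the tree; the decision function is universal because all $G$- and $r$-dependence is shipped inside the messages. You instead exploit the standard fact that the leaves of a deterministic tree partition the inputs into combinatorial rectangles $X_\ell \times Y_\ell$ with a value $v_\ell$ determined by the leaf, and you send the indicator vectors of ``my input lies in this leaf's side and the leaf outputs $1$,'' so the referee applies the fixed, graph- and seed-independent function $F(a,b)=\bigvee_\ell (a_\ell \wedge b_\ell)$. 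Your correctness argument is sound: the unique reached leaf $\ell^*$ contributes $a_{\ell^*}\wedge b_{\ell^*}=v_{\ell^*}$, every other leaf contributes $0$ because $(x,y)$ lies outside its rectangle, and $v_{\ell^*}=G(x,y)$ with probability at least $1-\epsilon$ over $r$. What your version buys is a cleaner accounting --- exactly $2^{c(i)}$ bits per player and an explicit fixed decision function (an OR-of-ANDs, i.e.\ a set-intersection test), whereas the paper's tree-shipping argument naively costs $O(2^c)$ bits once the tree description, ownership, leaf values, and branch choices are all included; what the paper's version buys is that it requires no appeal to the rectangle structure of leaves, only the ability of the referee to replay a transcript. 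The second half of your argument (constant $c$ gives $R^\ob(\cF)=O(1)$, then Theorem \ref{thm:universal smp to adjacency labeling} gives the $O(\log n)$-implicit encoding) is exactly how the paper concludes.
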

\begin{proof}
  Every 2-way, deterministic cost $c$ protocol can be represented as a binary
  tree with at most $2^c$ nodes, where each node is owned by either Alice or Bob
  and the message sent at each step is a 0 or 1 informing the other player of
  which branch to take in the tree. A randomized 2-way protocol is a
  distribution over such trees. To obtain a universal SMP protocol for the
  family $\cF$, Alice and Bob do the following. On input $G \in \cF$ and $x,y
  \in V(G)$, Alice and Bob use shared randomness to draw the deterministic cost
  $c$ protocol for $G$ from the distribution defined by the randomized 2-way
  protocol. Alice sends the size $2^c$ protocol tree and for each node she owns
  she identifies the branch to be taken. Bob does the same. The referee may then
  simulate the protocol.  The conclusion follows from Theorem
  \ref{thm:universal smp to adjacency labeling}.
\end{proof}

\section{Distance Labeling of Distributive Lattices}
\label{section:distributive lattices}

Distributive lattices and distances on these lattices will be defined in the
next subsection, where we also give a necessary lemma characterizing the
distances in terms of the \emph{meet} and \emph{join}.  We will then present an
$O(k \log k)$ weakly-universal protocol and an $O(k^2)$ universal communication
protocol for the family $\cD^k$, where $\cD$ are the distributive lattices. This
implies a $O(k^2 \log n)$-implicit encoding $\cD^k$ of the family $\cD$ of
distributive lattices. The $O(k \log k)$ weakly-universal protocol is optimal
for sufficiently small values of $k$, since it applies to the $k$-Hamming
Distance problem as a special case, for which Sa\v{g}lam \cite{Sag18} recently
gave a matching lower bound (even for 2-way communication). We obtain this
result by adapting the optimal $O(k \log k)$ communication protocol for
$k$-Hamming Distance originally presented by Huang \etal~\cite{HSZZ06}.  

\ignore{
It is easy to
check that $\Omega(\log n)$ bits is required for any adjacency labeling scheme.
\begin{proposition}
  Any $m$-implicit representation for $\cD$ must have $m(n) \geq \log n$.
\end{proposition}
\begin{proof}
  The Boolean hypercube $H_d$ of size $n = 2^d$ is a distributive lattice, and
  it has no two vertices $u \neq v$ such that $u \equiv v$. Therefore if $H^d
  \embed U$ for any graph $U$, we must have $|U| \geq 2^d$. So $D^\ob(\cD_n) \geq
  \ceil{\log |H_d^\equiv|} = \ceil{\log n}$.
\end{proof}
}
We also consider \emph{modular} lattices, a generalization of distributive
lattices, and show that deciding $\dist(x,y) \leq 2$ requires a protocol with
cost $\Omega(n^{1/4})$.

\subsection{Preliminaries on Distributive Lattices}
A lattice is a type of partial order.
We briefly review distributive lattices (see e.g.~\cite{CLM12} for a good
introduction) and then give a characterization of distances in
modular and distributive lattices. The undirected graphs we study are the
\emph{cover graphs} of partial orders. For $x,y$ in a partial order $P$, we say
that $y$ \emph{covers} $x$ and write $x \prec y$ if $\forall z \in P$: if $x
\leq z < y$ then $x=z$.  The \emph{cover graph} (which is the undirected version
of the \emph{transitive reduction}) is the graph $\cov(P)$ on vertex set $P$
with an edge $\{x,y\}$ iff $x \prec y$ or $y \prec x$.

We will define a few types of lattices.
\begin{definition}
  Let $(P,<)$ be a partial order. For a pair $x,y \in P$:
  \begin{itemize}
    \item If the set $\{ z \in P : x,y \geq z \}$ has a unique maximum, we call
      that maximum the \emph{join} of $x,y$ and write it as $x \wedge y$;
    \item If the set $\{ z \in P : x,y \leq z \}$ has a unique minimum, we call
      that minimum the \emph{meet} of $x,y$ and write it as $x \vee y$.
  \end{itemize}
  If $\forall x,y \in P$ the elements $x \wedge y, x \vee y$ exist, then $P$ is
  a \emph{lattice}. A lattice $L$ is \emph{ranked} if there exists a rank function
  such that $x \prec y \implies \rank(x) + 1 = \rank(y)$ and the
  minimum element $0_L$ satisfies $\rank(0_L) = 0$.
  A finite lattice $L$ is \emph{upper-semimodular} if for every $x,y \in L$,
  $x \wedge y \prec x,y \implies x,y \prec x \vee y$. $L$ is
  \emph{lower-semimodular} if for every $x,y \in L$,
  $x,y \prec x \vee y \implies x \wedge y \prec x,y$. $L$ is \emph{modular} if
  it is both upper- and lower-semimodular.
  A lattice $L$ is \emph{distributive} if for all $x,y,z \in L$,
  $x \wedge (y \vee z) = (x \wedge y) \vee (x \wedge z)$. Every distributive
  lattice is modular and every modular lattice is ranked \cite{CLM12}.

  A point $x$ in a lattice $L$ is \emph{join-irreducible} if there is no set $S
  \subseteq L$ such that $x = \bigvee S$ and \emph{meet-irreducible} if there is
  no set $S \subseteq L$ such that $x = \bigwedge S$. Write $J(L)$ for the set
  of join-irreducible elements.

  A subset $D$ of a partial order $P$ is a \emph{downset} or \emph{ideal} if:
  for all $x,y \in L$, if $x \in D$ and $y \leq x$ then $y \in D$. We will write
  $D(P)$ for the set of ideals of $P$.
\end{definition}
\begin{theorem}[Birkhoff (see e.g.~\cite{CLM12})]
  Every distributive lattice $L$ is isomorphic to the lattice of downsets of the
  partial order on its join-irreducible elements, ordered by inclusion; i.e.~$L
  \simeq D(J(L))$, with the meet and join operations given by set union and
  intersection respectively.
\end{theorem}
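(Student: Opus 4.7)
The plan is to construct the isomorphism explicitly: define $\phi : L \to D(J(L))$ by $\phi(x) \vcentcolon= \{ j \in J(L) : j \leq x \}$. I would first check routinely that $\phi(x)$ is a downset of $J(L)$ (if $j \leq x$ and $j' \leq j$ with $j' \in J(L)$, then $j' \leq x$), so the map is well-defined.

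Next, I would establish the key structural lemma: every element $x \in L$ equals the join of the join-irreducibles below it, i.e.~$x = \bigvee \phi(x)$. Since $L$ is finite (and ranked, as modular $\Rightarrow$ ranked), I would prove this by induction on $\rank(x)$. If $x$ is itself join-irreducible, the claim is immediate; otherwise $x = a \vee b$ for some $a,b < x$, and by induction each of $a,b$ is the join of its join-irreducibles, so $x$ is too. This immediately gives injectivity of $\phi$, since $\phi(x) = \phi(y)$ forces $x = \bigvee \phi(x) = \bigvee \phi(y) = y$, and meet-preservation $\phi(x \wedge y) = \phi(x) \cap \phi(y)$ follows from the definition of $x \wedge y$ as a greatest lower bound (no distributivity needed here).

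The step where distributivity is essential is surjectivity together with join-preservation. Given a downset $D \in D(J(L))$, set $x_D \vcentcolon= \bigvee D$; then $D \subseteq \phi(x_D)$ trivially. For the reverse inclusion, take $j \in \phi(x_D)$, so $j \leq x_D = \bigvee D$; distributivity gives
\[
j \;=\; j \wedge \bigvee_{d \in D} d \;=\; \bigvee_{d \in D} (j \wedge d),
\]
and because $j$ is join-irreducible, some term $j \wedge d$ must equal $j$, meaning $j \leq d$; since $D$ is a downset, $j \in D$. The same trick handles join-preservation: for $j \in \phi(x \vee y)$, distributivity yields $j = (j \wedge x) \vee (j \wedge y)$, and join-irreducibility of $j$ forces $j \leq x$ or $j \leq y$, giving $\phi(x \vee y) \subseteq \phi(x) \cup \phi(y)$; the opposite inclusion is monotonicity of $\phi$.

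The main obstacle I anticipate is the careful invocation of join-irreducibility in the surjectivity argument, because the definition used in the paper is stated in terms of arbitrary sets $S$ with $x = \bigvee S$, and one must be slightly careful to rule out the trivial case $j \in \{j \wedge d : d \in D\}$ arising because $j = j \wedge j$; handling the bottom element $0_L$ (which is the empty join and conventionally excluded from $J(L)$) corresponds to the empty downset and should be checked separately. Once those minor pedantic points are cleared, everything assembles into the isomorphism $L \simeq D(J(L))$ with meets and joins realized by intersection and union of downsets.
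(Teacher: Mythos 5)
The paper does not prove this theorem; it is cited as a standard result from \cite{CLM12}, so there is no in-paper argument to compare against. Your proof is the standard Birkhoff argument and is correct: the map $\phi(x) = \{j \in J(L) : j \le x\}$, the structural lemma $x = \bigvee \phi(x)$ (which indeed needs only finiteness, not distributivity), meet-preservation for free, and distributivity plus join-irreducibility for surjectivity and join-preservation. The only small point worth being explicit about is that the finitary distributive law $j \wedge \bigvee_{d \in D} d = \bigvee_{d \in D}(j \wedge d)$ follows from the binary law by induction on $|D|$ (finiteness used again), and that the induction for the structural lemma can be run on height or on $<$ directly rather than invoking rank, though rank is available here since distributive implies modular implies ranked.
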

We need to prove some facts about distances in modular lattices.
\begin{proposition}
  Let $L$ be a graded lattice and let $x,y \in L$. Then $\dist(x,y) \geq
  |\rank(x) - \rank(y)|$, with equality if $x < y$ or $y < x$.
\end{proposition}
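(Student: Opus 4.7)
The plan is to use the defining property of a graded lattice, namely that adjacent vertices in the cover graph differ in rank by exactly one, together with the standard fact that any interval $[x,y]$ in a graded lattice admits a saturated chain of length $\rank(y)-\rank(x)$.

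For the lower bound, I would take any path $x = z_0, z_1, \dotsc, z_k = y$ in $\cov(L)$ of length $k = \dist(x,y)$. By definition of the cover graph, for each $i$ we have either $z_i \prec z_{i+1}$ or $z_{i+1} \prec z_i$, and the graded hypothesis then forces $\rank(z_{i+1}) - \rank(z_i) \in \{+1,-1\}$. Writing $\rank(y) - \rank(x)$ as the telescoping sum $\sum_{i=0}^{k-1} (\rank(z_{i+1})-\rank(z_i))$ and taking absolute values gives $|\rank(x)-\rank(y)| \leq k = \dist(x,y)$.

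For the equality case, assume without loss of generality that $x < y$, and let $m = \rank(y) - \rank(x)$. The goal is to produce an actual path of length $m$ in the cover graph from $x$ to $y$. I would do this by iteratively picking covers: starting from $x$, pick any $z_1$ with $x \prec z_1 \leq y$ (this exists because $x < y$, so we can choose $z_1$ minimal in $\{z : x < z \leq y\}$; then $x \prec z_1$). Repeat with $z_1$ in place of $x$, obtaining a chain $x = z_0 \prec z_1 \prec \dotsb \prec z_\ell = y$ for some $\ell$. By the rank condition $\rank(z_{i+1}) = \rank(z_i) + 1$, so $\ell = m$, and this chain is a path of length $m$ in $\cov(L)$. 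Hence $\dist(x,y) \leq m$, matching the lower bound.

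The only step requiring any real care is the construction of the saturated chain: one must justify that the greedy procedure terminates exactly at $y$ and has length $m$, which follows immediately from the fact that each cover step increases the rank by exactly one and is bounded above by $\rank(y)$. I do not expect any genuine obstacle here; the result is essentially a restatement of what it means for a lattice to be graded.
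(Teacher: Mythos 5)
Your proof is correct and follows essentially the same route as the paper, which simply observes that every edge of a path in the cover graph changes the rank by exactly one (your telescoping sum), with the equality case via a saturated chain left implicit there and spelled out greedily by you.
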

\begin{proof}
  This follows from the fact that for every edge $u \prec v$ in the path from
  $x$ to $y$ has $\rank(u)+1 = \rank(v)$.
\end{proof}
To prove our characterization of distance, we define \emph{inversions} in the
path.
\begin{definition}
  Let $L$ be a lattice and let $c_1, \dotsc, c_m$ be a path in $\cov(L)$, so
  that $c_i \prec c_{i+1}$ or $c_{i+1} \prec c_i$ for each $i$. If $c_{i-1},
  c_{i+1} \prec c_i$ or $c_i \prec c_{i-1},c_{i+1}$ we call $c_i$ an
  \emph{inversion} on the path.
\end{definition}
\begin{lemma}
\label{lemma:distance in modular lattices}
The following holds for any $x,y$ in a lattice $\cM$:
\begin{enumerate}
  \item If $\cM$ is lower-semimodular then $\dist(x,y) = \dist(x, x \wedge y) +
    \dist(y, x \wedge y)$;
  \item If $\cM$ is upper-semimodular then $\dist(x,y) = \dist(x, x \vee y) +
    \dist(y, x \vee y)$;
  \item If $\cM$ is distributive then $\dist(x,y) = |X \Delta Y|$ where $X,Y \in
    D(J(\cM))$ are isomorphic images of $x,y$ in Birkhoff's representation.
\end{enumerate}
\end{lemma}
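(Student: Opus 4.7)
The plan is to prove parts 1 and 2 by showing that any shortest cover-path from $x$ to $y$ can be massaged into a "valley" path that first descends to $x \wedge y$ and then ascends to $y$ (dually, a "peak" path through $x \vee y$), and then to deduce part 3 by combining parts 1 and 2 with Birkhoff's theorem. The upper bound $\dist(x,y) \leq \dist(x, x \wedge y) + \dist(y, x \wedge y)$ is immediate from the triangle inequality together with the preceding proposition, which produces descending cover-paths of length $\rank(x) - \rank(x \wedge y)$ from $x$ and $\rank(y) - \rank(x \wedge y)$ from $y$ in the ranked lattice $\cM$; so the real content is the matching lower bound.

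For the lower bound in part 1, I would fix a shortest path $c_0 = x, c_1, \dotsc, c_m = y$ in $\cov(\cM)$ and repeatedly eliminate upward inversions, i.e., indices $i$ with $c_{i-1}, c_{i+1} \prec c_i$, by a local swap. Since the path is shortest we may assume $c_{i-1} \neq c_{i+1}$; combined with $c_{i-1} \prec c_i$ this forces $c_{i-1} \vee c_{i+1} = c_i$, since otherwise $c_{i-1} \vee c_{i+1} = c_{i-1}$, making $c_{i+1} \leq c_{i-1}$ and, by the equal ranks $\rank(c_i)-1$, $c_{i+1} = c_{i-1}$. Lower-semimodularity then gives $c_{i-1} \wedge c_{i+1} \prec c_{i-1}, c_{i+1}$, so replacing $c_i$ by $c_{i-1} \wedge c_{i+1}$ yields a valid walk of the same length. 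Because this swap strictly decreases the monovariant $\sum_j \rank(c_j)$ by $2$, the process terminates and produces a shortest path with no upward inversion. By the definition of inversion, any such path must have the form $D^a U^b$: $a$ covering descents from $x$ to some element $z$, followed by $b$ covering ascents from $z$ to $y$. Then $z \leq x$ and $z \leq y$, so $z \leq x \wedge y$, giving $m = a+b = (\rank(x) - \rank(z)) + (\rank(y) - \rank(z)) \geq \dist(x, x\wedge y) + \dist(y, x\wedge y)$, which matches the upper bound.

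Part 2 follows by the order-reversing duality exchanging upper- and lower-semimodularity, meet and join, and the two kinds of inversions; the identical argument with all directions reversed gives the statement through $x \vee y$. For part 3, distributivity implies both upper- and lower-semimodularity, so parts 1 and 2 apply simultaneously. Under the Birkhoff representation $\cM \simeq D(J(\cM))$, meet and join become set intersection and union, and the rank of an ideal equals its cardinality; part 1 then reads $\dist(X, Y) = (|X| - |X \cap Y|) + (|Y| - |X \cap Y|) = |X \Delta Y|$.

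The main obstacle is the swap step in part 1: one must confirm that the swap is always available on a shortest path (which reduces to the short rank-argument forcing $c_{i-1} \vee c_{i+1} = c_i$) and that the elimination process terminates, which is why tracking $\sum_j \rank(c_j)$ as a monovariant is essential. Once upward inversions are eliminated, extracting the $D^a U^b$ structure, the duality for part 2, and the set-theoretic rewriting for part 3 are routine.
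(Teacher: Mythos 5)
Your proof is correct and follows essentially the same route as the paper: eliminate upward inversions $c_{i-1}, c_{i+1} \prec c_i$ by swapping in $c_{i-1} \wedge c_{i+1}$ via lower-semimodularity, observe the resulting shortest path has the $D^a U^b$ shape through a common lower bound $z \leq x \wedge y$, then dualize for part 2 and apply Birkhoff with rank-equals-cardinality for part 3. Your explicit verification that $c_{i-1} \vee c_{i+1} = c_i$ (which the paper leaves implicit, though it is needed to invoke lower-semimodularity) and your monovariant $\sum_j \rank(c_j)$ termination argument are clean refinements of the paper's induction on the maximum rank of an inversion, but do not constitute a different approach.
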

\begin{proof}
  It suffices to prove the first statement: the second follows by the analogous
  argument and the third follows from the modulartiy of distributive lattices
  and Birkhoff's representation.

  Let $\cM$ be lower-semimodular, let $x,y \in \cM$, and let $x = c_0, c_1,
  \dotsc, c_m = y$ be a shortest path between $x$ and $y$, so that $\dist(x,y) =
  \dist(x,c_i) + \dist(y,c_i)$ for any $i$. The statement holds trivially when
  $x < y$ or $y < x$ (since $x \wedge y = x$ or $x \wedge y = y$), so we assume
  $x,y$ are incomparable. We prove the statement by induction on the largest
  rank of an inversion of the form $c_{i-1},c_{i+1} \prec c_i$ in the path.

  First suppose that $c_i$ is any element of the path and assume for
  contradiciton that $\rank(c_i) < \rank(x \wedge y)$. Then
  \[
  \dist(x, x \wedge y) = \rank(x) - \rank(x \wedge y) 
  < \rank(x) - \rank(c_i) \leq \dist(x,c_i),
  \]
  a contradiction. Thus $\rank(c_i) \geq \rank(x \wedge y)$ for each element of
  the path.

  Suppose there are no inversions of the form $c_{i-1},c_{i+1} \prec c_i$. Then
  $c_i < x,y$ and therefore $c_i \leq x \wedge y$ so $\rank(c_i) \leq \rank(x
  \wedge y)$, and by the above inequality we have $\rank(c_i) \geq \rank(x
  \wedge y)$, so $\rank(c_i) = \rank(x \wedge y)$. Therefore, as desired,
  \begin{align*}
  \dist(x,y)
  &= \dist(x,c_i) + \dist(y,c_i)
  = \rank(x) - \rank(c_i) + \rank(y) - \rank(c_i) \\
  &= \rank(x) - \rank(x \wedge y) + \rank(y) - \rank(x \wedge y) \\
  &= \dist(x,x \wedge y) + \dist(y,x\wedge y) \,.
  \end{align*}
  Now let $c_i$ be an inversion of the form $c_{i-1},c_{i+1} \prec c_i$ with
  $\rank(c_i) > \rank(x \wedge y)$. Then by lower-semimodulariity there is an
  element $c'_i = c_{i-1} \wedge c_{i+1} \prec c_{i-1}, c_{i+1}$. Then replacing
  $c_i$ with $c'_i$ maintains the length of the path. Performing the same
  operation on all such inversions of maximum rank reduces the maximum rank by 1
  and the result holds by induction.
\end{proof}

\subsection{A Universal Protocol for Distributive Lattices}
Write $\cD = (\cD_n)$ for the family of cover graphs of distributive lattices on
$n$ vertices.  We first give an optimal protocol for distances in distributive
lattices in the \emph{weak} universal model (recall that in this model, the
referee sees the shared randomness). This protocol is adapted from a simplified
presentation of Huang \etal's $k$-Hamming Distance protocol (\cite{HSZZ06})
communicated to us by E.~Blais.
\begin{theorem}
  For any $\epsilon > 0$ and integer $k$, $R^\weak_\epsilon(\cD^k) = 
  O\left(k\log(k/\epsilon)\right)$.
\end{theorem}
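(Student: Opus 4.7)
The plan is to reduce $k$-distance on distributive lattices to $k$-Hamming Distance on subsets of the join-irreducibles, and then run a public-coin SMP protocol for $k$-Hamming Distance whose decoder is oblivious to the ambient universe.

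By Birkhoff's theorem each $G \in \cD_n$ is canonically the cover graph of $D(J(L))$ for some distributive lattice $L$, with meet and join realized by intersection and union. By Lemma~\ref{lemma:distance in modular lattices}(3), vertices $x, y$ of $G$ correspond to downsets $X, Y \subseteq J(L)$ with $\dist(x,y) = |X \Delta Y|$. Since Alice and Bob both see $G$, each can compute $J(L)$ and a canonical ordering of it, converting their inputs to characteristic vectors $\mathbf{1}_X, \mathbf{1}_Y \in \zo^m$ where $m = |J(L)|$; it thus suffices to give an $O(k \log(k/\epsilon))$-bit weakly-universal SMP protocol deciding whether $\mathbf{1}_X$ and $\mathbf{1}_Y$ are within Hamming distance $k$.

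For this I would adapt the $O(k \log(k/\epsilon))$-bit $k$-Hamming Distance protocol of Huang \etal~to the SMP setting. A concrete implementation uses the shared randomness to first hash $[m]$ down to a universe of size $\poly(k)$ so that, with constant probability, all up-to-$k$ differing coordinates land in distinct buckets; Alice and Bob then each send a short linear sketch of their reduced vector (e.g.~the syndrome under a $k$-error-correcting BCH code) of length $O(k \log(k/\epsilon))$, the referee XORs the two sketches and runs the BCH decoder, outputting $\mathsf{yes}$ iff the decoder returns an error pattern of weight at most $k$. The hash target, code parameters, and sketch dimensions depend only on $k$ and $\epsilon$, so the decoder's input is entirely determined by $r$ and the messages; this is what makes the protocol fit the weakly-universal model, in which the referee sees $r$ and the messages but not $G$ or $m$.

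The main obstacle is packaging the construction so that the total communication matches the $O(k \log(k/\epsilon))$ bound of Huang \etal~rather than the naive $O(k \log k \cdot \log(1/\epsilon))$ from independent amplification. This requires a careful balls-in-bins analysis ensuring that inputs with $|X \Delta Y| > k$ are not collapsed into the ``$\leq k$'' regime by the initial hashing, even near the threshold $|X \Delta Y| = k+1$, together with a sketching construction whose error shrinks with its size rather than via black-box amplification.
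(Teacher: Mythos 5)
Your reduction is the same as the paper's: via Birkhoff's theorem and Lemma \ref{lemma:distance in modular lattices}(3), $\dist(x,y)=|X\Delta Y|$ over the join-irreducibles, and then one hashes $J(L)$ into a small universe and sends a short linear sketch that the referee checks. Up to that outline you are on the right track, and your worry about amplification is a non-issue: take $O(k^2/\epsilon)$ buckets, so that with probability $1-O(\epsilon)$ no two of the (at most $k+1$ relevant) differing join-irreducibles collide; the message length only pays a $\log$ of the bucket count, which already gives $O(k\log(k/\epsilon))$ with one-sided error, no repetition needed.

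The genuine gap is the soundness analysis for pairs with $|X\Delta Y|>k$, which you explicitly defer, and your specific instantiation makes it a real hole rather than a routine check. With a fixed $k$-error-correcting BCH code the referee never uses the shared randomness at all, and then nothing protects you when $|X\Delta Y|\gg k$: after bucketing, the reduced difference vector can have weight $w$ far above $k$ (anywhere up to about half the buckets), and once $w$ exceeds $d-k-1$ (where $d$ is the code distance, here only about $2k$) a weight-$w$ vector can share its syndrome with some weight-$\le k$ pattern, so ``the decoder returns an error pattern of weight at most $k$'' can wrongly accept. Fixing this with a deterministic code would require a weight-distribution/anti-concentration argument for BCH under random bucketing, or an auxiliary test ruling out $|X\Delta Y|\gg k$; none of this is in your sketch. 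The paper sidesteps it by making the code itself random: the sketch is $\ell(v)=\sum_i a(v)_i s_i$ with $s_1,\dotsc,s_m$ uniform in $\bF_2^q$, $q=O(k\log(k/\epsilon))$, and the referee, who sees $S$ (this is exactly where weak universality is used), accepts iff $\ell(x)+\ell(y)$ is a sum of at most $k$ elements of $S$. Then two claims finish the proof: a balls-in-bins induction showing any set of $\ge k+1$ join-irreducibles keeps hashed weight $\ge k+1$ with probability $1-\epsilon/2$, and a union bound over all $\le k$-sums of $S$ showing that any fixed hashed vector of weight $\ge k+1$ is accepted with probability at most $\sum_{i\le k}\binom{m}{i}2^{-q}<\epsilon/2$. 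You need either to import that random-code argument (abandoning BCH) or to supply the missing analysis for a fixed code; as written the protocol's correctness on far pairs is unproven and, for distances well beyond $k$, plausibly false.
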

\begin{proof}
  For any distributive lattice $L \simeq D(J(L))$, identify each vertex $x \in
  L$ with its ideal $X \subseteq J(L)$ of join-irreducibles. Write $e_1, \dotsc,
  e_m$ for the basis vectors of $\bF_2^m$.
  Consider the following protocol. On the distributive lattice $L$ and vertices
  $x,y$, Alice and Bob perform the following:
\begin{enumerate}
  \item Define $m = \ceil{\frac{(k+2)^2}{\epsilon}}, q =
    \ceil{\log\frac{1}{\epsilon} + \log \sum_{i=0}{m \choose i}}$.
  \item Let $S = (s_1, \dotsc, s_m)$ be a multiset of uniformly random elements
    of $\bF_2^q$.
  \item For each join-irreducible element $j \in J(L)$ assign a uniformly random
    index $i_j \sim [m]$.
  \item For each vertex $v \subseteq J(L)$ there is an indicator vector $a(v) \in \bF_2^m$
    defined by $a(v) = \sum_{j \in v} e_{i_j}$. Label $v$ with
    $\ell(v) = \sum_{i=1}^m a(v)_i s_i$.
  \item Alice sends $\ell(x)$ and Bob sends $\ell(y)$ to the referee.
  \item The referee accepts iff $\ell(x) + \ell(y)$ is a sum of at most $k$
    elements of $S$.
\end{enumerate}
By Lemma \ref{lemma:distance in modular lattices} and Birkhoff's theorem,
$\dist(x,y) = \dist(x,x\wedge y) + \dist(x \wedge y,y) = |X \setminus Y| + |Y
\setminus X| = |X \Delta Y|$, where $\Delta$ denotes the symmetric difference.
Suppose $\dist(x,y) = |X \Delta Y| \leq k$. Then $\ell(x) + \ell(y) = \sum_{j
\in X \Delta Y} c(j)$ is a sum of at most $k$ elements of $S$, so the protocol
accepts with probability 1 (so this protocol has 1-sided error).

Now suppose $\dist(x,y) = |X \Delta Y| \geq k+1$. The correctness of the
protocol follows from the next two claims along with the observations that
$a(x) + a(y) = a(x \wedge y)$ and
$\ell(x) + \ell(y) = \ell(x \wedge y)$ (with arithmetic in $\bF_2$) and that
$\dist(x,y) \geq k+1$ implies $\rank(x \wedge y) \geq k+1$. We will write
$|a(v)|$ for the number of 1's in the vector $a(v)$.
\begin{claim}
\label{claim:indicator has large weight}
  Any vertex $v \subseteq J(L)$ with $\rank(v) \geq k + 1$ has $|a(v)| \geq k+1$
  with probability at least $1-\epsilon/2$.
\end{claim}
\begin{proof}[Proof of claim]
  If $\rank(v) = k+1$, so $v$ is a
  set of $k+1$ join-irreducibles, then the probability that any two indices
  $i_j,i_{j'}$ collide, for $j,j' \in v$, is by the union bound at most
  \[
    {k+1 \choose 2} \Pr{i_j = i_{j'}}
    = \frac{k(k+1)}{2} \frac{1}{m}
    \leq \frac{(k+1)^2}{2} \frac{\epsilon}{(k+2)^2} = \epsilon / 2 \,.
  \]
  For $\rank(v) > k+1$ choose $v' \prec v$ so $k+1 \leq \rank(v') < \rank(v)$,
  so using induction and the assumption $\epsilon < 1/2$,
  \begin{align*}
    \Pr{|a(v)| \leq k}
    &= \frac{k+1}{m} \Pr{|a(v')| = k+1} + \frac{k}{m}\Pr{|a(v')| \leq k}
    < \frac{\epsilon}{k+2} + \frac{\epsilon}{k+2}\cdot \frac{\epsilon}{2} \\
    &= \epsilon \left( \frac{1}{k+2} + \frac{\epsilon}{2(k+2)} \right)
    \leq \epsilon \left(\frac{1}{3} + \frac{1}{12}\right) < \epsilon/2 \,.
    \qedhere
  \end{align*}
\end{proof}
\begin{claim}
  For any vertex $v \subseteq J(L)$, if the indicator vector $a(v)$ has weight
  $\geq k+1$ then, with probability at least $1-\epsilon/2$, $\ell(v)$ is not a
  sum of at most $k$ vectors in $S$.
\end{claim}
\begin{proof}[Proof of claim]
  Write $kS$ for the set of all sums of at most $k$ vectors of $S$.
  Fix any $a(v)$ with weight $\geq k+1$ and let $A = \{ i : a(v)_i = 1 \}$ so
  $|A| \geq k+1$. Let $b \in kS$ be any sum of $k$ vectors in $S$, and let $B \subset
  [m]$ be a set of indices of size $|B| \leq k$ such that $b = \sum_{i \in B}
  s_i$.

  Since $|B| \leq k < |A|$ we must always have $A \setminus B \neq \emptyset$
  and $\ell(v) + b = \sum_{i \in A \setminus B} s_i$, so $\Pr{\ell(v) + b = 0} =
  2^{-q}$. Therefore, by the union bound over all such vectors $b$,
  \[
    \Pr{\ell(v) \in kS}
    \leq \sum_{i=0}^k {m \choose i} 2^{-q}
    < \epsilon/2 \,. \qedhere
  \]
\end{proof}
We can put a bound on $q$ by using
\[
  \sum_{i=0}^k {m \choose i} \leq k{m \choose k} \leq k
  \left(\frac{em}{k}\right)^k
\]
so
\[
q \leq 1 + \log\frac{1}{\epsilon} + \log k + k \log \frac{em}{k}
\leq \log\frac{2k}{\epsilon} + k \log\ceil{\frac{ek}{\epsilon}}
= O\left(k \log \frac{k}{\epsilon} \right) \,. \qedhere
\]
\end{proof}
Observe that the referee must see the set $S$ for the above protocol to work.
We can easily modify the above protocol to get $O(k^2)$.
\begin{theorem}
  \label{thm:distributive lattice protocol}
  For any $\epsilon > 0$ and any integer $k$, $R^\ob_\epsilon(\cD^k) =
  O\left(k^2\log(1/\epsilon)\right)$.
\end{theorem}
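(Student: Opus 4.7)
The plan is to modify the weakly-universal protocol of the previous theorem by removing the random linear compression that required the referee to see the shared randomness $S$. Concretely, using the same setup (Birkhoff representation of $L$ as ideals in $J(L)$, and a uniformly random assignment of indices $i_j \in [m]$ to each join-irreducible $j$), Alice and Bob will simply send the indicator vectors $a(x), a(y) \in \bF_2^m$ directly, and the referee will accept iff $a(x) + a(y)$ has Hamming weight at most $k$.

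First I would fix a constant target error (say $1/3$) and set $m = O(k^2)$ accordingly, so that each message has length $m = O(k^2)$. For correctness when $\dist(x,y) \leq k$: by Lemma \ref{lemma:distance in modular lattices} and Birkhoff, $|X \Delta Y| \leq k$ and $a(x) + a(y) = \sum_{j \in X \Delta Y} e_{i_j}$ is a sum of at most $k$ basis vectors, so its Hamming weight is at most $k$ and the referee always accepts (so the protocol has one-sided error). For correctness when $\dist(x,y) > k$: I need that $|a(x)+a(y)| \geq k+1$ with constant probability. This is essentially the content of Claim~\ref{claim:indicator has large weight}, but I need to observe that its proof uses only the cardinality of the underlying set (via a union bound over pairs for the base case, and induction on size for the step), not the fact that $v$ is an ideal. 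Thus the same argument applied to the set $X \Delta Y$ of size $\geq k+1$ yields $|a(x)+a(y)| \geq k+1$ with probability at least $1-1/6$, so the referee rejects with probability $\geq 2/3$.

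To drive the error down to $\epsilon$, I run the protocol $r = O(\log(1/\epsilon))$ times independently with fresh randomness and take the logical AND of the referees' decisions. Since the per-run error is one-sided (completeness is never violated) the AND preserves completeness, and the soundness error becomes at most $(1/3)^r \leq \epsilon$. Total cost: $r \cdot O(k^2) = O(k^2 \log(1/\epsilon))$, as required.

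The only real subtlety is checking that Claim~\ref{claim:indicator has large weight} transfers from ideals to arbitrary subsets of $J(L)$, since $X \Delta Y$ need not be an ideal; I expect this to be routine because the inductive proof only ever removes a single element from $v$ and uses $|v|$, not the ideal structure. Everything else is a direct specialization of the weakly-universal protocol, with the compression step stripped away and replaced by standard one-sided error amplification.
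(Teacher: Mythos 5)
Your proposal is correct and takes exactly the paper's route: drop the linear compression, have Alice and Bob send the indicator vectors $a(x),a(y)\in\bF_2^m$ directly, let the referee test whether $a(x)+a(y)$ has Hamming weight at most $k$, and amplify the one-sided error by ANDing $O(\log(1/\epsilon))$ independent runs. You also correctly flag and verify the one real subtlety—that Claim~\ref{claim:indicator has large weight} must be applied to $X\Delta Y$ (not an ideal), which is fine because its proof uses only $|v|$—a point the paper's exposition handles somewhat loosely.
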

\begin{proof}
  The protocol is the same as above, with the following modification: Alice and
  Bob each send the indicator vectors $a(x),a(y) \in \bF_2^m$.

The correctness of this protocol for error $1/3$ follows from Claim
\ref{claim:indicator has large weight}. Observe that Alice and Bob use the same
strategy to send their messages and that the decision function is symmetric. The
communication cost is now at most $m = \ceil{3(k+2)^2/2}$.

This protocol is one-sided, so to achieve error $\epsilon$ we can run the
protocol $r = \ceil{\log_3 (1/\epsilon)}$ times and take the AND of the results. The
probability of failure is $(1/3)^r = 3^{-r} < \epsilon$.
\end{proof}
Now we apply Theorem \ref{thm:universal smp to adjacency labeling} to obtain
Theorem \ref{thm:distributive lattices}.

Since the family of distributive lattices is an upwards family (simply append a
new least element to obtain a larger distributive lattice), we see from
Proposition \ref{prop:upwards families} that lattices in $\cD^k$ can be randomly
embedded into a constant-size graph, for any constant $k$.  In fact, by
inspection of the protocol, we see that the family $\cD$ can be randomly
embedded into a small-dimensional hypercube, while $\cD^k$ can be embedded into
the $k$-closure of the $O(k^2)$-dimensional hypercube. 
\begin{corollary}
  For any $\epsilon > 0$ and any $k$, there exists a graph $U$ of size
  $2^{O(k^2\log(1/\epsilon))}$ such that for all $L \in \cD^k$, $L
  \rembed_\epsilon U$.
\end{corollary}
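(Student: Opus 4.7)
The plan is to build $U$ directly from the universal SMP protocol for $\cD^k$ constructed in Theorem~\ref{thm:distributive lattice protocol}, using nothing more than Lemma~\ref{lemma:universal smp to universal graph}(1). The key observation is that the protocol in that theorem is uniform in the lattice size: Alice and Bob send indicator vectors $a(\cdot) \in \bF_2^m$ with $m = \lceil 3(k+2)^2/2\rceil$, the referee accepts iff the Hamming distance of the received strings is at most $k$, and to reduce error to $\epsilon$ one runs the protocol $r = \lceil \log_3(1/\epsilon)\rceil$ times in parallel and takes the AND. This gives a symmetric protocol with message length $M = rm = O(k^2 \log(1/\epsilon))$ whose decision function $F : \{0,1\}^M \times \{0,1\}^M \to \{0,1\}$ is completely determined by $k$ and $\epsilon$ and does not depend on the input lattice $L$.

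Given this, I define $U$ to be the graph on vertex set $\{0,1\}^M$ with adjacency $U(u,v) \define F(u,v)$. Then $|V(U)| \leq 2^M = 2^{O(k^2\log(1/\epsilon))}$. For any $L \in \cD$ and any $x,y \in V(L)$, the random encoding $g_r(v) \define (a(r_1,L,v), \dotsc, a(r_r,L,v))$ from the protocol is a random mapping $V(L) \to V(U)$, and the correctness guarantee of Theorem~\ref{thm:distributive lattice protocol} states exactly that
\[
    \Pru{r}{ U(g_r(x), g_r(y)) = L^k(x,y) } > 1-\epsilon,
\]
which is the definition of $L^k \rembed_\epsilon U$.

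An alternative route is to combine Theorem~\ref{thm:distributive lattice protocol} with Lemma~\ref{lemma:universal smp to universal graph}(1) to produce an $\epsilon$-error universal graph \emph{sequence} of size $2^{O(k^2\log(1/\epsilon))}$, and then collapse the sequence to a single graph using Proposition~\ref{prop:upwards families}; this would require verifying that $\cD^k$ is an upwards family, which follows by appending a new global minimum $0'$ below $0_L$ and noting that no shortest path in $\cov(L')$ between original vertices goes through the pendant $0'$, so distances are preserved. The only mild obstacle in the direct route is simply making the ``uniform in $n$'' observation explicit; I expect no real difficulty here, since it is immediate from inspecting the parameters $m$ and $q$ in the protocol.
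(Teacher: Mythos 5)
Your proposal is correct and matches the paper's own argument, which combines Proposition~\ref{prop:upwards families} (viewing distributive lattices as an upwards family) with the remark that ``by inspection of the protocol'' the universal graph is concretely the $k$-closure of a fixed $O(k^2)$-dimensional hypercube; you present these same two routes in reverse order, leading with the direct inspection of the size-independent decision function $F$. Your explicit check that $\cD^k$ (not merely $\cD$) is an upwards family --- appending a pendant least element preserves all distances between original vertices --- fills a small detail the paper leaves implicit when it invokes Proposition~\ref{prop:upwards families} on $\cD^k$.
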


\subsection{Lower Bound for Modular Lattices}
Since Lemma \ref{lemma:distance in modular lattices} works for any modular
lattices, it is natural to ask whether we can achieve a similar constant-cost
protocol for computing distance thresholds in modular lattices. However, we show
that this is impossible.
\begin{lemma}
  There is a function $m(n) = O(n^4)$ such that if $G$ is any graph with $n$
  vertices (where $G(u,u) = 1$ for all $u$), there exists a modular lattice $M$
  with size $m(n)$ such that $G$ is an induced subgraph of $\cov(M)^2$.
\end{lemma}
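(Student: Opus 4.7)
The plan is to associate, to each $n$-vertex graph $G$, a modular lattice $M$ of size $O(n^4)$ and an injection $\phi: V(G) \to M$ whose image is an induced copy of $G$ in $\cov(M)^2$. I will place each $\phi(u) = X_u$ at a rank-$2$ element of $M$. Lemma~\ref{lemma:distance in modular lattices} then gives $\dist_{\cov(M)}(X_u, X_v) = 4 - 2\rank(X_u \wedge X_v)$, which is $\leq 2$ iff $X_u$ and $X_v$ share a common atom.

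The key design is the atomic structure. I will introduce one ``shared'' atom $\alpha_e$ for each edge $e \in E(G)$ and one ``private'' atom $\beta_u$ for each vertex ($O(n^2)$ atoms total), and set the atom set below $X_u$ to be $\{\beta_u\} \cup \{\alpha_e : u \in e\}$. Then $X_u \wedge X_v = \alpha_{\{u,v\}}$ (rank $1$) when $u \sim v$ and $\hat 0$ (rank $0$) otherwise, exactly matching the desired adjacency; the private atoms ensure $\phi$ is injective.

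Next I will close this partial structure to a modular lattice. By the modular rank identity $\rank(x \vee y) + \rank(x \wedge y) = \rank(x) + \rank(y)$, this forces $\rank(X_u \vee X_v) = 3$ when $u \sim v$ and $4$ otherwise, so I add a rank-$3$ element $Y_e$ (serving as $X_u \vee X_v$) for each edge and a rank-$4$ element for each non-edge ($O(n^2)$ new elements). Closure under meet and join further demands rank-$2$ elements for atom pairs not both contained in any single $X_u$ (most notably $\alpha_e \vee \alpha_{e'}$ for vertex-disjoint edges, of which there are $O(n^4)$), plus additional rank-$3$ and rank-$4$ elements for joins of three or more of the $X_u$'s.

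The main obstacle will be carrying out this closure so that modularity holds at every step while the total size stays at $O(n^4)$. One clean route is to realize $M$ concretely as a sublattice of a larger, ambient modular lattice such as the subspace lattice of $\mathbb{F}_q^N$ for suitable parameters: identify each atom with a line and each $X_u$ with a $2$-dimensional subspace containing the relevant lines, arranged so that $W_u \cap W_v$ has dimension $1$ iff $u \sim v$. Then modularity is inherited automatically, and bounding $|M|$ reduces to counting the distinct meets and joins generated by the chosen subspaces, where the $O(n^4)$ rank-$2$ elements coming from pairs of atoms are the dominant contribution.
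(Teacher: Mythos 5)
Your high-level plan mirrors the paper's (place each vertex at a rank-$2$ element whose atoms encode its incident edges, then invoke Lemma~\ref{lemma:distance in modular lattices} to read adjacency off $\rank(X_u\wedge X_v)$), but the specific atom design you fix at the outset is impossible to realize in \emph{any} modular lattice once $G$ contains a diamond, i.e.\ vertices $u,v,w,z$ with $u\sim v$, $w,z$ both adjacent to $u$ and $v$, and $w\not\sim z$. Indeed, in a modular lattice of finite length the rank function satisfies $\rank(x)+\rank(y)=\rank(x\vee y)+\rank(x\wedge y)$. Your design makes $\alpha_{uw},\alpha_{vw}$ distinct atoms below the rank-$2$ element $X_w$, so $X_w=\alpha_{uw}\vee\alpha_{vw}$, and likewise $X_z=\alpha_{uz}\vee\alpha_{vz}$. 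Since $\alpha_{uw},\alpha_{uz}\leq X_u$ and $\alpha_{vw},\alpha_{vz}\leq X_v$, we get $X_w\vee X_z\leq X_u\vee X_v$, and $\rank(X_u\vee X_v)=4-\rank(X_u\wedge X_v)\leq 3$ because $u\sim v$ forces a nonzero meet. Hence $\rank(X_w\wedge X_z)\geq 2+2-3=1$, so by your own distance formula $\dist(X_w,X_z)\leq 2$ even though $w\not\sim z$ --- the embedding is not induced. This is not a defect of the closure step but of the design itself (it uses only modularity and ``one distinct shared atom per edge''), and it is exactly why your $\mathbb{F}_q$-subspace instantiation breaks: $W_w+W_z\subseteq W_u+W_v$, a $3$-dimensional space, forces $W_w\cap W_z\neq 0$. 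A rank-$2$ realization of the diamond does exist, but only if you let shared atoms coincide (e.g.\ $W_u=\langle e_1,e_2\rangle$, $W_v=\langle e_1,e_3\rangle$, $W_w=\langle e_1,e_4\rangle$, $W_z=\langle e_2,e_3\rangle$), so any repair must abandon the one-atom-per-edge dictionary that drives your whole argument.

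There is also a secondary gap in the fallback route: taking the sublattice of the subspace lattice of $\mathbb{F}_q^N$ generated by the $W_u$'s does not obviously stay within $O(n^4)$ elements --- for large $N$ the joins of distinct subsets of planes are generically distinct subspaces, so the generated sublattice can have size exponential in $n$, and your sketch of ``capping'' at rank $4$ is not something a sublattice of a high-dimensional ambient lattice will do for you. The paper avoids both problems by building a height-$4$ lattice explicitly: atoms $a_e$ per edge, the original vertices at rank $2$, an element $b_e$ above the two endpoints of each edge, elements $c_{e,e'}$ for disjoint edge pairs, and $0_M,1_M$, with modularity argued by direct case analysis rather than inherited from an ambient lattice; the $O(n^4)$ bound is then just a count of the added elements. (Incidentally, the rank-identity test above is a useful sanity check to run against any construction of this shape, including the paper's, on graphs such as the diamond or a path with a disjoint extra edge.)
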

\begin{proof}
  Construct the lattice $M$ as follows:
  \begin{enumerate}
    \item Start with vertices $V$, which are all incomparable.
    \item For each edge $e = \{u,v\} \in E$, add vertices $a_e,b_e$ such that
      $a_e < u,v < b_e$.
    \item $\forall e = \{u,v\},e' = \{u',v'\} \in E$ such that $e \cap e' = \emptyset$ add a vertex
      $c_{e,e'}$ with $a_e,a_{e'} < c_{e,e'} < b_e,b_{e'}$.
    \item Add vertices $0_M$ and $1_M$ such that $0_M < a_e$ and $b_e < 1_M$ for
      all $e \in E$.
  \end{enumerate}
  First we prove that $M$ is a modular lattice and then we prove the bound on
  the size.
  \begin{claim}
    $M$ is a modular lattice.
  \end{claim}
  \begin{proof}[Proof of claim]
  Observe that all orderings $<$ directly imposed by this process are covering
  orders $\prec$. Let $A = \{a_e\}_{e \in E}, B = \{b_e\}_{e \in E}, C =
  \{c_e\}_{e \in E}$ and $V$ the
  original set of vertices. By construction, $M$ is graded with $\rank(0_M) = 0,
  \rank(A) = 1, \rank(V) = \rank(C) = 2, \rank(B) = 3, \rank(1_M) = 4$.
  Note that for every pair of vertices $x,y \in M, 0_M \leq x,y \leq 1_M$ so
  upper- and lower-bounds exist.

  Assume for contradiction that $M$ is not a modular lattice, so there exist
  incomparable $x,y \in M$ such that either $x\wedge y$ or $x \vee y$ does not
  exist, or such that $x \wedge y \prec x,y \not \prec x \vee y$ or $x \wedge y
  \not \prec x,y \prec x \vee y$.

  Case 1: Suppose $\rank(x) \neq \rank(y)$. Then $x \wedge y = 0_M$ and $x \vee
  y = 1_M$ so $x \wedge y \not \prec x,y \not \prec x \vee y$.

  Case 2: Suppose $x,y \in A$ so $x = a_e,y=a_{e'}$. Then $0_M = a_e \wedge a_{e'} \prec
  a_e,a_{e'}$. If $a_e,a_{e'} < u,v$ for $u,v \in V$ then $u,v \in e \cap {e'}$ so
  $u=v$. If $a_e,a_{e'} < v, c_{d,d'}$ for $v \in V$ and $c_{d,d} \in C$ then $v
  \in e \cap {e'}$ and $c_{d,d'} = c_{e,e'}$ so $e \cap {e'} = \emptyset$, a
  contradiction. Finally, if $a_e,a_{e'} < c_{d,d'}, c_{d',d''}$ then $c_{d,d'}
  = c_{d',d''} = c_{e,e'}$. So $a_e \vee a_{e'}$ exists and $a_e \wedge a_{e'} \prec
  a_e,a_{e'} \prec a_e \vee a_{e'}$. The same argument holds for $x,y \in B$.

  Case 3: Suppose $x,y \in V$ and assume $a_e,a_{e'} < x,y$. Then $x,y \in e \cap
  {e'}$ so $a_e = a_{e'}$. A similar argument holds for $x,y < b_e,b_{e'}$. So $x
  \wedge y \prec x,y \prec x \vee y$.

  Case 4: Suppose $x,y \in C$ so $x = c_{e,e'}, y = c_{d,d'}$. Suppose
  $a_s,a_t < c_{e,e'}, c_{d,d'}$. Then $s,t \in \{e,e'\} \cap \{d,d'\}$ so
  either $\{e,e'\} = \{d,d'\}$ or $s = t$. The same argument holds for
  $c_{e,e'},c_{d,d'} < b_s,b_t$ so $x \wedge y \prec x,y \prec x \vee y$.

  Case 5: Suppose $x \in V, y \in C$ so $y = c_{e,e'}$ which implies $e \cap
  {e'} = \emptyset$. If $x \notin e \cup {e'}$ then $x \wedge c_{e,e'} = 0_M$ and
  $x \vee c_{e,e'} = 1_M$ so $x \wedge c_{e,e'} \not \prec x,c_{e,e'} \not \prec x
  \vee c_{e,e'}$; so suppose $x \in e \cup {e'}$. If $a_e,a_{e'} < x,c_{e,e'}$ then
  $x \in e \cap {e'}$ which is a contradiction. Then $x \in e$ or $x \in
  {e'}$; say $x \in e$. Then $a_e = x \wedge c_{e,e'}$. The same argument holds
  for $B$ so $a_e = x \wedge c_{e,e'} \prec x,c_{e,e'} \prec x \vee c_{e,e'} =
  b_e$.
  \end{proof}
  \begin{claim}
    $G$ is an induced subgraph of $\cov(M)^2$.
  \end{claim}
  \begin{proof}[Proof of claim]
    Suppose $\{u,v\} \in E$. Then there is $a_e \prec u,v$ so $\dist(u,v) \leq
    2$ in $\cov(M)$.
    Now let $u,v \in V(G)$ and suppose $\dist(u,v) \leq 2$ in $\cov(M)$ so that,
    by Lemma \ref{lemma:distance in modular lattices},
    $u \wedge v \prec u,v \prec u \vee w$. By construction, either $u=v$ so
    $G(u,v) = G(u,u) = 1$, or $u \wedge v = a_e$ for some $e \in E(G)$ so $u,v
    \in e$ and therefore $G(u,v) = 1$.
  \end{proof}
  The size of $M$ is at most $2 + |E(G)| + |E(G)|^2 = O(n^4)$.
  Let $m(n)$ be the maximum size of a modular lattice obtained in this way from
  a graph of size $n$. We want all constructions to be of the same size, so
  repeatedly append new least elements until the size reaches $m(n)$; this
  maintains the modular lattice property.
\end{proof}

\begin{theorem}
\label{thm:lower bound for modular lattices}
  Let $\cM = (\cM_n)$ be the family of cover graphs of modular lattices.
  $R^\ob(\cM^2) \geq \Omega(n^{1/4})$.
\end{theorem}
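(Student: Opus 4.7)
The plan is to reduce the universal SMP problem for all graphs to the 2-distance problem on modular lattices using the lemma constructed just above the theorem, and then invoke Theorem \ref{thm:lower bound for all graphs} (which gives $R^\ob(\cG) = \Theta(n)$) to obtain the desired lower bound.

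Concretely, suppose there is a universal SMP protocol $\Pi$ for $\cM^2$ whose cost on lattices of size $N$ is $c(N)$. I will use $\Pi$ to build a universal SMP protocol for the family $\cG$ of all graphs on $n$ vertices (with self-loops present, as in the notation convention used for Theorem \ref{thm:lower bound for all graphs}). Given a graph $G$ on $n$ vertices, Alice and Bob both apply the deterministic construction from the preceding lemma to produce the modular lattice $M = M(G)$ of size at most $m(n) = O(n^4)$ such that $G$ is an induced subgraph of $\cov(M)^2$. They then simulate $\Pi$ on the input $(\cov(M)^2, x, y)$; since $G$ is an induced subgraph of $\cov(M)^2$ and vertex identifiers are preserved, the output of $\Pi$ correctly decides whether $x, y$ are adjacent in $G$. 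The communication cost of this simulation is exactly $c(m(n)) = c(O(n^4))$.

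Combining this with Theorem \ref{thm:lower bound for all graphs}, we get
\[
\Omega(n) = R^\ob(\cG_n) \leq c\bigl(O(n^4)\bigr).
\]
Setting $N = m(n) = \Theta(n^4)$ so that $n = \Theta(N^{1/4})$, this yields $c(N) = \Omega(N^{1/4})$, i.e.\ $R^\ob(\cM^2) = \Omega(n^{1/4})$ as claimed.

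The only subtlety worth checking carefully is that Alice and Bob can agree on the embedding of $G$ into $\cov(M)^2$ without extra communication, and that adjacency in the induced subgraph genuinely coincides with adjacency in $\cov(M)^2$ (including self-loops, which the convention for $G^k$ already handles). Both points follow immediately from the deterministic construction of $M$ from $G$ given in the preceding lemma and from the ``induced subgraph'' conclusion of its second internal claim, so no real obstacle arises; the content of the argument is entirely in the $O(n^4)$ lattice construction, which has already been done.
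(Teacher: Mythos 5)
Your proof is correct and takes essentially the same approach as the paper: reduce from the $\Omega(n)$ lower bound for the family of all graphs (Theorem \ref{thm:lower bound for all graphs}) via the preceding lemma's $O(n^4)$-size modular lattice construction. The only subtlety you half-miss (and the paper flags explicitly) is that all constructed lattices must have the \emph{same} size $m(n)$, not just ``at most $m(n)$'', so that the referee — who never sees $G$ — uses a single decision function; but this is already handled inside the lemma by padding with least elements, so your reference to it suffices.
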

\begin{proof}
  Suppose there is a protocol for $\cM^2$ with cost $o(n^{1/4})$.
  Given a graph $G$ of size $n$, Alice and Bob construct the modular lattice of
  size $m(n) = O(n^4)$ with $G$ an induced subgraph of $\cov(M)^2$ and run the
  protocol for $\cM^2$ with size $m(n)$ (observe that all possible constructions
  must be of the same size, since the referee does not know which lattice Alice
  and Bob construct). This has cost $o(m(n)^{1/4}) = o(n)$, which contradicts
  Theorem \ref{thm:lower bound for all graphs}.
\end{proof}

\section{Communication on Efficiently Labelable Graphs}
\label{section:planar graphs}

In this section we take inspiration from the field of implicit graphs and graph
labeling and show that one may often, but not always, obtain constant-cost
adjacency and $k$-distance protocols for families that are commonly studied in
the graph labeling literature. 

\subsection{Trees, Forests, and Interval Graphs}

In this section we pick the low-hanging fruit from trees and forests (and
interval graphs). Applying Theorem \ref{thm:universal smp to adjacency labeling}
with the next lemma, we get Theorem \ref{thm:k distance for trees}.

\begin{lemma}
Let $\cT = (\cT_n)$ be the family of trees of size $n$.  $R^\ob_\epsilon(\cT^k)
= O\left(k \log \frac{1}{\epsilon} \right)$, and this protocol will correctly
compute the distance in the case $\dist(x,y) \leq k$.
\end{lemma}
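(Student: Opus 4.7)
The plan is to root $T$ at a canonical vertex (say vertex $1$) and exploit the following elementary fact: writing $a^v_i$ for the $i$-th ancestor of $v$ (with $a^v_0 = v$, and a sentinel $\perp$ when $i > \depth(v)$), we have $\dist(x,y) \leq k$ if and only if $a^x_i = a^y_j$ for some $i,j \geq 0$ with $i+j \leq k$, and the smallest such $i+j$ equals $\dist(x,y)$ (the shared vertex being the least common ancestor). The protocol will hash every vertex randomly and have Alice and Bob send the hashes of their $k+1$ closest ancestors, together with just enough depth information for the referee to align the two sequences.

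Concretely, using shared randomness Alice and Bob sample a uniformly random $h : V(T) \to \{0,1\}^b$ with $b = \lceil \log((k+1)/\epsilon) \rceil$. Alice transmits $(\depth(x) \bmod (2k+1),\ h(a^x_0), \ldots, h(a^x_k))$, and Bob transmits the analogous tuple for $y$; the total cost is $(k+1)b + O(\log k) = O(k \log(1/\epsilon))$. The referee recovers $d = \depth(x) - \depth(y)$ uniquely in $\{-k,\ldots,k\}$ from the two residues modulo $2k+1$, and rejects if no such representative exists (since then $|\depth(x)-\depth(y)| > k$ already forces $\dist(x,y) > k$). Assuming WLOG $d \geq 0$, the referee scans $j = 0, 1, \ldots, \lfloor(k-d)/2\rfloor$ and returns the smallest value $d+2j$ for which $h(a^x_{d+j}) = h(a^y_j)$, returning ``no'' if none matches.

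Correctness is one-sided. If $\dist(x,y) \leq k$ then the genuine least common ancestor gives $a^x_{d+j^\ast} = a^y_{j^\ast}$ with $d + 2j^\ast = \dist(x,y)$, so the hashes match and the exact distance is reported --- this handles the ``correctly compute the distance'' clause. If $\dist(x,y) > k$ then every one of the $\leq k+1$ tested pairs consists of distinct vertices at the same depth, so each collides under $h$ with probability $2^{-b}$; a union bound yields total error $\leq (k+1)2^{-b} \leq \epsilon$. Combining with Theorem~\ref{thm:universal smp to adjacency labeling} will then give the $O(k\log n)$ labeling scheme promised in Theorem~\ref{thm:k distance for trees}.

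The main obstacle I anticipate is the depth-alignment step: without any depth information the referee has no way of knowing which pair of ancestor hashes to compare, but sending raw depths would cost $\Theta(\log n)$ bits and destroy the bound. The modular-arithmetic trick --- valid precisely because $\dist(x,y) \leq k$ forces $|\depth(x)-\depth(y)| \leq k$ --- cuts this to $O(\log k)$ extra bits, which fits comfortably within the claimed budget and leaves the per-pair analysis completely standard.
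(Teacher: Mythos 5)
Your protocol is close in spirit to the paper's (random colors on ancestors, aligned by depth, with a small trick to avoid sending raw depths), but as written it does not achieve the stated bound $O(k\log\frac{1}{\epsilon})$ --- the culprit is your decision rule. Because the referee returns the \emph{smallest} $d+2j$ with a hash match (scanning bottom-up), any one of the roughly $k/2$ tested pairs colliding ruins the answer: a collision at some $j$ strictly below the true LCA makes the reported distance too small when $\dist(x,y)\leq k$, and makes the referee accept when $\dist(x,y)>k$. So you genuinely need the union bound over $\Theta(k)$ pairs, which is exactly why you set $b=\lceil\log((k+1)/\epsilon)\rceil$; but then the cost is $(k+1)b=\Theta\bigl(k\log k+k\log\frac{1}{\epsilon}\bigr)$, not $O(k\log\frac{1}{\epsilon})$ --- for the standard constant $\epsilon$ this is $O(k\log k)$ versus the lemma's $O(k)$, and it would propagate to an $O(k\log k\log n)$ labeling instead of the $O(k\log n)$ claimed in Theorem \ref{thm:k distance for trees}. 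The paper avoids this by matching \emph{consecutively from the top}: it partitions vertices into depth bands of width $k$, sends the colors along the path from the band roots down to the vertex, and the referee takes the longest matching \emph{prefix}; then an error requires a collision at the single first position where the two paths genuinely differ (plus a constant number of band-root equality tests), so constant-size colors ($m=\lceil 6/\epsilon\rceil$) suffice and the cost is $O(k\log\frac{1}{\epsilon})$. You can rescue your own scheme the same way: keep the mod-$(2k+1)$ depth alignment, but have the referee find the longest run of consecutive matches ending at the top of the aligned window and report the corresponding $d+2j_0$; then only one critical collision event matters and $b=O(\log\frac{1}{\epsilon})$ is enough.

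Two smaller points. First, ``the exact distance is reported'' with probability $1$ is an overstatement even after the fix: a collision below the LCA can still cause underreporting, so exactness holds only with probability $1-\epsilon$ (which is all the lemma needs --- the paper's guarantee is also probabilistic; only the accept/reject decision is one-sided). Second, your branch ``rejects if no such representative exists'' never triggers: every difference of residues mod $2k+1$ has exactly one representative in $\{-k,\dots,k\}$, so when $|\depth(x)-\depth(y)|>k$ the referee simply computes a wrong $d$. Soundness survives because then every compared pair sits at different depths and hence consists of distinct vertices (and one can check that two sentinel entries can never be compared inside the scan range, since that would force $\depth(x)+\depth(y)<k$ and hence $\dist(x,y)<k$), but the argument needs to be stated this way rather than via a rejection that cannot occur.
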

\begin{proof}
  Consider the following protocol. On input $(T,x),(T,y)$ for a tree $T$, Alice
  and Bob perform the following.
  \begin{enumerate}
    \item Partition the vertices of $T$ into sets $T_1, \dotsc, T_m$ such that
      $T_i = \{ v \in V(T) : (i-1)k \leq \depth(v) < ik \}$. For each $v \in
      V(T)$ let $t(v)$ be the index of the unique set satisfying $v \in
      T_{t(v)}$.
    \item For each vertex $v \in V(T)$ assign a uniformly random color $\ell(v)$
      in $[m]$ for $m = \ceil{6/\epsilon}$. Let $x'$ be root of the subtree of
      $T_{t(x)}$ that contains $x$, and let $x''$ be the root of the subtree of
      $T_{t(x)-1}$ that contains $x$. Let $x_0, x_1, \dotsc, x_k, \dotsc,
      x_{k_1} = x$ be the path from $x''$ to $x$ (with $x_k = x'$) and let $y_0,
      \dotsc, y_k, \dotsc, y_{k_2}$ be the path from $y''$ to $y$. Alice and Bob
      send $\ell(x_0), \dotsc, \ell(x_{k_1})$ and $\ell(y_0), \dotsc,
      \ell(y_{k_2})$ respectively.
    \item If $\ell(x') = \ell(y')$, let $p$ be the maximum index such that
      $\ell(x_i) = \ell(y_i)$ for each $k < i \leq p$. Let $d =
      (k_1-p)+(k_2-p)$. If $\ell(x'') = \ell(y'')$, let $p$ be the maximum
      index such that $\ell(x_i) = \ell(y_i)$ for each $i \leq p$ and let $d =
      (k_1-p)+(k_2-p)$. If $\ell(x'') = \ell(y')$ let $p$ be the maximum index such
      that $\ell(x_i) = \ell(y_{k+i})$ for each $i \leq p$ and let $d = (k_1-p)
      + (k_2-k-p)$. If $\ell(x') = \ell(y'')$ do the same with $x,y$ reversed.
      In each case, if $d \leq k$, the referee outputs $d$, otherwise they
      output ``$>k$''. If none of the above cases hold, output ``$>k$''.
  \end{enumerate}
  The cost of this protocol is $2k\ceil{\log m} = O(k\log(1/\epsilon))$.
  With probability at least $1-4/m > 1-\epsilon/2$, each of the possible
  equalities $x''=y'',x'=y',x''=y',x'=y''$ will be correctly observed by the
  referee.  If $\{x',x''\} \cap \{y',y''\} = \emptyset$ then $x,y$ are not in
  the same subtree rooted at depth $\depth(x'')$, so the distance from $x$ to
  any common ancestor of $x,y$ is at least $\dist(x,x'') > k$. Therefore if
  $\dist(x,y) \leq k$, one of these equalities will hold. If $x'' = y''$ and $q$
  is the maximum integer such that $x_i = y_i$ for all $i \leq q$ then
  $\dist(x,y) = (k_1-q)+(k_2-q)$, because the deepest common ancestor of $x,y$
  is at depth $\depth(x_0) + q$. Conditional on the 4 equalities being correctly
  observed, we will have $d = (k_1-p)+(k_2-p) \leq k$ since $p \geq q$. If $p >
  q$ then $\ell(x_{q+1})=\ell(y_{q+1})$ even though $x_{q+1} \neq y_{q+1}$,
  which occurs with probability $1/m < \epsilon/2$. Therefore the probability
  that $d \neq \dist(x,y)$ is at most $2(\epsilon/2) =\epsilon$ when $\dist(x,y)
  \leq k$. A similar argument holds in the other 3 cases.

  If $\dist(x,y) > k$ then still with probability at least $1-\epsilon/2$ all 4
  possible equalities are correctly observed. Following the same argument as in
  the equality case, we see that if any of the equalities hold we will have $d =
  \dist(x,y)$ with probability greater than $1-\epsilon/2$, for total error
  probability less that $\epsilon$. If none of the 4 equalities hold then the
  probability of error is at most $\epsilon/2$.
\end{proof}
Since trees have efficient protocols, one might wonder about generalizations of
trees. The \emph{arboricity} of a graph is one such generalization, which
measures the minimum number of forests required to partition all the edges.
\begin{definition}
  A graph $G = (V,E)$ has \emph{arboricity} $\alpha$ iff there exists an edge
  partition of $G$ into forests $T_1, \dotsc, T_\alpha$. Equivalently, for $S$
  ranging over the set of subgraphs of $G$, $G$ has
  \[
      \max_{S} \left\lceil \frac{E(S)}{V(S)-1} \right\rceil \leq \alpha \,.
  \]
\end{definition}
Low-arboricity graphs easily admit an efficient universal SMP protocol for
adjacency.
\begin{proposition}
\label{prop:low arboricity protocol}
  Let $\cF$ be any family of graphs with arboricity at most $\alpha$.
  For all $\epsilon > 0, R^\ob_\epsilon(\cF) = O\left(\alpha
  \log\frac{\alpha}{\epsilon}\right)$.
\end{proposition}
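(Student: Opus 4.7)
The plan is to mimic the degree-2 example from the introduction, replacing ``neighbors'' with ``parents in a forest decomposition''. Since $G$ has arboricity at most $\alpha$, we can partition $E(G)$ into forests $T_1,\dots,T_\alpha$. Root each $T_i$ arbitrarily and orient its edges toward the roots; this gives, for every vertex $v$, a list of at most $\alpha$ ``parents'' $p_1(v),\dots,p_\alpha(v)\in V(G)\cup\{\bot\}$ (one per forest). The key combinatorial observation is that $\{x,y\}\in E(G)$ if and only if $y\in\{p_i(x)\}_i$ or $x\in\{p_i(y)\}_i$.

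Next I set up the communication. Using shared randomness, Alice and Bob sample a uniformly random labeling $\ell:V(G)\to[m]$ where $m=\lceil 2\alpha/\epsilon\rceil$ (with a convention, say $\ell(\bot)=\bot$, that distinguishes the null symbol). Alice sends the $\alpha+1$ labels $\ell(x),\ell(p_1(x)),\dots,\ell(p_\alpha(x))$ and Bob sends $\ell(y),\ell(p_1(y)),\dots,\ell(p_\alpha(y))$. The referee accepts iff $\ell(y)$ matches some $\ell(p_i(x))$ or $\ell(x)$ matches some $\ell(p_i(y))$. The total communication is $2(\alpha+1)\lceil\log m\rceil = O(\alpha\log(\alpha/\epsilon))$, as required.

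For correctness: the protocol is one-sided. If $\{x,y\}\in E(G)$, the combinatorial observation ensures an actual equality occurs among the compared pairs, so the referee accepts with probability $1$. If $\{x,y\}\notin E(G)$, a false accept requires a label collision between $y$ and some $p_i(x)$ (or symmetrically), and each of the $2\alpha$ compared pairs independently collides with probability at most $1/m$; a union bound gives error at most $2\alpha/m\le\epsilon$.

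The only mild subtlety is handling null parents (vertices that are roots in some $T_i$) so that they do not spuriously match across $x$ and $y$; this is purely cosmetic and handled by reserving a distinguished symbol. No deeper obstacle arises, because the arboricity bound directly supplies a ``constant-degree-per-forest'' structure that turns adjacency into a small disjunction of equality tests, and equality is exactly what shared-randomness hashing solves cheaply in the universal SMP model.
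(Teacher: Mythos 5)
Your proposal is correct and essentially identical to the paper's proof: decompose into $\alpha$ forests, hash vertices with shared randomness into $[m]$ for $m=\lceil 2\alpha/\epsilon\rceil$, send one's own label plus the $\alpha$ parent labels, and accept iff one party's label matches the other's parent labels, giving one-sided error by a union bound. Your note about reserving a null symbol for rootless parents is a small tidy-up the paper leaves implicit.
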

\begin{proof}
  On the graph $G$ and vertices $x,y$, Alice and Bob perform the following:
  \begin{enumerate}
    \item Compute a partition of $G$ into $\alpha$ forests $T_1, \dotsc,
    T_\alpha$.
    \item Assign to each vertex $v$ a uniformly random number $\ell(v) \sim 
    [m]$ for $m = \ceil{2\alpha/\epsilon}$.
    \item Let $x_i$ be the parent of $x$ in tree $i$ and let $y_i$ be the parent
    of $y$. Alice sends $\ell(x)$ and $\ell(x_i)$ for each $i$, and Bob does
    this same with $y$.
    \item The referee accepts iff $\ell(x) = \ell(y_i)$ or $\ell(y) = \ell(x_i)$
    for any $i$.
  \end{enumerate}
  This protocol has one-sided error since if $x,y$ are adjacent then either $x_i
  = y$ or $y_i = x$ for some $i$, so the referee will accept with probability 1.
  If $x,y$ are not adjacent then the referee will accept with probability at
  most $2\alpha \cdot \frac{1}{m} < \epsilon$.
\end{proof}
However, even graphs of arboricity 2 do not admit efficient
protocols or labeling schemes for distance 2, which we can show by embedding an
arbitrary graph of size $\Omega(\sqrt n)$ into the 2-closure of an arboricity 2
graph of size $n$:
\begin{proposition}
\label{prop:lower bound for arboricity 2 graphs}
  Let $\cF$ be the family of arboricity-2 graphs. Then
  $R^\ob(\cF^2) \geq \Omega(\sqrt n)$.
\end{proposition}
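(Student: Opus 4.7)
The plan is to mirror the reduction strategy used for modular lattices (Theorem \ref{thm:lower bound for modular lattices}): embed an arbitrary graph into the $2$-closure of an arboricity-$2$ graph of only quadratic size, then invoke the $\Omega(n)$ lower bound for the family of all graphs (Theorem \ref{thm:lower bound for all graphs}).

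The main step is the embedding construction. Given an arbitrary graph $G$ on $n$ vertices, I would take $H$ to be the \emph{subdivision} of $G$: keep the vertex set $V(G)$, and for every edge $e=\{u,v\}\in E(G)$ introduce a fresh ``midpoint'' vertex $w_e$ with edges $\{u,w_e\}$ and $\{w_e,v\}$. Clearly $|V(H)|\leq n+\binom{n}{2}=O(n^2)$. Two original vertices $u,v\in V(G)$ are never adjacent in $H$, and they lie at distance exactly $2$ in $H$ iff they share a common midpoint neighbor, which happens iff $\{u,v\}\in E(G)$. Combined with the convention that each vertex is adjacent to itself in $H^2$, this shows that $G$ is the induced subgraph of $H^2$ on the set $V(G)$.

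Next I need to verify that $H$ has arboricity at most $2$. I would give an explicit decomposition into two forests by two-coloring the edges: for each $e=\{u,v\}\in E(G)$ color the edge $\{u,w_e\}$ red and the edge $\{w_e,v\}$ blue. In each color class every midpoint $w_e$ has degree exactly $1$, so each color class is a disjoint union of stars centered at original vertices, hence a forest. Thus $H\in \cF$.

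Finally, as in the modular lattices proof, the referee cannot see which $H$ Alice and Bob construct, so every construction must have the same size; I would pad $H$ with isolated vertices up to a uniform size $m(n)=O(n^2)$ (this does not alter arboricity or the induced-subgraph relation). Now suppose for contradiction that $R^\ob(\cF^2)=c(n)=o(\sqrt{n})$. Given any $G\in\cG_n$, Alice and Bob build the padded subdivision $H$ of size $m(n)$, run the assumed protocol for $\cF^2$ on $H$ with the vertices $x,y\in V(G)\subseteq V(H)$, and output its answer. Because $G$ is an induced subgraph of $H^2$, this decides adjacency in $G$, with communication cost $c(m(n))=o(\sqrt{O(n^2)})=o(n)$, contradicting Theorem \ref{thm:lower bound for all graphs}.

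The only subtle point I expect to have to double-check is that the embedding behaves correctly with respect to self-loops and the referee's ignorance of the construction; both are handled exactly as in the modular lattice argument by padding to a uniform size depending only on $n$. The arboricity bound and the distance-2 characterization are both elementary once the subdivision viewpoint is adopted, so I do not anticipate any real difficulty beyond laying out these observations carefully.
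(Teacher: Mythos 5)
Your proof is correct and is essentially the same as the paper's: you both build the subdivision graph (one midpoint vertex per edge, giving an explicit two-forest decomposition into stars), pad to a uniform size $O(n^2)$ (the paper does this by adding the vertex $e_{\{u,v\}}$ even for non-edges, you do it by appending isolated vertices — the same thing), and then reduce from Theorem~\ref{thm:lower bound for all graphs} exactly as in Theorem~\ref{thm:lower bound for modular lattices}. Your handling of the self-loop convention and the uniform-size requirement matches the paper's treatment, so there is nothing to correct.
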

\begin{proof}
  The lower bound is obtained via Theorem \ref{thm:lower bound for all graphs}
  in the same way as in Theorem \ref{thm:lower bound for modular lattices},
  using the following construction. 
  For all simple graphs $G = (V,E)$ with $n$ vertices, there exists a graph $A$
  of size $n + {n \choose 2}$ and arboricity 2 such that $G$ is an induced
  subgraph of $A^2$.  Let $A$ be the graph defined as follows:
  \begin{enumerate}
    \item Add each vertex $v \in V$ to $A$;
    \item For each pair of vertices $\{u,v\}$ add a vertex $e_{\{u,v\}}$ and add
      edges $\{u, e_{\{u,v\}}\}, \{v, e_{\{u,v\}}\}$ iff $\{u,v\} \in E$.
  \end{enumerate}
  This graph has arboricity 2 since for each $e_{\{u,v\}}$ we may assign each of
  its 2 incident edges a color in $\{1,2\}$ (if the edges exist). Then the edges
  with color $i \in \{1,2\}$ form a forest with roots in $V$.
\end{proof}
Now we give an example of a family, the interval graphs, with size $O(\log n)$
adjacency labels but with no constant-cost universal SMP protocol; in fact,
randomization does not give more than a constant-factor improvement for this
family.  An \emph{interval graph} of size $n$ is a graph $G$ where for each
vertex $x$ there is an interval $X \subset [2n]$ such that any two vertices
$x,y$ are adjacent in $G$ iff $X \cap Y \neq \emptyset$. These have an $O(\log
n)$ adjacency labeling scheme \cite{KNR92} (one can simply label a vertex with
its two endpoints in $[2n]$).

There is a simple reduction from the \textsc{Greater-Than} communication
problem, in which Alice and Bob receive integers $x,y \in [n]$ and must decide
if $x < y$.  It is known that the one-way public-coin communication cost of
\textsc{Greater-Than} is $\Omega(\log n)$ \cite{MNSW98}, so
$R^\|(\textsc{Greater-Than}) = \Omega(\log n)$.
\begin{proposition}
\label{prop:lower bound for interval graphs}
  For the family $\cF$ of interval graphs, $R^\ob(\cF) = \Omega(\log n)$.
\end{proposition}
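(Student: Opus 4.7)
I will flesh out the ``simple reduction'' from \textsc{Greater-Than} mentioned just above the statement, by exhibiting for each $n$ a single interval graph $G_n \in \cF$ of size $2n$ whose adjacency function, restricted to a suitable bipartition of its vertex set between Alice and Bob, is exactly \textsc{Greater-Than} on $[n]$. Once I have this, Lemma \ref{lemma:randomized hierarchy} gives $R^\ob(\cF) \geq \max_{G \in \cF} R^\|(\ADJ(G)) \geq R^\|(\ADJ(G_n))$, and the restriction turns the right-hand side into $R^\|(\textsc{Greater-Than}) = \Omega(\log n)$.

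Concretely, I build $G_n$ from $2n$ intervals inside a bounded range: for each $i \in [n]$ a ``left'' interval $u_i$ anchored at the leftmost point and ending at $i$, and a ``right'' interval $v_i$ starting at $i$ and ending at the rightmost point. Then $u_i \cap v_j \neq \emptyset$ iff $j \leq i$, so in $G_n$ we have $G_n(u_x, v_y) = \ind{y \leq x}$, which is precisely the complement of \textsc{Greater-Than} on $(x,y)$. All $u$-$u$ and $v$-$v$ intersections are nonempty, which is harmless because Alice and Bob will only ever use $u$'s and $v$'s respectively. Any SMP protocol computing $\ADJ(G_n)$ therefore yields an SMP protocol for \textsc{Greater-Than} by having Alice identify her input $x \in [n]$ with the vertex $u_x$ and Bob identify his input $y$ with $v_y$; restricting the SMP input domain to this sub-rectangle can only decrease cost, so $R^\|(\ADJ(G_n)) \geq R^\|(\textsc{Greater-Than}) = \Omega(\log n)$.

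There is essentially no hard step; the only things to verify are that the constructed interval family is a legitimate member of $\cF$ (it is, being $2n$ intervals with integer endpoints in a range of size at most $4n$, possibly padded by trivial intervals to match the family's size indexing) and that passing from $\ADJ(G_n)$ on all of $V(G_n)^2$ to the restricted rectangle $\{u_i\} \times \{v_j\}$ cannot raise the SMP cost, which is immediate since Alice and Bob can simply ignore the part of their input outside this rectangle. The entire content is choosing the right single graph $G_n \in \cF$; the Greater-Than lower bound of \cite{MNSW98} quoted in the preceding paragraph does the rest.
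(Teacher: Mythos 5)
Your proof is correct and uses essentially the same idea as the paper: construct the interval graph on $2n$ vertices with ``left'' intervals $[1,i]$ and ``right'' intervals $[j,n]$, whose adjacency restricted to the left-right rectangle is (the complement of) \textsc{Greater-Than}, then invoke the $\Omega(\log n)$ lower bound from \cite{MNSW98}. The small presentational difference is that you route the reduction through Lemma~\ref{lemma:randomized hierarchy} (which immediately gives $R^\ob(\cF) \geq R^\|(\ADJ(G_n))$) and make a single adjacency query with Alice on $u_x$ and Bob on $v_y$, whereas the paper builds a \textsc{Greater-Than} protocol directly from the universal SMP protocol and issues two adjacency queries (on $([1,x],[1,y])$ and $([1,x],[y,n])$); your single-query version via the hierarchy lemma is slightly cleaner, but the key construction and lower-bound source are the same.
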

\begin{proof}
  We can use a universal SMP protocol for $\cF$ to get a protocol for
  \textsc{Greater-Than} as follows.  Alice and Bob construct the interval graph
  with intervals $[1,i], [i,n]$ for each $i \in [n]$, so there are $2n$ vertices
  in $G$. On input $x,y \in [n]$, Alice and Bob compute adjacency on the
  intervals $[1,x],[1,y]$ and then again on $[1,x],[y,n]$. Assume both runs of
  the protcol succeed. Then when the output is 1 for both runs we must have $y
  \in [1,x]$ so $y \leq x$ and otherwise we have $y \notin [1,x]$ so $x < y$.
\end{proof}

\subsection{Planar Graphs}

Write $\cP_n$ for the set of planar graphs of size $n$ and write $\cP = (\cP_n)$
for the family of planar graphs. Gavoille \etal~\cite{GPPR04} gave an
$O(\sqrt n \log n)$ labeling scheme where $\dist(x,y)$ can be computed from the
labels of $x,y$, and Gawrychowski and Uzna{\'n}ski \cite{GU16} improved this to
$O(\sqrt n)$.  These labeling schemes recursively identify size-$O(\sqrt n)$
sets $S$ and record the distance of each vertex $v$ to each $u \in S$, so the
$\sqrt n$ factor is unavoidable using this technique. We want to solve
$k$-distance with a cost independent of $n$, so we need a new method. Our main
tool is Schnyder's elegant decomposition of planar graphs into trees:

\begin{theorem}[Schnyder \cite{Schn89}, see \cite{Fel12}]
  Define the dimension $\dim(G)$ of a graph $G$ as is the minimum $d$ such that
  there exist total orders $<_1, \dotsc, <_d$ on $V(G)$ satisfying:

  (*) For every edge $\{u,v\} \in E$ and $w \notin \{u,v\}$ there exists $<_i$
  such that $u,v <_i w$.

$G$ is planar iff $\dim(G) \leq 3$. If $G$ is planar then there exists a
partition $T_1,T_2,T_3$ of the edges into directed trees satisfying the
following.  Let $T^{-1}_i$ be edge-induced directed graph on $V(G)$ obtained by
reversing the direction of each edge in $T_i$. The graphs with edges $T_i \cup
T^{-1}_{i-1} \cup T^{-1}_{i+1}$ have linear extensions $<_i$ such that
$<_1,<_2,<_3$ satisfy (*).
\end{theorem}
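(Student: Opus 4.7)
The plan is to prove both directions, with the substantive work on the planar $\Rightarrow$ $\dim \leq 3$ direction via construction of a \emph{Schnyder wood}. First I would reduce to the case where $G$ is a maximal planar graph (triangulation): if we build orders $<_1,<_2,<_3$ satisfying property (*) for a triangulation $T \supseteq G$, then restricting each $<_i$ to $V(G)$ still satisfies (*) for edges of $G$, since any external witness $w$ for an edge of $G$ is still a vertex of $T$.

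Fix a planar embedding of the triangulation with outer face $\{r_1,r_2,r_3\}$. The core object is a Schnyder wood: an orientation and 3-coloring of interior edges so that each interior vertex $v$ has exactly one outgoing edge in each color $i \in \{1,2,3\}$, and the cyclic order around $v$ is (out-$i$, in-$(i{-}1)$, out-$(i{+}1)$, in-$i$, out-$(i{-}1)$, in-$(i{+}1)$) with indices mod $3$. I would prove existence by a canonical-ordering / shelling argument: iteratively peel off a vertex on the current outer face whose removal maintains the triangulation property, and assign edge colors in reverse order of deletion. Once the wood exists, let $T_i$ be the set of edges of color $i$ oriented toward $r_i$; each $T_i$ is a spanning in-tree rooted at $r_i$, and $T_1,T_2,T_3$ partition the interior edges (appending outer edges appropriately).

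Next I would extract the three linear orders. For each vertex $v$, the three monochromatic paths $P_i(v)$ from $v$ to $r_i$ in $T_i$ partition the open disk bounded by the outer face into three regions $R_1(v),R_2(v),R_3(v)$, where $R_i(v)$ is the region between $P_{i-1}(v)$ and $P_{i+1}(v)$. Let $\alpha_i(v)$ count the interior faces in $R_i(v)$, so $(\alpha_1(v),\alpha_2(v),\alpha_3(v))$ behave like barycentric coordinates summing to a constant. Define $<_i$ as the order induced by $\alpha_i$, with ties broken consistently (e.g., by the tree structure). The key structural lemma, which is also the main obstacle, is that for any edge $\{u,v\}$ and any third vertex $w$, there exists an $i$ with $u,v <_i w$, i.e., $w$ lies strictly in the ``$i$-far'' region relative to both endpoints. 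Proving this requires a careful region-counting argument: traversing the edge $\{u,v\}$ changes $\alpha_i$ by a controlled amount in two of the three coordinates, while the planar embedding forces $w$ to sit in one of the three regions of both $u$ and $v$ in a way that strictly increases $\alpha_i$ for some fixed $i$.

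For the converse $\dim \leq 3 \Rightarrow$ planar, I would use the orders to embed $G$: place $v$ at the point $(\rho_1(v),\rho_2(v),\rho_3(v))/n$ where $\rho_i(v)$ is the rank of $v$ under $<_i$, yielding a point on a $2$-simplex. Draw each edge as a straight segment and use property (*) to show no two edges cross: any would-be crossing between $\{u,v\}$ and $\{u',v'\}$ forces some endpoint of one to lie in the convex hull determined by the other, contradicting that for each edge $\{a,b\}$ the third endpoint $c$ must satisfy $a,b <_i c$ in some coordinate $i$. The hard part throughout is the region-counting lemma in the forward direction; the reverse direction is mostly geometric bookkeeping once the orders are in hand.
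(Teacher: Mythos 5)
The paper does not prove this theorem; it is stated as a black-box citation to Schnyder~\cite{Schn89} and the survey~\cite{Fel12}, so there is no paper proof to compare against. That said, your outline is a faithful sketch of the standard proof: reduce to triangulations, build a Schnyder wood by a canonical-ordering shelling, define the three region counts $\alpha_i(v)$, order vertices by $\alpha_i$, and prove (*) by a face-counting argument; the converse is Schnyder's straight-line barycentric embedding. You correctly flag that the region-counting lemma carries the weight.

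One genuine gap relative to the \emph{precise} statement you were asked to prove: the statement does not merely assert $\dim(G)\le 3$ but specifically that the orders $<_i$ can be taken to be linear extensions of the DAGs with edge set $T_i \cup T^{-1}_{i-1} \cup T^{-1}_{i+1}$. Your construction orders by $\alpha_i$, and this does in fact give such a linear extension because $\alpha_i$ strictly increases along $T_i$ (towards $r_i$) and strictly decreases along $T_{i\pm 1}$ (equivalently, increases along $T^{-1}_{i\pm 1}$); but you never state or prove this monotonicity. It is not an afterthought: the paper later uses exactly this linear-extension property (in the argument that $\mathsf{split}(G)$ is planar), so a complete proof must establish it explicitly, not just the weaker conclusion that property (*) holds. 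You should also pin down the tie-breaking rule so that the resulting total orders remain linear extensions, e.g., break ties consistently with the tree order in $T_i$.

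A smaller point: your reverse direction sketch ($\dim\le 3\Rightarrow$ planar) places vertices by ranks and argues edges do not cross; to make that rigorous you need the ``suspension'' trick (handle the three extreme vertices separately) or argue directly in the $2$-simplex using the order condition to show the open segments for two disjoint edges are separated by a coordinate hyperplane. As written it reads as geometric hand-waving, which is fine for an outline but should be called out as a nontrivial step.
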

Schnyder's Theorem implies that the arboricity of planar graphs is at most 3, so
we may use the protocol for low-arboricity graphs (Proposition \ref{prop:low
arboricity protocol}) to determine adjacency in $\cP$, so we move on to $\cP^2$,
which may have large arboricity (arboricity is within a constant factor of
\emph{degeneracy}):
\begin{theorem}[\cite{AH03}]
  There are planar graphs $P$ for which the degeneracy of $P^2$ is $\Theta(\deg
  P)$, where $\deg P$ is the maximum degree of any vertex in $P$.
\end{theorem}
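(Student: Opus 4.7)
The plan is to exhibit an explicit family of planar graphs witnessing the claim. The simplest and sharpest example is the star $P = K_{1,\Delta}$: one center vertex adjacent to $\Delta$ leaves. This is plainly planar and has $\deg P = \Delta$. In $P^2$, every pair of leaves is joined by an edge (they share the center as a common neighbor at distance $2$), and the center is already adjacent to every leaf, so $P^2 = K_{\Delta+1}$. The degeneracy of $K_{\Delta+1}$ is exactly $\Delta$, so $P^2$ has degeneracy $\Theta(\deg P)$, establishing the claim.

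If the statement is to be read as also asserting a matching $O(\deg P)$ upper bound for \emph{every} planar graph $P$ (so that the star saturates a universal bound), one invokes the theorem of Agnarsson and Halld\'orsson \cite{AH03} that $\chi(P^2) \leq \lfloor 9\Delta/5\rfloor + 2$ for any planar $P$ of maximum degree $\Delta$. Since degeneracy is at most $\chi - 1$, this yields degeneracy of $P^2$ at most $O(\Delta)$ universally, and the star shows this bound is tight up to constants.

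The main obstacle, should one wish to prove the universal $O(\Delta)$ upper bound from scratch, is to beat the trivial $O(\Delta^2)$ estimate obtained by just counting distance-$2$ neighbors. Planarity must be exploited in a non-trivial way; the standard route is a discharging argument on a planar embedding of $P$, showing that for any induced subgraph of $P^2$ on a vertex set $S \subseteq V(P)$, some $v \in S$ has only $O(\Delta)$ vertices of $S$ at distance at most $2$ in $P$. The lower-bound side via the star requires no such subtlety, and by itself suffices for the paper's purposes: it shows that $\cP^2$ can have arbitrarily large arboricity (since arboricity is within a constant factor of degeneracy), and so Proposition \ref{prop:low arboricity protocol} cannot be applied to $\cP^2$ directly, motivating the need for the subsequent Schnyder-based argument.
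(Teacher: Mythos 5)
Since the paper simply cites this theorem to \cite{AH03} and gives no proof of its own, there is no internal argument for you to match; what matters is whether your argument is correct and whether it supplies what the paper actually needs from the statement. On the first count your star construction is correct and is exactly the right minimal witness: $K_{1,\Delta}$ is planar of maximum degree $\Delta$, its square is $K_{\Delta+1}$, and the degeneracy of $K_{\Delta+1}$ is $\Delta$, so the degeneracy of $P^2$ equals $\deg P$ on this family. You also correctly identify why this is the direction the paper cares about: it shows $\cP^2$ has unbounded arboricity (arboricity being within a constant factor of degeneracy), which is precisely why Proposition \ref{prop:low arboricity protocol} cannot be applied to $\cP^2$ directly and a Schnyder-based argument is needed.

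There is, however, a genuine error in your second paragraph. You write that ``degeneracy is at most $\chi - 1$,'' but the inequality runs the other way: greedy coloring gives $\chi(G) \leq \mathrm{degeneracy}(G) + 1$, i.e.\ degeneracy is \emph{at least} $\chi - 1$, and it can be strictly larger (e.g.\ $K_{3,3}$ has $\chi = 2$ but degeneracy $3$). So a bound on $\chi(P^2)$ does \emph{not} imply a bound on the degeneracy of $P^2$. The universal $O(\Delta)$ upper bound on the degeneracy of the square is still true, but it must be taken directly from \cite{AH03}: their main result is precisely a bound on the \emph{inductiveness} (degeneracy) of $P^2$, roughly $\lceil 9\Delta/5 \rceil$, from which the coloring bound is derived, not the reverse. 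Since the paper only uses the lower-bound direction of the $\Theta$, this slip does not affect anything in the surrounding argument, but the inference as written is invalid and should be replaced by a direct citation to the inductiveness bound in \cite{AH03}.
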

We avoid this blowup in arboricity by treating edges of the form $a \gets b \to
c$ separately (with directions taken from the Schnyder wood). The proof uses the
following split operation:
\newcommand{\ssplit}{\mathsf{split}}
\begin{definition}
  Let $G \in \cP$ and fix a planar map and a Schnyder wood $T_1, T_2, T_3$.
  Define the graph $\ssplit(G)$ by the following procedure (see Figure
  \ref{fig:planar split}):
  \begin{enumerate}
    \item For each vertex $s \in V(G)$ add vertices $s,s_1,s_2,s_3$ to
      $\ssplit(G)$ (excluding $s_i$ if $s$ has no incoming edge in $T_i$). Add
      edges $(s_i,s)$ to $T'_i$;
    \item For each (directed) edge $(u,v) \in T_i$ add the edges
      $(u_{i-1},v_i),(u_{i+1},v_i)$ (arithmetic mod 3) to $T'_i$;
    \item For the unique (directed) edge $(v,u) \in T_i$ add the edges
    $(v_{i-1},u), (v_{i+1},u)$ to $T'_i$.
  \end{enumerate}
\end{definition}
\begin{figure}[h!]
  \center
  \begin{tikzpicture}
    \coordinate (a) at (-2,0);
    \coordinate (b) at (2,0);
    \draw[color=black] (a) circle[radius=0.5em] node {$s$};
    \draw[thick, <-, color=blue, shorten <=0.5em] (a) -- +(-100:1);
    \draw[thick, <-, color=blue, shorten <=0.5em] (a) -- +(-80:1);
    \draw[thick, ->, color=blue, shorten <=0.5em] (a) -- +(90:1);
    \draw[thick, <<-, color=red, shorten <=0.5em] (a) -- +(-220:1);
    \draw[thick, <<-, color=red, shorten <=0.5em] (a) -- +(-200:1);
    \draw[thick, ->>, color=red, shorten <=0.5em] (a) -- +(-30:1);
    \draw[thick, <<<-, color=green, shorten <=0.5em] (a) -- +(20:1);
    \draw[thick, <<<-, color=green, shorten <=0.5em] (a) -- +(40:1);
    \draw[thick, ->>>, color=green, shorten <=0.5em] (a) -- +(210:1);

    \draw[color=black] (b) circle[radius=0.5em] node {$s$};
    \draw[color=black] (b)+(-90:1) circle[radius=0.5em] node {$s_1$};
    \draw[thick, ->, color=blue, shorten <=0.5em, shorten >=0.5em] (b)++(-90:1) -- (b);
    \draw[thick, <-, color=blue, shorten <=0.5em] (b)++(-90:1) -- +(-100:1);
    \draw[thick, <-, color=blue, shorten <=0.5em] (b)++(-90:1) -- +(-80:1);
    \draw[thick, ->, color=blue, shorten <=0.5em, shorten >=0.5em] (b)++(30:1) -- +(135:1);
    \draw[thick, ->, color=blue, shorten <=0.5em, shorten >=0.5em] (b)++(-210:1) -- +(45:1);

    \draw[color=black] (b)+(30:1) circle[radius=0.5em] node {$s_3$};
    \draw[thick, ->>>, color=green, shorten <=0.5em, shorten >=0.5em] (b)++(30:1) -- (b);
    \draw[thick, <<<-, color=green, shorten <=0.5em] (b)++(30:1) -- +(20:1);
    \draw[thick, <<<-, color=green, shorten <=0.5em] (b)++(30:1) -- +(40:1);
    \draw[thick, ->>>, color=green, shorten <=0.5em, shorten >=0.5em] (b)++(-90:1) -- +(165:1);
    \draw[thick, ->>>, color=green, shorten <=0.5em, shorten >=0.5em] (b)++(-210:1) -- +(255:1);

    \draw[color=black] (b)+(-210:1) circle[radius=0.5em] node {$s_2$};
    \draw[thick, ->>, color=red, shorten <=0.5em, shorten >=0.5em] (b)++(-210:1) -- (b);
    \draw[thick, <<-, color=red, shorten <=0.5em] (b)++(-210:1) -- +(-220:1);
    \draw[thick, <<-, color=red, shorten <=0.5em] (b)++(-210:1) -- +(-200:1);
    \draw[thick, ->>, color=red, shorten <=0.5em, shorten >=0.5em] (b)++(-90:1) -- +(15:1);
    \draw[thick, ->>, color=red, shorten <=0.5em, shorten >=0.5em] (b)++(30:1) -- +(285:1);
  \end{tikzpicture}
  \caption{Splitting vertex $s$, with $T_1,T_2,T_3$ in blue, red, and green
  respectively (1,2, and 3 arrowheads).}
  \label{fig:planar split}
\end{figure}
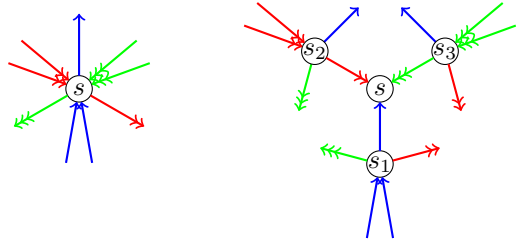
\begin{proposition}
  $\ssplit(G)$ is planar.
\end{proposition}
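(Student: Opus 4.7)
The plan is to exhibit an explicit planar embedding of $\ssplit(G)$ obtained by locally modifying a planar embedding of $G$ that realizes the Schnyder wood $T_1, T_2, T_3$. The key structural fact I will use is the standard Schnyder angle condition: around every internal vertex $v$ of $G$, the edges incident to $v$ appear in cyclic order partitioned into three consecutive arcs, where the $i$-th arc consists of all incoming edges of $T_i$ at $v$, and consecutive arcs are separated by the outgoing edges of $T_{i-1}$ and $T_{i+1}$ (see Felsner \cite{Fel12}).

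First, I would fix a planar drawing of $G$ and choose pairwise disjoint open discs $D_s$ around each vertex $s \in V(G)$. Inside $D_s$, place the central copy $s$ at the centre and the auxiliary vertices $s_1, s_2, s_3$ near the boundary of $D_s$, positioning each $s_i$ inside the cyclic arc of $T_i$-incoming edges at $s$ (omitting $s_i$ if that arc is empty, as in the construction). Draw the three stem edges $(s_i,s) \in T'_i$ as straight segments inside $D_s$; these are pairwise non-crossing because $s_1, s_2, s_3$ sit in three distinct angular sectors meeting only at $s$.

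Second, I would reroute each original edge. For an incoming $T_i$-edge of $s$, whose curve enters $D_s$ within the $T_i$-arc, I shorten it so it terminates at $s_i$ rather than $s$; this can be done inside $D_s$ without introducing crossings because every such edge enters from the same angular sector where $s_i$ lies. For each directed edge $(s,u) \in T_i$, I replace the single curve by two parallel copies drawn in a thin tubular neighbourhood of the original curve, joining $s_{i-1}$ and $s_{i+1}$ on the $s$-side to two points placed slightly on either side of the original endpoint at $u$, both still lying inside $u$'s $T_i$-incoming arc. Both copies terminate at the vertex $u$ itself (not at $u_i$), matching the construction of $\ssplit(G)$.

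Finally, I would verify there are no crossings. Inside each disc $D_s$ the drawing is planar by construction. Outside the discs, the modified picture is a thin-tube doubling of the original planar drawing of $G$, which is a standard planarity-preserving operation. The main subtlety, and the step most worth double-checking, is the cyclic order at a vertex $u$ that is the head of an edge $(s,u) \in T_i$: one must argue that the two new endpoints of the doubled edge can be inserted into $u$'s rotation inside its $T_i$-arc without tangling with the endpoints of doubled edges coming from the other sectors. The Schnyder angle condition at $u$ — that incoming $T_i$-edges form a contiguous arc — is precisely what makes this insertion possible.
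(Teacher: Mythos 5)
Your approach is genuinely different from the paper's. The paper never draws anything: it splits one vertex at a time and proves planarity purely order-theoretically, by extending the three total orders $<_1,<_2,<_3$ of Schnyder's theorem to orders $<'_1,<'_2,<'_3$ on the split graph and verifying condition (*) by case analysis, i.e.\ it only uses the dimension-$\leq 3$ characterization as stated in the paper. You instead perform topological surgery on a planar drawing, which requires the Schnyder \emph{angle} (local rotation) condition --- that the incoming $T_i$-edges at an internal vertex form a contiguous arc flanked by the outgoing edges of $T_{i-1},T_{i+1}$. That is a correct and standard fact (Felsner), but it is an imported ingredient beyond what the paper's statement of Schnyder's theorem provides, and it holds at \emph{internal} vertices; you should say a word about the three outer/root vertices, which lack some satellites and some outgoing edges.

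There is, however, a concrete inconsistency in what graph you are drawing. In $\mathsf{split}(G)$, a directed edge $(u,v)\in T_i$ is replaced by the \emph{pair} $(u_{i-1},v_i),(u_{i+1},v_i)$: the tail is doubled \emph{and} the head is redirected to the satellite $v_i$, simultaneously, for the same edge. Your write-up applies these as two separate operations to the same edge and they conflict: your first step keeps each incoming $T_i$-edge as a single curve ending at $s_i$, while your second step doubles that very edge (viewed from its tail) and explicitly terminates both copies at the head vertex itself, ``not at $u_i$.'' As written you therefore never draw the edges $(u_{i-1},v_i),(u_{i+1},v_i)$, which are exactly the edges the subsequent minor lemma (the one producing the paths $\{u,u_{i-1}\},\{u_{i-1},v_i\},\{v_i,w_{i+1}\},\{w_{i+1},w\}$) relies on. The fix is easy and your machinery supports it: let both copies run in the thin tube along the original curve and terminate at $v_i$, which lies in the same incoming-$T_i$ arc at $v$; the non-crossing insertion into the rotation at $v_i$ is then a star inside one angular sector, and the peel-off at the tail works because the outgoing $T_i$-edge is flanked in the rotation by the incoming $T_{i-1}$ and $T_{i+1}$ arcs where $u_{i-1},u_{i+1}$ sit. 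With that correction (and a remark on the outer vertices), your drawing argument gives a valid alternative proof; as stated, it proves planarity of a graph that is not $\mathsf{split}(G)$.
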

\begin{proof}
  We prove that splitting any vertex $s$ results in a planar graph. By induction
  we may then split each vertex in sequence and obtain a planar graph. Let
  $<_i$ be any total order on $V(G)$ extending $T_i \cup T^{-1}_{i-1} \cup
  T^{-1}_{i+1}$, which satisfies condition (*) by Schnyder's theorem.
  Let $<'_1, <'_2, <'_3$ be the same total
  orders, extending $T'_1,T'_2,T'_3$, and augmented to include $s_1,s_2,s_3$ as
  follows:
  \begin{enumerate}
    \item For each $u \in V(G)$, $s_i <'_j u$ iff $s <_j u$ and $u <'_j$ iff
  $u <_j s$;
  \item For each $i$, set $s_i <'_i s <'_i s_{i+1} <'_i s_{i-1}$.
  This is possible since $\{s_i\}$ do not have a defined ordering in $<_i$ and
  remain incomparable after the previous step.
  \end{enumerate}
  Note that for any edge $(u,v) \in T'_i$ we have $u <'_i v$ and $v <'_j u$ for
  $j \neq i$.  It suffices to prove that condition (*) is satisfied by the new
  orders.  Let $\{u,v\} \in E(\ssplit(G))$ and let $w \notin \{u,v\}$. We will
  show that there exists $i$ such that $u,v <'_i w$.
  
  If $u,v,w \in V(G)$ then we are done since the orders $<'_i$ are the same as
  $<_i$ on these vertices. 

  If $u = s_i$ then either $v \in V(G) \setminus s$, in which case $v <_i s$ so
  $v <'_i u$ and therefore $u <'_j v$ for $j \neq i$, or $v = s$ so $u <'_i v$
  and therefore $v <'_j u$ for $j \neq i$. Let $v \neq s$. For any $w \in V(G)
  \setminus \{v\}$ we have, by (*), either $v,s <_i w$ so $v <'_i u <'_i s <'_i
  w$, or $v,s <_j w$ so $u <'_j v <'_j w$. If $v = s$ then by construction there
  exists $(u',u) \in T_i$. By (*), either $u',v <_i w$ so $u <'_i v <'_i w$, or
  $u',v <_j w$ so $v <'_j u <'_j u' <'_j w$.

  The only case remaining is if $w = s_i$ and $u,v \in V(G)$. By construction
  there exists $(w',w) \in T_i$. Either $u,v <_i w' <'_i w <'_i s$ or by (*)
  there exists $j$ such that $u,v <_j s$ and since $(w,s)$ is an edge in $T'_i$,
  $s <'_j w$ for $j\neq i$.
\end{proof}

\newcommand{\tofrom}[1]{{\rightarrow #1 \leftarrow}}
\newcommand{\fromto}[1]{{\leftarrow #1 \rightarrow}}
\begin{definition}
  Let $G = (V,E)$ be a planar graph. Fix a planar map and a Schnyder wood $T_1,
  T_2, T_3$.  For each $i$, define the graph $G_i = (V,E \setminus T_i)$ as the
  graph obtained by removing each edge in $T_i$. Define the \emph{head-to-head
  closure} of $G_i$, written $G_i^{\fromto{}}$, as the graph with an edge
  $\{u,v\}$ iff there exists $w \in V$ such that $u \fromto w v$ in $G_i$.
  (Observe that the two outgoing edges of $w$ must be in $T_{i-1},T_{i+1}$.)
  Let $G^{\fromto{}}$ be the subgraph of $G^2$ containing all edges occuring in
  $G_i^{\fromto{}}$ for each $i$.
\end{definition}
\begin{lemma}
  Let $G$ be a planar graph. For any graph $M$, if $M$ is a minor of
  $G_i^{\fromto{}}$ then $M$ is a minor of $\ssplit(G)$. 
\end{lemma}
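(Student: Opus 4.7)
The plan is to use the transitivity of the minor relation: since $M$ is a minor of $G_i^{\fromto{}}$, it suffices to prove the stronger claim that $G_i^{\fromto{}}$ is itself a minor of $\ssplit(G)$. Then any branch-set decomposition realizing $M$ inside $G_i^{\fromto{}}$ can be ``lifted'' by replacing each vertex $v \in V(G_i^{\fromto{}})$ with its preimage branch set in $\ssplit(G)$.

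The key structural observation I would exploit is that every edge of $G_i^{\fromto{}}$ has an essentially canonical witness in the Schnyder wood. Each vertex $w$ has exactly one outgoing edge in each tree $T_1,T_2,T_3$; since $G_i = E(G)\setminus T_i$, the outgoing edges of $w$ in $G_i$ are precisely one edge in $T_{i-1}$ and one in $T_{i+1}$, pointing to parents $u := p_{i-1}(w)$ and $v := p_{i+1}(w)$. So each $w$ contributes at most the single edge $\{u,v\}$ to $G_i^{\fromto{}}$. In $\ssplit(G)$ the split vertex $w_i$ will play the role of mediator: by step 2 of the split construction applied to $(w,u)\in T_{i-1}$ and $(w,v)\in T_{i+1}$, the vertex $w_i$ is adjacent to both $u_{i-1}$ and $v_{i+1}$, which in turn are adjacent to $u$ and $v$ respectively by step 1.

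Concretely, for each $v\in V(G)$ I would set
\[
B_v \;:=\; \{v\} \,\cup\, \{\, v_j \,:\, v_j \text{ exists in } \ssplit(G)\,\} \,\cup\, \{\, w_i \,:\, v = p_{i-1}(w)\,\},
\]
placing each mediator $w_i$ into the branch set of $w$'s unique $T_{i-1}$-parent. The three verifications are: (i) disjointness, immediate from the uniqueness of each $v_j$ and of $w$'s $T_{i-1}$-parent; (ii) connectedness, since the splits $v_j$ are joined to $v$ by the edges of step 1, and each added $w_i$ is joined to $v_{i-1}$ via the step-2 edge $(w_i,v_{i-1})$, with $v_{i-1}$ guaranteed to exist because $v = p_{i-1}(w)$ provides $v$ with an incoming $T_{i-1}$-edge; and (iii) edge realization: for any $\{u,v\} \in E(G_i^{\fromto{}})$ with witness $w$, the step-2 edge $(w_i,v_{i+1})$ of $\ssplit(G)$ runs between $B_u$ (which contains $w_i$) and $B_v$ (which contains $v_{i+1}$).

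The main obstacle I anticipate is the degenerate case where the split $w_i$ fails to exist because $w$ is a leaf of $T_i$. Then the step-2 incidences at $w_i$ are absent and must be replaced by the fallback edges from step 3, which attach to $w$ itself rather than to a split. I would handle this by either (a) replacing the mediator with $w$ directly and showing that $B_w$ can be reconstructed without $w$ — since $w$'s remaining splits $w_{i-1}, w_{i+1}$ are each connected to some $(p_{\pm 1}(w))_{\mp 1}$ and these in turn to $w$'s parents' branch sets, one can route $B_w$'s internal connectivity through step-3 edges — or (b) absorb $w_i$'s role into one of $w_{i-1}$ or $w_{i+1}$, which by step 2 are also adjacent to the requisite parent splits. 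This local case analysis is the one delicate part of the argument; all other aspects follow directly from the disjointness, connectedness, and edge-realization checks described above.
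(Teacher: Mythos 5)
Your route is genuinely different from the paper's. The paper proves the lemma by lifting a system of internally-disjoint paths: it takes a subdivision of $M$ in $G_i^{\fromto{}}$, replaces each edge $\{u,w\}$ with the path $u$ -- $u_{i-1}$ -- $v_i$ -- $w_{i+1}$ -- $w$ through the mediator's split $v_i$, checks that the lifted paths meet only at endpoints and their splits, and then contracts. You instead prove the stronger statement that $G_i^{\fromto{}}$ is itself a minor of $\ssplit(G)$ via explicit branch sets and invoke transitivity of the minor relation. This is arguably cleaner: it needs no disjoint-path bookkeeping, and it avoids the paper's silent passage from ``$M$ is a minor'' to ``a subdivision of $M$ occurs as a subgraph,'' which is not true for general $M$ (it happens to suffice for the $K_5$/$K_{3,3}$ application). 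Your identification of the realizing edge is the same as the paper's: the witness $w$ with outgoing edges to $u=p_{i-1}(w)$ and $v=p_{i+1}(w)$ gives, via step 2, the adjacencies $w_i\sim u_{i-1}$ and $w_i\sim v_{i+1}$.

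Two caveats. First, the branch sets as written are \emph{not} pairwise disjoint: $v_i$ belongs to $B_v$ (as one of ``its splits'') and also to $B_{p_{i-1}(v)}$ (as a mediator with $w=v$), which happens whenever $v$ has both an incoming $T_i$-edge and a $T_{i-1}$-parent; so the claimed ``immediate'' disjointness fails. The fix is a one-liner: take $B_v=\{v,v_{i-1},v_{i+1}\}\cup\{w_i: v=p_{i-1}(w)\}$, omitting $v_i$, which plays no role in your connectivity argument ($v_{i\pm 1}\sim v$ by step 1, $w_i\sim v_{i-1}$ by step 2) or in edge realization (which uses $v_{i+1}$, not $v_i$); with that change disjointness, connectedness, and edge realization all go through as you describe. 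Second, the degenerate case you flag ($w$ a leaf of $T_i$, so $w_i$ absent) is real, and your fallback sketches (re-routing through $w$ via step-3 edges while reassigning $w$'s own branch set) are the one genuinely unfinished part of your argument; but note that the paper's proof has exactly the same unaddressed corner, since its path uses the mediator's split $v_i$ without verifying that the mediator has an incoming $T_i$-edge. So on this point you are no worse off than the paper, and you at least make the issue explicit.
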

\begin{proof}
  We will prove the following claim.
  \begin{claim}
  For any set $P = \{P_j\}$ of simple paths $P_j \subseteq V(G_i^{\gets\to})$,
  with endpoints $\{(s_j,t_j)\}$ such that no two paths $P_j,P_k$ have the same
  endpoints and $P_j \cap P_k \subseteq \{s_j,s_k,t_j,t_k\}$, there exists a set
  of paths $Q = \{Q_j\}$ of paths in $\ssplit(G)$ with the same endpoints such
  that
  \begin{align*}
  &Q_j \cap Q_k \subseteq \\&\;\{s_j,s_k,t_j,t_k\}
  \cup \{(s_j)_{i-1},(s_k)_{i-1},(t_j)_{i-1},(t_k)_{i-1}\}
  \cup \{(s_j)_{i+1},(s_k)_{i+1},(t_j)_{i+1},(t_k)_{i+1}\} \,,
  \end{align*}
  where the vertices $s_i,s_{i+1},s_{i-1}$ are defined as in the split
  operation.
  \end{claim}
  \begin{proof}[Proof of claim]
    For each path $P_j$, perform the following.  For each edge $\{u,w\}$ in the
    path $P_j$, there is some (not necessarily unique) vertex $v$ such that
    either $(v,u) \in T_{i-1}$ and $(v,w) \in T_{i+1}$, or the same holds with
    $u,w$ reversed. Add the edges $\{u,u_{i-1}\}, \{u_{i-1}, v_i\}, \{v_i,
    w_{i+1}\}, \{w_{i+1},w\}$ to $Q_j$. If $P_j$ is a singleton $P_j = \{u\}$ so
    $s_j = t_j$ then add $u$ to $Q_j$.

    Consider two paths $Q_j,Q_k$ constructed this way.  $G_i^\fromto{}$ has
    vertex set $V$ and $\ssplit(G)$ has vertex set $V' \supset V$.  By
    construction, $P_j \subseteq Q_j$ and $P_k \subseteq Q_k$ and $(Q_j \cap V)
    = P_j$. Suppose there exists $z \in Q_j \cap Q_k$ that is not an endpoint,
    so $z \notin \{s_j,s_k,t_j,t_k\}$. If $z \in V$ then $z \in P_j \cap P_k
    \subseteq \{s_j,s_k,t_j,t_k\}$, so we only need to worry about $z \in V'
    \setminus V$.
    
    If $z = v_i$ for some vertex $v$ then there are unique distinct vertices
    $u_{i-1},w_{i+1} \in V'$ adjacent to $v_i$ such that $u_{i-1},w_{i+1} \in
    Q_j \cap Q_k$. Then $u,w \in Q_j \cap Q_k$ also, so $u,w \in P_j \cap P_k$;
    but then $u \neq w$ are the start and end points of $P_j,P_k$, so $P_j =
    P_k$, a contradiction.

    If $z = v_{i-1}$ for some vertex $v \in V$ then $v \in Q_j \cap Q_k$, so by
    the case above, $v \in \{s_j,s_k,t_j,t_k\}$ and $z \in
    \{(s_j)_{i-1},(s_k)_{i-1},(t_j)_{i-1},(t_k)_{i-1}\}$.  Likewise for $z =
    v_{i+1}$.
  \end{proof}
  Let $M$ be a minor of $G_i^{\fromto{}}$, so a subdivision of $M$ occurs as a
  subgraph of $G_i^{\fromto{}}$. Therefore there is a set of paths $P$ in
  $G_i^{\gets\to}$ satisfying the conditions of the claim, so that by
  contracting each path into a single edge, and deleting the rest of the graph,
  we obtain $M$. Let $Q = \{Q_j\}$ be the set of paths given by the claim. For
  endpoints $s_j,t_j \in Q_j$, contract the edges $\{s_j,(s_j)_{i \pm 1}\}$ and
  $\{t_j,(t_j)_{i \pm 1}\}$. The result is a contraction of $\ssplit(G)$ and a
  set of paths $Q'$ that is a subdivision of $M$, so $M$ is a minor of
  $\ssplit(G)$, which proves the lemma.
\end{proof}
\begin{corollary}
  \label{cor:head to head is planar}
  $G_i^{\fromto{}}$ is planar and $G^{\fromto{}}$ has arboricity at most 9.
\end{corollary}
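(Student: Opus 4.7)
The plan is to use the preceding lemma (every minor of $G_i^{\fromto{}}$ is a minor of $\ssplit(G)$) together with the fact, established just before the lemma, that $\ssplit(G)$ is planar. Since planarity is preserved under taking minors (equivalently, $K_5$ and $K_{3,3}$ cannot appear as minors of a planar graph), the planarity of $\ssplit(G)$ rules out $K_5$ and $K_{3,3}$ as minors of $G_i^{\fromto{}}$. Invoking Wagner's theorem (or Kuratowski's) then gives that $G_i^{\fromto{}}$ itself is planar. This is the first conclusion.

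For the second conclusion, I would observe that by definition $G^{\fromto{}}$ is the edge-union of $G_1^{\fromto{}}, G_2^{\fromto{}}, G_3^{\fromto{}}$. By the first conclusion, each of these three graphs is planar, so each has arboricity at most $3$ (this is exactly Schnyder's theorem, already stated in the paper: $\dim(G) \leq 3$ for planar graphs yields a partition of the edges into three trees $T_1, T_2, T_3$). Partitioning the edges of each $G_i^{\fromto{}}$ into three forests and then taking the union of the nine resulting forests gives an edge partition of $G^{\fromto{}}$ into at most $9$ forests, i.e.\ arboricity at most $9$.

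The only step that requires any care is the first: one must verify that the ``minor'' direction of the preceding lemma is strong enough to transfer \emph{non-planarity} back from $G_i^{\fromto{}}$ to $\ssplit(G)$. This is immediate once we note that a graph is non-planar iff it contains $K_5$ or $K_{3,3}$ as a minor, so any non-planar $G_i^{\fromto{}}$ would force $\ssplit(G)$ to contain one of these as a minor, contradicting its planarity. No new calculations are needed beyond citing Wagner's theorem and Schnyder's arboricity bound, both of which are standard and already appealed to in the paper.
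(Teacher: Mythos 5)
Your proof is correct and follows essentially the same route as the paper: contraposition via Kuratowski/Wagner to transfer planarity from $\ssplit(G)$ to $G_i^{\fromto{}}$, then the arboricity-$3$ bound for each of the three planar graphs whose edge-union is $G^{\fromto{}}$. Nothing is missing.
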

\begin{proof}
  A graph is planar iff it does not contain $K_5$ or $K_{3,3}$ as a minor
  (Kuratowski's Theorem). If $G_i^{\fromto{}}$ is not planar then it contains
  $K_5$ or $K_{3,3}$ as a minor, so by the above lemma, $\ssplit(G)$ contains
  $K_5$ or $K_{3,3}$ as a minor, so $\ssplit(G)$ is not planar, a contradiction.
  Since planar graphs have arboricity at most 3, the edge union $G^{\fromto{}}$
  of 3 planar graphs has arboricity at most 9.
\end{proof}
By separating the $\gets \to$ edges from the remaining edges of $\cP^2$, we
obtain a constant-cost universal SMP protocol for $\cP^2$, and then by applying
Theorem \ref{thm:universal smp to adjacency labeling} we obtain Theorem
\ref{thm:k distance for planar graphs}.
\begin{lemma}
  \label{lemma:planar graph protocol}
  For all $\epsilon > 0, R_\epsilon^\ob(\cP^2) =
  O\left(\log\frac{1}{\epsilon}\right)$.
\end{lemma}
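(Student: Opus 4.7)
\emph{Plan.}
I would decompose the ``$\dist_G(\cdot,\cdot) \leq 2$'' relation on a planar graph $G$ into a constant number of cases, give an $O(\log(1/\epsilon))$-bit sub-protocol for each, and run them in parallel, returning the OR of the outputs.

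Fix a Schnyder wood $T_1,T_2,T_3$ of $G$; each interior vertex $v$ has one outgoing edge in each $T_i$, pointing to a Schnyder parent $p_i(v)$. Any length-$\leq 2$ walk between $x$ and $y$ in $G$ falls into exactly one of:
\begin{enumerate}
  \item[(i)] $x = y$;
  \item[(ii)] $\{x,y\} \in E(G)$;
  \item[(iii)] $\exists\, w$ with $x \to w \to y$ or $y \to w \to x$, i.e.\ one of $x,y$ is a ``Schnyder $2$-ancestor'' $p_j(p_i(\cdot))$ of the other;
  \item[(iv)] $\exists\, w$ with $x \to w \gets y$, i.e.\ $x$ and $y$ share a Schnyder parent;
  \item[(v)] $\exists\, w$ with $x \gets w \to y$, i.e.\ $\{x,y\} \in E(G^{\fromto{}})$.
\end{enumerate}

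For (i) a single uniform random hash of $O(\log(1/\epsilon))$ bits suffices. For (ii), Schnyder's theorem yields arboricity $\leq 3$ for $G$, so Proposition~\ref{prop:low arboricity protocol} gives a sub-protocol of the required cost; for (v), Corollary~\ref{cor:head to head is planar} gives arboricity $\leq 9$ for $G^{\fromto{}}$, and the same proposition applies. For (iii) and (iv) I exploit the fact that each vertex has at most $3$ Schnyder parents and hence at most $9$ Schnyder $2$-ancestors: Alice sends $\ell(x)$ together with $\ell(p_i(x))$ and $\ell(p_j(p_i(x)))$ for all $i,j \in \{1,2,3\}$, and Bob sends the analogous data for $y$. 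In case (iii) the referee accepts iff $\ell(y)$ matches one of Alice's nine $2$-ancestor hashes or vice versa; in case (iv) the referee accepts iff one of Alice's three parent hashes matches one of Bob's. Each sub-protocol performs only a constant number of hash comparisons, so a hash length of $O(\log(1/\epsilon))$ keeps its false-positive probability below $\epsilon/5$.

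Combining by OR: completeness holds because at least one case applies whenever $\dist(x,y) \leq 2$, and the corresponding sub-protocol accepts with probability $\geq 1-\epsilon/5$; soundness follows from a union bound over the five sub-protocols. The total communication is a constant number of $O(\log(1/\epsilon))$-bit messages, i.e.\ $O(\log(1/\epsilon))$. The main obstacle I anticipate is cases (iii) and (iv): the corresponding edge-sets in $G^2$ can be very dense (e.g.\ a vertex with many Schnyder children creates a large clique of ``siblings''), so they admit no arboricity-based protocol. The Schnyder wood sidesteps this by describing those edges \emph{implicitly} through each vertex's constant-size list of Schnyder parents and $2$-ancestors, which is all the referee needs to detect the relevant coincidences.
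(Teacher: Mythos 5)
Your protocol is correct and essentially the paper's own: both split length-$2$ paths according to the Schnyder orientations of their two edges, cover the cases where the middle vertex is a Schnyder parent of $x$ or of $y$ by exploiting each vertex's constant-size list of parents (your explicit parent and $2$-ancestor hashes are exactly what the paper's invocations of the low-arboricity adjacency protocol on the pairs $(x_i,y)$ and $(x,y_i)$ unfold into), and cover the remaining $x \gets w \to y$ case via the arboricity bound on the head-to-head closure from Corollary~\ref{cor:head to head is planar}. Your explicit handling of the $x=y$ and adjacency cases is a small extra care that the paper's write-up leaves implicit, but it does not change the approach.
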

\begin{proof}
  For a planar graph $G = (V,E)$ with a fixed planar map and a Schnyder wood $T_1, T_2,
  T_3$, define the graph $G_i = (V,E \setminus T_i)$ as the graph obtained by
  removing the edges in tree $T_i$.

  On planar graph $G \in \cP_n$ and vertices $x,y$, Alice and Bob perform the
  following:
  \begin{enumerate}
    \item For each $i$ define $x_i,y_i$ to be the parents of $x,y$ in
      $T_i$. Run the protocol for adjacency with error $\epsilon/7$ on
      $(x,y_i)$ and $(x_i,y)$ for each $i$.
    \item Run the protocol for low-arboricity graphs on $G^{\fromto{}}$ with
      error $\epsilon/7$.
    \item Accept iff one of the above sub-protocols accepts.
  \end{enumerate}
  By Corollary \ref{cor:head to head is planar}, $G^\fromto{}$ has arboricity at
  most 9, we may apply the protocol for low-arboricity graphs in step 2. If
  $\dist(x,y) > 2$ then the protocol will correctly reject with probability at
  least $1-\epsilon$ since there are 7 applications of $\epsilon/7$-error
  protocols.  It remains to show that if $\dist(x,y) = 2$ then the algorithm
  will accept.

  Suppose $x,y$ are of distance 2. Then the paths between them are of the
  following forms (with edge directions taken from the Schnyder wood).
  \begin{enumerate}
    \item $x \to v \to y$ or $x \to v \gets y$. This is covered by step 1.
    \item $x \gets v \to y$. This is covered by step 2. \qedhere
  \end{enumerate}
\end{proof}
Since planar graphs are an upwards family (just insert a new vertex), we obtain
a constant-size probabilistic universal graph for $\cP^2$.
\begin{corollary}
  For any $\epsilon > 0$, there is a graph $U$ of size $O(\log(1/\epsilon))$
  such that for every $G \in \cP^2$, $G \rembed_\epsilon U$.
\end{corollary}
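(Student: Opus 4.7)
The plan is to chain together three earlier results: the $O(\log(1/\epsilon))$-cost universal SMP protocol for $\cP^2$ (Lemma \ref{lemma:planar graph protocol}), the general conversion from universal SMP protocols to probabilistic universal graphs (Lemma \ref{lemma:universal smp to universal graph}), and the collapsing of a universal graph sequence into a single graph that works uniformly in the input size (Proposition \ref{prop:upwards families}).

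First I would apply Lemma \ref{lemma:planar graph protocol} with error parameter $\epsilon/2$ to obtain an $(\epsilon/2)$-error universal SMP protocol for $\cP^2$ of communication cost $c = O(\log(1/\epsilon))$. Then Lemma \ref{lemma:universal smp to universal graph}(1) converts this protocol into an $\epsilon$-error probabilistic universal graph sequence $U = (U_i)$ for $\cP^2$ with $|V(U_i)| \leq 2^{2c} = 2^{O(\log(1/\epsilon))}$, a bound depending only on $\epsilon$ and not on $i$ or $n$.

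Next I would verify that $\cP^2$ is an upwards family so that Proposition \ref{prop:upwards families} applies. Given $H \in \cP^2_i$, write $H = G^2$ for some planar graph $G \in \cP_i$; adding an isolated vertex to $G$ yields a planar graph $G' \in \cP_{i+1}$, and the square $(G')^2$ contains $H = G^2$ as an induced subgraph because the new vertex stays isolated after squaring. Hence $\cP^2$ is upwards, and Proposition \ref{prop:upwards families} collapses the sequence $(U_i)$ to a single graph $U^*$ of the same size $2^{O(\log(1/\epsilon))}$ into which every $G \in \cP^2$ admits an $\epsilon$-error embedding.

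No substantial obstacles arise here; every step is a direct invocation of machinery already set up in the paper. The only minor point worth flagging is the upwards property for $\cP^2$ rather than $\cP$, which holds precisely because appending an isolated vertex commutes with taking the square on the original vertex set.
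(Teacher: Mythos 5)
Your proposal is correct and takes exactly the route the paper intends: Lemma \ref{lemma:planar graph protocol}, the conversion of Lemma \ref{lemma:universal smp to universal graph}, and the collapse of Proposition \ref{prop:upwards families}, after checking that $\cP^2$ (not just $\cP$) is upwards via an added isolated vertex. One thing worth flagging: your derivation gives $|V(U)| \leq 2^{O(\log(1/\epsilon))} = \mathrm{poly}(1/\epsilon)$, whereas the corollary as printed claims size $O(\log(1/\epsilon))$; comparison with the analogous distributive-lattice corollary (size $2^{O(k^2\log(1/\epsilon))}$) indicates the printed bound here dropped the exponent, and the $\mathrm{poly}(1/\epsilon)$ bound you obtain is what the machinery actually delivers.
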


\section{Discussion and Open Problems}

\subparagraph*{Error-tolerance.} In the introduction we mentioned that the universal
SMP model allows us to study error-tolerance in the SMP model. This could be
done as follows: suppose the referee knows a reference graph $G$ and the players
are guaranteed to see a graph that is ``close'' to $G$ by some metric. How much
does this change the complexity of the problem, compared to computing $G$? One
common distance metric in, say, the property testing literature, is to count the
number of edges that one must add or delete. That is, for two graphs $G,H$ on
vertex set $[n]$, write $\dist(G,H) = \frac{1}{n^2} \sum_{i,j \in [n]}
\ind{G(i,j) \neq H(i,j)}$.  The distance is usually thought of as a constant.
We can easily give a strong negative result for this situation:
\begin{proposition} Let $\cF$ be any family of graphs and $\cF_\delta$ the
family of graphs $G$ such that $\min_{F \in \cF} \dist(G,F) \leq \delta$. Then
\[
  R^\ob(\cF_\delta) = \Omega(\sqrt{\delta} n) \,.
\]
\end{proposition}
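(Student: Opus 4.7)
The plan is to reduce from the family $\cG$ of all graphs, for which Theorem \ref{thm:lower bound for all graphs} gives $R^\ob(\cG) = \Theta(n)$. The idea is that inside any fixed reference graph $F \in \cF$ on $n$ vertices, we have the ``budget'' to overwrite adjacencies on a set of $n' = \Theta(\sqrt{\delta}\, n)$ vertices without leaving distance $\delta$ of $F$, and this lets us embed an arbitrary graph on $n'$ vertices into $\cF_\delta$.

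More precisely, assume $\cF$ is nonempty at each size (otherwise the statement is vacuous) and, for each $n$, fix once and for all a graph $F_n \in \cF$ on vertex set $[n]$ and a distinguished subset $S_n \subseteq [n]$ of size $n' \define \lfloor \sqrt{\delta}\, n \rfloor$. For any graph $H$ on vertex set $S_n$, define $G_H$ on $[n]$ by $G_H(u,v) = H(u,v)$ when $u,v \in S_n$ and $G_H(u,v) = F_n(u,v)$ otherwise. Then $G_H$ and $F_n$ differ on at most $|S_n|^2 = (n')^2 \leq \delta n^2$ ordered pairs, so $\dist(G_H, F_n) \leq \delta$ and hence $G_H \in \cF_\delta$.

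Given any universal SMP protocol $\Pi$ for $\cF_\delta$ of cost $c(n)$, Alice and Bob can simulate a universal SMP protocol for the family $\cG_{n'}$ of all graphs on $n'$ vertices as follows: on input $(H, x)$ and $(H, y)$ with $H \in \cG_{n'}$ and $x,y \in [n']$, they both (deterministically) construct the graph $G_H$ on $n$ vertices described above, identify $x,y$ with their images in $S_n$, and run $\Pi$ on $(G_H, x), (G_H, y)$. The referee's decision is unchanged, so this is a valid universal SMP protocol for $\cG_{n'}$ of cost $c(n)$. By Theorem \ref{thm:lower bound for all graphs} we conclude $c(n) = \Omega(n') = \Omega(\sqrt{\delta}\, n)$.

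The only subtle point — and the main thing worth checking carefully — is the uniformity requirement: for the reduction to produce a legitimate protocol on $\cG_{n'}$, the construction of $G_H$ must be a single function of $H$ that does not depend on anything the referee would need to know. This is exactly why we fix $F_n$ and $S_n$ in advance (they are the same for every $H$), so that the resulting graph always has the same size $n$ and the referee can use the single decision function of $\Pi$ for size $n$. Everything else is a routine counting estimate.
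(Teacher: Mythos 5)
Your proof is correct and is essentially the same as the paper's: you embed an arbitrary graph on $\Theta(\sqrt{\delta}\,n)$ vertices into $\cF_\delta$ by overwriting the induced subgraph on a fixed set $S$ of that size inside a fixed reference graph $F \in \cF$, check that the result stays within distance $\delta$, and then invoke Theorem~\ref{thm:lower bound for all graphs}. Your version is somewhat more explicit about the uniformity of the reduction (fixing $F_n$ and $S_n$ in advance so the referee's decision function is well-defined), but this is a careful spelling-out of the same argument rather than a different route.
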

\begin{proof}
Let $G$ be any graph on $\sqrt \delta n$ vertices and let $F \in \cF$. Choose
any set $S \subseteq V(F)$ with $|S| = |G|$. Construct $F'$ by replacing the
subgraph induced by $S$ with the graph $G$. Then $\dist(F,F') \leq
\frac{|G|^2}{n^2} = \delta$ so $F' \in \cF_\delta$. Then the conclusion follows
from Theorem \ref{thm:lower bound for all graphs}.
\end{proof}
This suggests that this is not the correct way to model contextual uncertainty
in the SMP model, but universal SMP gives a framework for studying many other
error tolerance settings. For example, we could suppose that the referee knows a
reference planar graph $G$, and the players are guaranteed to see a graph $G'$
that is close to $G$ and also planar; this would not increase the cost of the
protocol due to our results on planar graphs.

\subparagraph*{Implicit graph conjecture.} A major open problem in graph labeling is
the \emph{implicit graph conjecture} of Kannan, Naor, and Rudich \cite{KNR92},
which asks if every \emph{hereditary} graph family $\cF$ (where for each $G \in
\cF$, every induced subgraph of $G$ is also in $\cF$) containing at most $2^{O(n
\log n)}$ graphs of size $n$ has an $O(\log n)$ adjacency labeling scheme. Not
much progress has been made on this conjecture (see e.g.~\cite{Spin03, Chan16}).
We ask a weakened version of this conjecture:
\begin{question}
  For every hereditary family $\cF = (\cF_n)$ such that $|\cF_n| \leq 2^{O(n
  \log n)}$, is $R^\ob(\cF) = O(\log n)$?
\end{question}
Good candidates for disproving the implicit graph conjecture are geometric
intersection graphs, like disk graphs (intersections of disks in $\bR^2$) or
$k$-dot product graphs (graphs whose vertices are vectors in $\bR^k$, with an
edge if the inner product is at least 1) \cite{Spin03}. These are good
candidates because encoding the coordinates of the vertices as integers will
fail \cite{KM12}.  Randomized communication techniques may be able to make
progress.

\subparagraph*{Modular lattices.} We have shown that there is no constant-cost
universal protocol for distance 2 in modular lattices but, like low-arboricity
graphs, adjacency (and therefore $O(\log n)$-implicit encodings) may still be
possible.

\subparagraph*{Planar graphs.} Our protocol for computing distance 2 on planar graphs
did not generalize in a straightforward fashion to distance 3. Nevertheless, we
expect that there is a method for computing $k$-distance on planar graphs with
complexity dependent only on $k$; given that a Schnyder wood partitions each
edge into 3 groups, we expect that $\widetilde O(3^k)$ should be possible, and
maybe only $\poly(k)$, considering that there is a $O(\sqrt n)$
distance-labeling scheme.

\subparagraph*{Sharing randomness with the referee.} Finally, it seems to be unknown
what the relationship is between SMP protocols where the referee shares the
randomness, and protocols where the referee is deterministic, even though both
models are used extensively in the literature.  Our Proposition \ref{prop:weak
to universal} relates these two models via universal SMP but does not yet give a
general upper bound on the universal cost in terms of the weakly-universal cost.
\begin{question} 
What general upper bounds can we get on universal SMP in terms of
weakly-universal SMP?
\end{question}

\begin{center}
  \textbf{\large Acknowledgments}
\end{center}
Thanks to Eric Blais for comments on the structure of this paper; Amit Levi for
helpful discussions and comments on the presentation; Anna Lubiw for an
introduction to planar graphs and graph labeling; Corwin Sinnamon for comments
on distributive lattices; and Sajin Sasy for observing the possible applications
to privacy. Thanks to the anonymous reviewers for their comments. This work was
supported in part by the David R.~Cheriton and GO-Bell Graduate Scholarships.

\bibliographystyle{alpha}
\bibliography{references}

\appendix

\newpage
\section{Appendix}
\begin{proof}[Proof of Proposition \ref{prop:embedding properties}]\leavevmode
\begin{enumerate}
  \item 
  If $A \embed B$ and $B \embed C$ with $\phi, \psi$ being the respective
  embeddings then for all $u,v \in V(A)$ we have
    $C(\psi \phi(u), \psi \phi(v)) = B(\phi(u), \phi(v)) = A(u,v)$.
\item 
  In the ``only if'' direction, it suffices to choose $G^\equiv$. 
  In the other direction, if $\phi : V(G) \to V(H)$ is an embedding and $\phi(u)
  = \phi(v)$ then for all $w \in V(G), G(u,w) = H(\phi(u),\phi(w)) =
  H(\phi(v),\phi(w)) = G(v,w)$ so $u \equiv v$.
\item 
  Let $g$ map a vertex of $G$ to its equivalence class and let $u,v \in V(G)$.
  If $G(u,v) = 1$ then $G^\equiv(g(u),g(v)) = 1$ by definition. If
  $G^\equiv(g(u),g(v)) = 1$ then there exists $u' \in g(u), v' \in g(v)$ such
  that $G(u',v') = 1$, so $G(u,v) = G(u',v) = G(u',v') = 1$.
\item 
  Let $g$ map vertices in $V(G)$ to their equivalence class and let $g(u),g(v)
  \in V(G^\equiv)$. If $g(u) \equiv g(v)$ then for any $w, G(u,w) = G^\equiv(g(u),g(w)) =
  G^\equiv(g(v),g(w)) = G(v,w)$ so $u \equiv v$ and therefore $g(u) = g(v)$.
  Therefore the map $g(u) \mapsto \{g(u)\}$ is an isomorphism $G^\equiv \to
  (G^\equiv)^\equiv$.
\item 
  If $G \embed H$ then by transitivity, $G^\equiv \embed G \embed H
  \embed H^\equiv$. Likewise, if $G^\equiv \embed H^\equiv$ then $G \embed
  G^\equiv \embed H^\equiv \embed H$. 
\item 
  If $G^\equiv$ is an induced subgraph
  of $H^\equiv$ then clearly there is an embedding. On the other hand, let $g(u),g(v) \in
  V(G^\equiv)$ be the
  equivalence classes of $u,v \in V(G)$ and suppose there
  is an embedding $\phi : G^\equiv \to H^\equiv$. If $\phi(g(u)) = \phi(g(v))$
  then $g(u) \equiv g(v)$ so $g(u) = g(v)$ since $(G^\equiv)^\equiv \simeq
  G^\equiv$. Therefore $G^\equiv$ is an induced subgraph of $H^\equiv$. \qedhere
\end{enumerate}
\end{proof}

\end{document}